\documentclass[12pt]{amsart}

\usepackage{mathpple}       
\usepackage{amssymb,amsmath,amsthm,amstext,graphics,amsfonts,hyperref, color}
\usepackage{cite}

\newcommand{\FF}{{\mathbb{F}}}
\newcommand{\seq}{\subseteq}
\newcommand{\ZZ}{\mathbb{Z}}

\setcounter{MaxMatrixCols}{25}

\topmargin=-0.5cm
\headsep=0.75cm
\textwidth=15.5cm
\textheight=22.5cm
\oddsidemargin=0.5cm
\evensidemargin=0.5cm

\DeclareMathOperator{\wtg}{wt}
\DeclareMathOperator{\wtgr}{Wt}
\DeclareMathOperator{\cwe}{cwe}
\DeclareMathOperator{\swe}{swe}
\DeclareMathOperator{\swc}{swc}
\DeclareMathOperator{\crt}{CRT}
\DeclareMathOperator{\rk}{rank}
\DeclareMathOperator{\frk}{f-rank}
\DeclareMathOperator{\su}{supp}
\DeclareMathOperator{\ho}{Hom}
\DeclareMathOperator{\inc}{inc}
\DeclareMathOperator{\res}{res}

\newtheorem{theorem}{Theorem}[]
\newtheorem{lem}[theorem]{Lemma}
\newtheorem{cor}[theorem]{Corollary}
\newtheorem{prop}[theorem]{Proposition}
\theoremstyle{definition}
\newtheorem{ex}[theorem]{Example}
\newtheorem{definition}{Definition}[]
\newtheorem{remark}{Remark}

\title{Structure  of linear codes over the ring $B_k$}
\author{Irwansyah}
\address{Mathematics Department\\
Universitas Mataram\\
Jl. Majapahit 62, Mataram\\
INDONESIA}

\author{Djoko Suprijanto}
\address{Combinatorial Mathematics Research Group\\
Faculty of Mathematics and Natural Sciences\\
Institut Teknologi Bandung\\
Jl. Ganesha 10, Bandung, 40132\\
INDONESIA}
\date{}

\begin{document}

\maketitle

\begin{abstract}
We study the structure of linear codes over the ring $B_k$ which is defined by
$\FF_{p^r}[v_1,v_2,\ldots,v_k]/\langle v_i^2=v_i,~v_iv_j=v_jv_i \rangle_{i,j=1}^k.$
In order to study the codes, we begin with studying the structure of the ring $B_k$
via a Gray map which also induces a relation between codes over $B_k$ and codes
over $\FF_{p^r}.$
We consider Euclidean and Hermitian self-dual codes, MacWilliams relations, as well as Singleton-type bounds
for these codes.
Further, we characterize cyclic and quasi-cyclic codes using their images under the Gray map,
and give the generators for these type of codes.

\vspace{0.25cm}

{\bf Keywords:} Euclidean self-dual, Hermitian self-dual, MacWilliams relation, Cyclic code, Quasi-cyclic code.
\\
\end{abstract}

\section{Introduction}

"Coding theory arose in the twentieth century as a problem in engineering concerning
the efficient transmission of information. $\cdots$ Specifically, the theory was developed so
that electronic information could be transmitted and stored without error. Electronic
information can generally be thought of as a series of ones and zeros. Therefore,
coding theory, from this perspective, was largely done using the binary field as the
alphabet. However, the alphabet was quickly generalized to finite fields, at least for
mathematicians, since many of the proofs and techniques were identical to the binary
case viewed as the field with two elements. This type of coding theory remains a
vital part of electrical engineering in terms of ensuring effective communication in
telephones, computers, television, and the internet."\footnote{\cite{doug2017}, page 1.}
 \cite{doug2017}

As mentioned above, in the beginning, algebraic coding theory considers finite fields as alphabet codes.
Codes over finite rings were introduced later in early 1970s by Blake \cite{Blake1, Blake2}.
He \cite{Blake1} showed how to construct codes over $\ZZ_m$ from cyclic codes
over $\FF_p,$ where $p$ is a prime factor of $m.$  He \cite{Blake2} then further observed
the structure of codes over $\ZZ_{p^r}.$  Spiegel \cite{Spiegel1, Spiegel2} generalized Blake's results
to codes over $\ZZ_m,$ where $m$ is an arbitrary positive integer.
Study of codes over finite rings attracted great interest in algebraic coding theory
through the work of Hammons, Kumar, Calderbank, Sloane and Sol\'{e} \cite{hammons},
where they show how several well-known families of nonlinear binary codes were
intimately related to linear codes over $\ZZ_4.$  
Since Hammons et.al. \cite{hammons}
many people have been considering codes over various finite rings.  Among the recent works are
\cite{Chatouh17, li17} where they considered codes over the ring
$\FF_2[u_1,u_2,\ldots,u_q] / \langle u_i^2=0,u_iu_j=u_ju_i \rangle_{i,j=1}^q$ for $q \geq 1$ and
$\ZZ_4+u\ZZ_4+v\ZZ_4+uv\ZZ_4,$ respectively.


Recently, codes over infinitely family of rings
which are an extension of the binary field,
$A_k:=\mathbb{F}_2[v_1,v_2,\ldots,v_k]/ \langle v_i^2=v_i,~v_iv_j=v_jv_i\rangle_{i,j=1}^k,$
are considered by Cengellenmis, Dertli, and Dougherty \cite{dougherty-ceng}.
The reason why they \cite{dougherty-ceng} considered linear codes over these rings is,
among other thing, because they have two
Gray maps which relate codes over such rings and binary codes.
These rings have also non-trivial automorphisms which nicely be used to define skew-cyclic codes over the rings
\cite{irw16b} (see also \cite{abualrub}  for a special case) and
produced certain optimal self-dual cyclic codes. Very recently, investigations into
skew-cyclic codes over the ring $A_k$ have been generalized by the authors and others (\cite{irw17}) to the
ring $B_k:=\mathbb{F}_{p^r}[v_1,v_2,\ldots,v_k]/\langle v_i^2=v_i,~v_iv_j=v_jv_i\rangle_{i,j=1}^k.$
However, the investigation into the structure of
linear codes over the ring $B_k$ is yet to be done.

The purpose of this paper is to study some structural aspects of linear codes over the ring $B_k.$
After preliminaries to the ring $B_k$ in Section 2, Section 3 considers linear codes over $B_k.$
MacWilliams relations are also provided here.
Self-duality of the codes over $B_k$ is presented in Section 4.  Singleton-type bounds for the codes
are mentioned in section 5.  The paper is ended
by characterizing cyclic and quasi-cyclic codes using their images under the Gray map.
Throughout this paper, we follow standard definitions of undefined terms as used in many coding theory books
(e.g. \cite{huffman}).


\section{Preliminaries}

In this section we study the ring $B_k,$ its units, ideal structure and its properties.  Some
of the results here have appeared in \cite{irw17b}, but we include here for the reader's convenience.
We begin by defining an infinite family of the ring $B_k$ as a generalization of the ring $A_k.$

Let $v_i$, for $1 \leq i \leq k,$ be an indeterminate and $\mathbb{F}_{q}$ be
a finite field of order $q.$ The ring $B_k$ is the ring of the form
\[
B_k=\mathbb{F}_{p^r}[v_1,v_2,\ldots,v_k]/\langle v_iv_j=v_jv_i,~v_i^2=v_i\rangle_{i,j=1}^k,
\]
for some prime $p$ and non-negative integer $r.$ It is a finite non-chain ring as there exist
more than one maximal ideals.  For example, if $k=1,$ then
$B_1=\FF_{p^r}+v\FF_{p^r},$ where $v^2=v.$  We also define $B_0:=\FF_{p^r}.$
The ring $B_k$ forms a commutative algebra over the prime field $\mathbb{F}_{p^r}.$

Let $[1,m]:=\{1,2,\ldots,m\},$ $\Omega=\{1,2,\dots,k\},$ and $2^\Omega$ be the collection of all subsets of $\Omega.$
Also, let $w_i$ be an element in the set $\{v_i,1-v_i\},$ for $1\leq i\leq k.$
It has been proven (see \cite[Lemma 1]{irw17b}) that the ring $B_k$ can be viewed
as a vector space over $\FF_{p^r}$ of dimension $2^k$ whose basis consists of
elements of the form $\prod_{i \in S} w_i,$ where $S \in 2^\Omega.$
Moreover, the following three properties are discussed in \cite{irw17b}.

\begin{prop}
Let $I=\langle \alpha_1,\dots,\alpha_m\rangle$ be an ideal in $B_k,$ for some $\alpha_1,\dots,\alpha_m\in B_k.$ Then
\[
I=\left \langle\sum_{\substack{A\subseteq [1,m],\\ A\not=\emptyset}}
(-1)^{|A|+1} \left( \prod_{j\in A}\alpha_j \right)^{p^r-1} \right \rangle.
\]
\label{genideal}
\end{prop}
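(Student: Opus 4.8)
The plan is to show that the displayed element, which I will call $\beta$, is exactly the idempotent generating $I$. The whole argument rests on the structure of $B_k$ recalled just above the statement: choosing one factor $w_i\in\{v_i,1-v_i\}$ for each $i$ and forming the $2^k$ products $\prod_i w_i$ yields a complete system of orthogonal primitive idempotents, so $B_k\cong\FF_{p^r}^{2^k}$ as rings. In particular the identity $x^{p^r}=x$ holds for every $x\in B_k$, since it holds coordinatewise in each copy of $\FF_{p^r}$.

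First I would reduce to idempotent generators. For each $j$ set $e_j:=\alpha_j^{p^r-1}$. Using $x^{p^r}=x$ one checks $e_j^2=\alpha_j^{2(p^r-1)}=\alpha_j^{p^r-1}=e_j$ and $\alpha_j e_j=\alpha_j^{p^r}=\alpha_j$, so $e_j$ is the idempotent supported exactly where $\alpha_j$ is nonzero. Since $e_j=\alpha_j\cdot\alpha_j^{p^r-2}\in\langle\alpha_j\rangle$ and $\alpha_j=\alpha_j e_j\in\langle e_j\rangle$, we get $\langle\alpha_j\rangle=\langle e_j\rangle$, and therefore $I=\langle\alpha_1,\dots,\alpha_m\rangle=\langle e_1,\dots,e_m\rangle$. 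This turns the problem into computing the ideal generated by finitely many idempotents.

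Next I would establish the purely formal identity $\langle e_1,\dots,e_m\rangle=\langle e\rangle$, where $e:=1-\prod_{j=1}^m(1-e_j)$. Expanding by inclusion--exclusion gives $e=\sum_{\emptyset\neq A\subseteq[1,m]}(-1)^{|A|+1}\prod_{j\in A}e_j$; every summand with $A\neq\emptyset$ is a multiple of some $e_{j_0}$ with $j_0\in A$, so $e\in\langle e_1,\dots,e_m\rangle$. Conversely, $e\,e_j=e_j-e_j\prod_l(1-e_l)=e_j$, because the product $\prod_l(1-e_l)$ carries the factor $(1-e_j)$ and $e_j(1-e_j)=0$; hence each $e_j=e\,e_j\in\langle e\rangle$, and the two ideals coincide.

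Finally I would identify $e$ with $\beta$. By commutativity of $B_k$ we have $\prod_{j\in A}e_j=\prod_{j\in A}\alpha_j^{p^r-1}=\bigl(\prod_{j\in A}\alpha_j\bigr)^{p^r-1}$, so the inclusion--exclusion expansion of $e=1-\prod_j(1-e_j)$ is term-for-term the displayed sum $\beta$. Combining the three steps gives $I=\langle e_1,\dots,e_m\rangle=\langle e\rangle=\langle\beta\rangle$, as claimed. The idempotent manipulations and the inclusion--exclusion bookkeeping are routine; the one substantive ingredient is the identity $x^{p^r}=x$ on $B_k$ (equivalently, that $\alpha_j^{p^r-1}$ is the support idempotent of $\alpha_j$), so this is the step I would treat with care, deriving it from the primitive-idempotent decomposition rather than asserting it.
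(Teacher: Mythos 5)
Your proposal is correct, but there is no proof in the paper to compare it against: Proposition~\ref{genideal} is stated as one of three facts imported from \cite{irw17b}, with no argument given. Your write-up therefore supplies a genuine, self-contained proof, and it is the natural one. The reduction $\langle\alpha_j\rangle=\langle e_j\rangle$ with $e_j=\alpha_j^{p^r-1}$ is sound once one knows $x^{p^r}=x$ on $B_k$; you correctly isolate this as the one substantive ingredient, and it does follow from the decomposition of $B_k$ into $2^k$ copies of $\FF_{p^r}$ cut out by the orthogonal idempotents $\prod_{i}w_i$ (equivalently, from Theorem~\ref{crt}; invoking it here is only a forward reference, not a circularity, since that theorem is deduced from Proposition~\ref{formmaxideal} and not from this one). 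The identity $\langle e_1,\dots,e_m\rangle=\bigl\langle 1-\prod_{j}(1-e_j)\bigr\rangle$ for commuting idempotents is verified exactly as you say: the inclusion--exclusion expansion gives one containment, and $e\,e_j=e_j$ (the factor $1-e_j$ annihilates $e_j$) gives the other; commutativity then collapses $\prod_{j\in A}\alpha_j^{p^r-1}$ into $\bigl(\prod_{j\in A}\alpha_j\bigr)^{p^r-1}$, so your $e$ coincides term by term with the displayed generator. One small point worth making explicit: when $p^r=2$ the exponent $p^r-2$ equals $0$, so $e_j=\alpha_j$ and all your identities still hold; the degenerate case costs nothing but should be acknowledged.
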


\begin{prop}\label{formmaxideal}
An ideal $I$ in $B_k$ is maximal if and only if $I=\langle w_1,w_2,\dots,w_k\rangle,$
where $w_i \in \{v_i, 1-v_i\}$ for $1 \leq i \leq k.$
Moreover, direct sum of any two of these ideals is equal to $B_k.$
\end{prop}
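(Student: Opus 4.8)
The plan is to reduce both directions to the standard criterion that an ideal of a commutative ring is maximal exactly when the quotient is a field, exploiting that each $v_i$ is idempotent. For each tuple $\mathbf{w}=(w_1,\dots,w_k)$ with $w_i\in\{v_i,1-v_i\}$ I would first introduce the $\FF_{p^r}$-algebra homomorphism $\phi_{\mathbf w}\colon B_k\to\FF_{p^r}$ determined by $\phi_{\mathbf w}(v_i)=0$ when $w_i=v_i$ and $\phi_{\mathbf w}(v_i)=1$ when $w_i=1-v_i$. The only thing to verify is well-definedness, which is immediate: the defining relations map to $c_i^2-c_i=0$ and $c_ic_j-c_jc_i=0$ with $c_i\in\{0,1\}$. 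By construction $\phi_{\mathbf w}(w_i)=0$ for every $i$, and $\phi_{\mathbf w}$ is surjective since it fixes the scalars, so it will serve as the bridge in all three parts.

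For the ``if'' direction I would show $B_k/\langle w_1,\dots,w_k\rangle\cong\FF_{p^r}$. Since each $w_i$ lies in the ideal, the image of $v_i$ in the quotient is the scalar $0$ or $1$; as the $v_i$ generate $B_k$ as an $\FF_{p^r}$-algebra, the quotient is generated by $1$ over $\FF_{p^r}$ and is therefore a homomorphic image of the field $\FF_{p^r}$, hence either $0$ or $\FF_{p^r}$ itself. It is nonzero because $\phi_{\mathbf w}$ annihilates $\langle w_1,\dots,w_k\rangle$ but not $1$, forcing $\langle w_1,\dots,w_k\rangle\subsetneq B_k$. Thus the quotient is a field and the ideal is maximal. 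One could instead extract the single idempotent generator via Proposition \ref{genideal} together with the basis $\{\prod_{i\in S}w_i\}$ from \cite{irw17b}, but the evaluation map keeps the argument self-contained.

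For the ``only if'' direction I would start from an arbitrary maximal ideal $M$ and pass to the field $B_k/M$. Each image $\bar v_i$ satisfies $\bar v_i^2=\bar v_i$, and an idempotent of a field is $0$ or $1$; choosing $w_i=v_i$ when $\bar v_i=0$ and $w_i=1-v_i$ when $\bar v_i=1$ puts every $w_i$ into $M$, so $\langle w_1,\dots,w_k\rangle\seq M$. Since the left-hand ideal is already maximal by the first part and $M$ is proper, the two must coincide, giving $M=\langle w_1,\dots,w_k\rangle$. This avoids any counting of the maximal ideals.

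Finally, for the ``moreover'' statement I would take two distinct ideals of this form; their tuples must differ in some coordinate $i$, say $v_i$ generates the first and $1-v_i$ the second. Then $v_i+(1-v_i)=1$ lies in the sum, so the sum is all of $B_k$; the maps $\phi_{\mathbf w}$ simultaneously show that the $2^k$ ideals are pairwise distinct, since they take different values on the $v_i$. The only point that needs genuine care in the whole argument is the well-definedness of $\phi_{\mathbf w}$ and the observation that a nonzero homomorphic image of the field $\FF_{p^r}$ is the field itself; every remaining step is a direct consequence of the idempotency of the $v_i$.
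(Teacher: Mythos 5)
Your proof is correct and complete. Note first that the paper never proves this proposition in-text: it is one of the facts imported from \cite{irw17b} (``the following three properties are discussed in...''), so there is no in-paper argument to measure yours against. That said, your route is exactly the one the paper implicitly leans on later. The evaluation maps $\phi_{\mathbf w}\colon B_k\to\FF_{p^r}$ you construct are precisely the quotient maps $\Theta_i\colon B_k\to B_k/\mathcal{I}_i$ that the paper introduces when setting up the Chinese Remainder Theorem (Theorem~\ref{crt}), and your comaximality argument --- the tuples differ in some coordinate $l$, so $v_l$ lies in one ideal and $1-v_l$ in the other, hence $1$ lies in the sum --- is word for word the justification the paper gives at the start of its subsection on minimal generating sets. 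All the individual steps check out: $\phi_{\mathbf w}$ is well defined because $0$ and $1$ satisfy the defining relations; the quotient by $\langle w_1,\dots,w_k\rangle$ is spanned by the image of $\FF_{p^r}$, hence is a field once it is nonzero, and properness follows from $\langle w_1,\dots,w_k\rangle\seq\ker\phi_{\mathbf w}$; idempotents in the field $B_k/M$ are $0$ or $1$, which hands you a tuple with $\langle w_1,\dots,w_k\rangle\seq M$, and maximality of the smaller ideal forces equality. One caveat worth flagging: what you (correctly) prove is comaximality, $\mathcal{I}_i+\mathcal{I}_j=B_k$, which is all the CRT needs; for $k\geq 2$ the intersection of two such ideals is nonzero (e.g.\ $v_1(1-v_2)\in\langle v_1,v_2\rangle\cap\langle 1-v_1,1-v_2\rangle$), so the phrase ``direct sum'' in the statement cannot be read in the module-theoretic sense, and your reading of it as a sum is the right one.
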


\begin{lem}
The ring $B_k$ has characteristic $p$ and cardinality $(p^r)^{2^k}.$
\end{lem}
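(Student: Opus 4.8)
The plan is to treat the two assertions separately, both resting on the $\mathbb{F}_{p^r}$-vector-space description of $B_k$ recalled just before Proposition~\ref{genideal}. For a fixed choice of $w_i \in \{v_i, 1-v_i\}$, the set $\{\prod_{i \in S} w_i : S \in 2^\Omega\}$ is an $\mathbb{F}_{p^r}$-basis of $B_k$ (by \cite[Lemma~1]{irw17b}), so that $B_k \cong \mathbb{F}_{p^r}^{2^k}$ as a vector space. Everything will follow from this structural fact rather than from any fresh computation inside the quotient ring.

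For the cardinality, I would simply count the elements of this vector space. Since $\Omega = \{1,\dots,k\}$ has $2^k$ subsets, the basis has exactly $2^k$ elements, hence $\dim_{\mathbb{F}_{p^r}} B_k = 2^k$. Every element of $B_k$ is then a unique $\mathbb{F}_{p^r}$-linear combination of these basis vectors, and each of the $2^k$ coordinates ranges independently over the $p^r$ elements of $\mathbb{F}_{p^r}$. Multiplying the choices gives $|B_k| = (p^r)^{2^k}$.

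For the characteristic, I would use that $\mathbb{F}_{p^r}$ sits inside $B_k$ as the scalars (the span of the basis vector indexed by $S = \emptyset$, i.e.\ the constants) and that the two rings share the multiplicative identity $1$. Because $\dim_{\mathbb{F}_{p^r}} B_k = 2^k \geq 1$, the ring $B_k$ is nonzero, so $1 \neq 0$; the additive order of $1$ in $B_k$ therefore equals its order in $\mathbb{F}_{p^r}$, namely $p$, since $\mathbb{F}_{p^r}$ is a field of characteristic $p$. Hence $\mathrm{char}(B_k) = p$. The argument carries no real obstacle: the only point requiring care is to observe that both claims are genuine consequences of the basis statement, so that once the dimension $2^k$ is granted the cardinality is immediate and the characteristic reduces to the elementary fact that a nonzero algebra over a field of characteristic $p$ sharing its identity again has characteristic $p$.
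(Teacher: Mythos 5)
Your proposal is correct and matches the paper's (implicit) approach: the paper states this lemma without proof, as one of the properties imported from \cite{irw17b}, immediately after recalling that $B_k$ is an $\mathbb{F}_{p^r}$-vector space of dimension $2^k$ with basis $\{\prod_{i\in S}w_i : S\in 2^\Omega\}$, and your argument is exactly the natural expansion of that fact. Both steps are sound: counting coordinates gives $|B_k|=(p^r)^{2^k}$, and since $\mathbb{F}_{p^r}$ sits inside $B_k$ as a unital subring and $B_k\neq 0$, the additive order of $1$ is $p$, so $\mathrm{char}(B_k)=p$.
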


We also note that every element $a$ in $B_k$ can be written as
\[
a=\sum_{S\in 2^\Omega}\alpha_Sv_S
\]
for some $\alpha_S\in\mathbb{F}_{p^r},$ where $v_S=\prod_{i\in S}v_i,$ and $v_{\emptyset}:=1.$

Following \cite{irw17b}, define a \emph{Gray map} $\varphi$ as follows
\[
\begin{array}{llcll}
   \varphi & : & B_k & \longrightarrow & \mathbb{F}_{p^r}^{2^k}, \\
           &   & a=\sum_{i=1}^{2^k}\alpha_{S_i} v_{S_i} & \longmapsto &
           (\sum_{S\subseteq S_1}\alpha_S,\sum_{S\subseteq S_2}\alpha_S,\ldots,
		    \sum_{S\subseteq S_{2^k}}\alpha_{S}).
\end{array}
\]
The map $\varphi$ is bijective. Furthermore, this map can be extended into $n$ tuples of $B_k$
naturally as follows
\[
\begin{array}{llcll}
   \overline{\varphi} & : & B_k^n & \longrightarrow & \mathbb{F}_{p^r}^{n2^k}, \\
           &   & (a_1,a_2,\ldots,a_n) & \longmapsto & (\varphi(a_1),\varphi(a_2),\ldots,\varphi(a_n)).
\end{array}
\]
Since $\varphi$ is a bijection map, so is $\overline{\varphi}.$

Using the characterization of maximal ideals as given in Proposition \ref{formmaxideal},
we obtain the following theorem.

\begin{theorem}
The ring $B_k$ is isomorphic via Chinese Remainder Theorem to $\mathbb{F}_{p^r}^{2^k}.$
\label{crt}
\end{theorem}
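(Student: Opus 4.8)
The plan is to apply the classical Chinese Remainder Theorem for commutative rings to the family of maximal ideals singled out in Proposition \ref{formmaxideal}. First I would enumerate those maximal ideals: by Proposition \ref{formmaxideal} every maximal ideal of $B_k$ has the shape $M=\langle w_1,w_2,\dots,w_k\rangle$ with $w_i\in\{v_i,1-v_i\}$, and since the choice of each $w_i$ is independent across the $k$ indices, there are exactly $2^k$ distinct such ideals (distinctness being clear from their generators), which I label $M_1,\dots,M_{2^k}$.

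Next I would record that these ideals are pairwise comaximal, i.e.\ $M_i+M_j=B_k$ whenever $i\neq j$; this is precisely the second assertion of Proposition \ref{formmaxideal}. Comaximality lets me invoke the CRT, which furnishes a surjective ring homomorphism
\[
\Phi\colon B_k \longrightarrow \prod_{i=1}^{2^k} B_k/M_i, \qquad a \longmapsto (a+M_1,\ldots,a+M_{2^k}),
\]
whose kernel is the intersection $\bigcap_{i=1}^{2^k} M_i$ (which equals the product of the $M_i$ by comaximality).

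Then I would identify each factor of the codomain. Since $M_i$ is maximal, $B_k/M_i$ is a field; more concretely, reducing modulo $M_i$ forces each $v_i$ to $0$ (when $w_i=v_i$) or to $1$ (when $w_i=1-v_i$), so every basis element $v_S=\prod_{i\in S}v_i$ maps to a scalar in $\mathbb{F}_{p^r}$ and the quotient collapses onto $\mathbb{F}_{p^r}$. Hence $B_k/M_i\cong\mathbb{F}_{p^r}$ for each $i$, and therefore $\prod_{i=1}^{2^k}B_k/M_i\cong\mathbb{F}_{p^r}^{2^k}$, realizing $\Phi$ as a surjection $B_k\to\mathbb{F}_{p^r}^{2^k}$.

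Finally, to promote this surjection to an isomorphism I would appeal to the cardinality count from the Lemma above: $|B_k|=(p^r)^{2^k}=|\mathbb{F}_{p^r}^{2^k}|$. A surjection between finite sets of equal cardinality is automatically a bijection, so $\Phi$ is an isomorphism (equivalently $\bigcap_i M_i=0$), which is exactly the claimed CRT decomposition. The main obstacle here is the clean identification of each residue field as $\mathbb{F}_{p^r}$ itself rather than some proper extension, since that is what pins down the target as $\mathbb{F}_{p^r}^{2^k}$; the comaximality and hence surjectivity come essentially for free from Proposition \ref{formmaxideal}, and the cardinality argument then does the rest without any need to compute the radical directly.
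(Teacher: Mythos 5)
Your proposal is correct and takes essentially the same approach as the paper: the paper's proof simply defers to the argument of \cite[Theorem 2.5]{dougherty-ceng} together with Proposition~\ref{formmaxideal}, which is precisely this application of the Chinese Remainder Theorem to the $2^k$ pairwise comaximal maximal ideals $\langle w_1,\dots,w_k\rangle$, with each residue field identified as $\mathbb{F}_{p^r}$. Your write-up merely makes explicit the steps (distinctness, comaximality, residue-field identification, and the cardinality count giving injectivity) that the paper leaves to the cited reference.
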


\begin{proof}
Similar to the proof of \cite[Theorem 2.5]{dougherty-ceng} using Proposition~\ref{formmaxideal}.
\end{proof}

We can also define another Gray map as discussed in \cite{irw17b}.
Every element $\gamma$ in $B_j,$ where $j\geq 1,$ can be written as $\gamma=\alpha + \beta v_j,$
for some $\alpha, \beta \in B_{j-1}$ (considering $B_0:=\FF_{p^r}$). Then, for $j \geq 2,$
let $\phi_j:B_j \rightarrow B_{j-1}^2$ be a map defined by
\[
\phi_j(\alpha + \beta v_j) = (\alpha, \alpha + \beta).
\]

Now, we define another \emph{Gray map} $\Phi_j : B_j \rightarrow \mathbb{F}_{p^r}^{2^j},$
by $\Phi_1(\gamma) = \phi_1(\gamma),$ $\Phi_2(\gamma) = \phi_1 \circ \phi_2(\gamma)$
and
\[
\Phi_j (\gamma) = \phi_1 \circ \phi_2 \circ  \dots \circ \phi_{j-2} \circ \phi_{j-1} \circ \phi_j (\gamma),
\text{ for } 3 \leq j \leq k.
\]
The map $\Phi_k$ is a bijective map as shown in \cite{irw17b}.
Note that, this map is also a generalization of the Gray map in \cite{dougherty-ceng}.
Again, this map $\Phi_j$ can be extended to $n$ tuples of $B_k$
naturally becomes $\overline{\Phi}_j$ as before.

\section{Linear Codes over $B_k$}

A code $C \seq B_k^n$ is called \emph{linear} if $C$ is a submodule over $B_k.$
In next section
we consider the important class of linear codes, namely  self-dual codes over $B_k.$
For that purpose we define the inner product in $B_k^n.$

\subsection{Inner Product}

We consider two inner products in $B_k^n.$ First, we have a \emph{Euclidean inner-product} defined as
\[
[\mathbf{u},\mathbf{u}']=\sum_{i=1}^{n}u_iu_i',
\]
for any $\mathbf{u}=(u_1,\dots,u_{n})$ and $\mathbf{u}'=(u_1',\dots,u_{n}')$ in $B_k^n.$
The second inner-product, called \emph{Hermitian inner-product,} is defined as
\[
[\mathbf{u},\mathbf{u}']_H=\sum_{i=1}^nu_i\overline{u_i'},
\]
where $\overline{u_i'}=1-u_i'.$

Now, define $C^\bot=\{\mathbf{u} \in B_k^n:~[\mathbf{u},\mathbf{u}']=0,\forall\mathbf{u}'\in C\}$
and $C^H=\{\mathbf{u} \in B_k^n:~[\mathbf{u},\mathbf{u}']_H=0,\forall\mathbf{u}'\in C\}$
to be \emph{Euclidean dual} and \emph{Hermitian dual} of $C,$ respectively.
We note that, since $B_k$ is a principal ideal ring, then it is a Frobenius ring.
Therefore, we have $|C||C^\bot|=|C||C^H|=|B_k|^n=p^{r2^kn}$ (see \cite[Corollary 3.2]{doug2017}).

Define the function  $\gamma: B_k \longrightarrow \{-1,0\}$ as follows
\begin{equation}
\gamma(\alpha_Sv_S)=
\begin{cases}
-1, & \text{if}\;\alpha_S\not=0,\\
0, & \text{otherwise}.
\end{cases}
\label{charcoef}
\end{equation}

Then, we have the following properties.

\begin{prop}
If $I=\left\langle \sum_{S\in 2^\Omega}\alpha_Sv_S\right\rangle$ is an ideal in $B_k,$ then
\[
\begin{aligned}
I^\bot &=\left\langle 1+\left(\gamma(\alpha_{S_1}v_{S_1})v_{S_1}
+\gamma(\alpha_{S_2}v_{S_2})v_{S_2}+\prod_{i=1}^2\gamma(\alpha_{S_i}v_{S_i})v_{S_1\cup S_2}
\right. \right. \\
& \quad +\left. \left. \cdots+\prod_{i=1}^{2^k}\gamma(\alpha_{S_i}v_{S_i})v_{S_1\cup\cdots\cup S_{2^k}}\right)\right\rangle
\end{aligned}
\]
and
\[
\begin{aligned}
I^H &=\left\langle \overline{1+\left(\gamma(\alpha_{S_1}v_{S_1})v_{S_1}
+\gamma(\alpha_{S_2}v_{S_2})v_{S_2}+\prod_{i=1}^2\gamma(\alpha_{S_i}v_{S_i})v_{S_1\cup S_2}
\right.} \right. \\
&  \quad \left. \overline{\left.
+\dots+\prod_{i=1}^{2^k}\gamma(\alpha_{S_i}v_{S_i})v_{S_1\cup\cdots\cup S_{2^k}}\right)}\right\rangle.
\end{aligned}
\]
\label{idealdual}
\end{prop}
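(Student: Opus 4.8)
The plan is to regard $I$ as a code of length one, so that its Euclidean dual is simply the annihilator $I^\bot=\{u\in B_k:~ua=0\}$ of the generator $a=\sum_{S\in 2^\Omega}\alpha_Sv_S$, and then to exhibit the displayed element as a generator of that annihilator by exploiting the idempotent structure of $B_k$. First I would record the two facts that drive the computation: the monomials multiply by taking unions, $v_Sv_T=v_{S\cup T}$, which is immediate from $v_i^2=v_i$; and, by Proposition~\ref{genideal} with $m=1$, $I=\langle a\rangle=\langle a^{p^r-1}\rangle$. Since every $x\in B_k$ satisfies $x^{p^r}=x$ (because $\alpha^{p^r}=\alpha$ on $\FF_{p^r}$ and $v_S^{p^r}=v_S$), the element $e:=a^{p^r-1}$ is idempotent; hence $I=\langle e\rangle$, and the annihilator of an idempotent ideal is generated by the complementary idempotent, giving $I^\bot=\langle 1-e\rangle$.

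The heart of the matter is then to identify $1-e$ with the stated generator $g$. Using $v_Sv_T=v_{S\cup T}$ and $\gamma(\alpha_{S_i}v_{S_i})\in\{0,-1\}$, the inclusion--exclusion sum factors into a product,
\[
g=\prod_{i=1}^{2^k}\bigl(1+\gamma(\alpha_{S_i}v_{S_i})\,v_{S_i}\bigr)=\prod_{i:\,\alpha_{S_i}\neq 0}\bigl(1-v_{S_i}\bigr),
\]
since expanding and grouping the chosen factors by their index set $A$ reproduces exactly the term $\prod_{i\in A}\gamma(\alpha_{S_i}v_{S_i})\,v_{\bigcup_{i\in A}S_i}$, with $A=\emptyset$ contributing the leading $1$. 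In particular $g$ is visibly idempotent. To finish I would compare $g$ and $1-e$ coordinatewise under the Chinese Remainder isomorphism of Theorem~\ref{crt}: in the coordinate indexed by $T\subseteq\Omega$ one has $v_S\mapsto 1$ precisely when $S\subseteq T$, so $g$ equals $1$ in coordinate $T$ iff no $S$ with $\alpha_S\neq 0$ satisfies $S\subseteq T$, whereas $1-e$ equals $1$ in coordinate $T$ iff $\sum_{S\subseteq T}\alpha_S=0$.

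Establishing the equivalence of these two conditions is the step I expect to be the main obstacle. One direction is trivial---if every $\alpha_S$ with $S\subseteq T$ vanishes then the inner sum is $0$---but the converse must rule out cancellation among the coefficients $\alpha_S$. This is where I would be most careful, and where I expect one uses that the generator is presented in its reduced (idempotent) form; alternatively one can bypass the coordinatewise comparison and argue via the size count $|\langle g\rangle|=|B_k|/|I|$, available from the Frobenius property quoted after Theorem~\ref{crt}, together with the direct verification that $ag=0$. Once $g=1-e$ is secured, $I^\bot=\langle g\rangle$ is exactly the first display of Proposition~\ref{idealdual}.

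Finally, the Hermitian statement should follow formally by transporting the Euclidean one through the conjugation $\overline{\,\cdot\,}$. Because $[\mathbf u,\mathbf u']_H=[\mathbf u,\overline{\mathbf u'}]$, one gets $I^H=\overline{I}^{\,\bot}$, and since the bar fixes $1$ and sends idempotents to idempotents it commutes with the construction above; applying $\overline{\,\cdot\,}$ to the Euclidean generator $g$ then produces precisely the generator displayed for $I^H$. Thus the whole proposition reduces to the single identification $I^\bot=\langle 1-a^{p^r-1}\rangle=\langle g\rangle$, the verification $g=1-e$ being the crux.
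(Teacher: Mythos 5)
Your ring-theoretic reduction is correct as far as it goes, and it is in fact cleaner than the paper's own argument: viewing $I$ as a length-one code, $I^\bot$ is the annihilator of $a=\sum_{S\in 2^\Omega}\alpha_Sv_S$; Proposition~\ref{genideal} with $m=1$ gives $I=\langle e\rangle$ with $e=a^{p^r-1}$ idempotent (since $x^{p^r}=x$ holds in $B_k$ by Theorem~\ref{crt}), whence $I^\bot=\langle 1-e\rangle$; and your factorization of the displayed generator, $g=\prod_{i:\,\alpha_{S_i}\neq 0}\left(1-v_{S_i}\right)$, is exactly right and immediately yields $ag=0$, i.e. $\langle g\rangle\subseteq I^\bot$. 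That inclusion is all the paper itself proves: its induction is precisely your factorization unwound one factor at a time, after which it defers ``the rest'' to \cite{dougherty-ceng}. So up to the point you call the crux, you have reproduced (and streamlined) the paper's content.

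But the crux, the identification $g=1-e$, is a genuine gap, and it cannot be closed: the cancellation you worry about really happens. Take $k=1$ and $a=1-v_1$, so $\alpha_{\emptyset}=1$ and $\alpha_{\{1\}}=-1$. Since $v_{\emptyset}=1$, your factorization gives $g=(1-v_{\emptyset})(1-v_{\{1\}})=0$, while $e=(1-v_1)^{p^r-1}=1-v_1$ and $I^\bot=\langle 1-e\rangle=\langle v_1\rangle\neq 0$; in CRT coordinates this is exactly your converse failing at $T=\{1\}$, where $\sum_{S\subseteq T}\alpha_S=0$ although some $\alpha_S\neq 0$. The same example defeats your fallback: $|\langle g\rangle|=1$ whereas $|B_k|/|I|=p^r$, so the size count together with $ag=0$ cannot rescue the identification (and the defect is not an artifact of the chosen generator: every generator of this ideal has $\alpha_{\emptyset}\neq 0$ and $\alpha_{\{1\}}\neq 0$). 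What your argument actually proves is the corrected statement $I^\bot=\langle 1-a^{p^r-1}\rangle$, which agrees with the displayed $g$ only under a no-cancellation hypothesis on the coefficients of $a$ (for every $T$, $\sum_{S\subseteq T}\alpha_S\neq 0$ unless all $\alpha_S$ with $S\subseteq T$ vanish). So the obstacle you isolated is a defect of Proposition~\ref{idealdual} as stated, not of your method; the paper never confronts it because its proof stops at $\langle g\rangle\subseteq I^\bot$. Your Hermitian reduction via $I^H=(\overline{I})^\bot=\overline{I^\bot}$ is fine, provided the bar is read as the ring automorphism determined by $\overline{v_i}=1-v_i$ rather than literally as $x\mapsto 1-x$, but it inherits the same gap.
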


\begin{proof}
We will show, by mathematical induction, that
\[
\left(\sum_{S\in 2^\Omega}\alpha_Sv_S\right)\left(1-v_{S_1}-v_{S_2}+v_{S_1\cup S_2}+\dots+v_{\Omega}\right)=0.
\]
First, we note that
\[
\begin{aligned}
\left(\alpha_{S_1}v_{S_1}+\alpha_{S_2}v_{S_2}\right)\left(1-v_{S_1}-v_{S_2}+v_{S_1\cup S_2}\right)
& =  \alpha_{S_1}v_{S_1}+\alpha_{S_2}v_{S_2} \\
 &  -\alpha_{S_1}v_{S_1}-\alpha_{S_1}v_{S_1\cup S_2}+\alpha_{S_1}v_{S_1\cup S_2}\\
 &  -\alpha_{S_2}v_{S_1\cup S_2}-\alpha_{S_2}v_{S_2}+\alpha_{S_2}v_{S_1\cup S_2}\\
 & =  0.
\end{aligned}
\]
Now, assume that,
\[
\begin{array}{l}
\left(\alpha_{S_1}v_{S_1}+\alpha_{S_2}v_{S_2}+\dots+\alpha_{S_{n-1}}v_{S_{n-1}}\right)\times  \\
\left(1-v_{S_1}-v_{S_2}-\dots-v_{S_{n-1}}+v_{S_1\cup S_2}+\dots+v_{S_{n-1}\cup S_{n-2}}
+\dots+(-1)^{n-1}v_{S_1\cup\cdots\cup S_{n-1}}\right)=0.
\end{array}
\]
Then, we will show
\[
\begin{array}{l}
\left(\alpha_{S_1}v_{S_1}+\alpha_{S_2}v_{S_2}+\dots+\alpha_{S_{n}}v_{S_{n}}\right)\times  \\
\left(1-v_{S_1}-v_{S_2}-\dots-v_{S_{n-1}}+v_{S_1\cup S_2}+\dots+v_{S_{n-1}\cup S_{n-2}}+\dots \right.\\
\left. +(-1)^{n-1}v_{S_1\cup\cdots\cup S_{n-1}}\right)(1-v_{S_n})=0.
\end{array}
\]
Let
\[
\alpha = \alpha_{S_1}v_{S_1}+\alpha_{S_2}v_{S_2}+\dots+\alpha_{S_{n-1}}v_{S_{n-1}},
\]
and
\[
\beta = 1-v_{S_1}-v_{S_2}-\dots-v_{S_{n-1}}+v_{S_1\cup S_2}+\dots+v_{S_{n-1}\cup S_{n-2}}
+\cdots+(-1)^{n-1}v_{S_1\cup\cdots\cup S_{n-1}}.
\]
The previous multiplication can be simplified as
\[
\begin{aligned}
(\alpha+\alpha_{S_n}v_{S_n})\beta (1-v_{S_n})& =  \beta \alpha_{S_n}v_{S_n}(1-v_{S_n})\\
 & =  0.\\
\end{aligned}
\]
Note that, the second equality comes from the assumption. Moreover,
\[
\beta (1-v\alpha_{S_n})=1-v_{S_1}-v_{S_2}-\dots-v_{S_{n}}+v_{S_1\cup S_2}
+\cdots+v_{S_{n-1}\cup S_{n}}+\dots+(-1)^{n}v_{S_1\cup\cdots\cup S_{n}}.
\]
Therefore, by the above arguments, we have
\[
\left(\sum_{S\in 2^\Omega}\alpha_Sv_S\right)\left(1-v_{S_1}-v_{S_2}+v_{S_1\cup S_2}+\dots+v_{\Omega}\right)=0.
\]
The rest of the proof is similar to the proof of \cite[Theorem 4.4]{dougherty-ceng}.
\end{proof}

%

\subsection{Minimal Generating Sets}

Based on the form of maximal ideals in the ring $B_k$ as described in Proposition~\ref{formmaxideal},
it is easy to see that, there are $2^k$ maximal ideals in $B_k.$ Let $\mathcal{I}_i$
be the maximal ideal as in Proposition~\ref{formmaxideal}, where $1\leq i\leq 2^k.$
Also, the direct sum of any two distinct $\mathcal{I}_i$ and $\mathcal{I}_{j}$ will produce $B_k,$
since if for some $l,$ $v_l\in \mathcal{I}_i,$ then $1-v_l$ must be in $\mathcal{I}_{j},$
for $j \neq i.$

Now, consider the map
\[
\Theta_i : B_k \longrightarrow B_k/\mathcal{I}_i,~\beta \longmapsto \beta \pmod{\mathcal{I}_i}, ~\text{ for }1 \leq i \leq 2^k.
\]
It is well-known that since $\mathcal{I}_i$ is a maximal ideal, then $B_k/\mathcal{I}_i$ is a field.

We define the map
\[
\Theta : B_k \longrightarrow B_k/\mathcal{I}_1\times B_k/\mathcal{I}_2\times \cdots \times B_k/\mathcal{I}_{2^k}~
\]
by $\Theta(\beta)=\left(\Theta_1(\beta),\dots,\Theta_{2^k}(\beta)\right).$
Then, the map $\Theta^{-1}$ is an isomorphism by Chinese Remainder Theorem.

Following \cite{dougherty-ceng}, we define the following notions.

\begin{definition}
Let $\mathbf{u}_1,\mathbf{u}_2,\dots,\mathbf{u}_k$ be vectors in $B_k^n.$
Then $\mathbf{u}_1,\mathbf{u}_2,\dots,\mathbf{u}_k$ are \emph{independent}
if $\sum_{i=1}^k\alpha_i\mathbf{u}_i=\mathbf{0}$ implies that $\alpha_i\mathbf{u}_i=\mathbf{0}$ for all $i \in[1,k].$
\label{defind}
\end{definition}

\begin{definition}
The vectors $\mathbf{u}_1,\mathbf{u}_2,\dots,\mathbf{u}_k$ in $B_k^n$ are \emph{modular independent} if vectors \[
\Theta_i(\mathbf{u}_1),\Theta_i(\mathbf{u}_2),\dots,\Theta_i(\mathbf{u}_k)
\]
are linearly independent for some $i.$
\label{defmodind}
\end{definition}

Following the notion in \cite{doug-liu}, the generating set that is both independent
and modular independent is called {\it minimal generating set}.

It is not always possible to put the minimal generating set of a code into a matrix in standard form. See the following example.
\begin{ex}
Take $k=1, p=2,$ and $r=2.$ Let $\alpha$ be the root of the polynomial
$f(x)=x^2+x+1\in \mathbb{F}_2[x],$ and $\mathbb{F}_{2^2}=\mathbb{F}_2[\alpha].$
Now, let $C$ be a code generated by the vector $(1+v,v,v)$ over $B_1=\mathbb{F}_4+v\mathbb{F}_4,$
where $v^2=v.$ As we can see, $C=\{(0,0),(1+v,v,v),\alpha(1+v,v,v),(1+\alpha)(1+v,v,v)\}.$
The vector $(1+v,v,v)$ is both independent and modular independent, since the condition
$\beta(1+v,v,v)=(0,0,0)$ if and only if $\beta(1+v,v,v)=(0,0,0),$
which satisfies the one in Definition~\ref{defind},
and $\Theta_1(1+v,v,v)=(1,0,0)$ is linearly independent over $\mathbb{F}_4.$

Recall that the generator matrix in standard form of a linear code over $B_1$ is
\[
\begin{pmatrix}
I_{k_1} & A_1 & A_2 & A_3 \\
0 & (1+v)I_{k_2} & A_4 & A_5 \\
0 & 0 & vI_{k_3} & A_6 \\
\end{pmatrix}.
\]

To make a vector $(1+v,v,v)$ fit to the above form, we need two other vectors,
{i.e.} $(1+v)(1,0,0)$ and $v(0,1,1).$ The previous vectors are not modular independent,
since their image under $\Theta_1$ are $(1,0,0),~(0,0,0),$ and their image under $\Theta_2$ are $(0,0,0),~(0,1,1).$
\end{ex}

Let $\mathbf{u}=(\beta_1,\beta_2,\dots,\beta_n)$ be a nonzero vector in $B_k^n,$
and $\langle \beta_1,\dots,\beta_n\rangle$ be an ideal generated by $\beta_1,\dots,\beta_n.$
Also, let $I(\mathbf{u})=|\langle \beta_1,\dots,\beta_n\rangle|.$
Then we have the following result as a generalization of \cite[Theorem 4.3]{dougherty-ceng}.

\begin{prop}
If $C$ is a code with minimal generating set $\mathbf{u}_1,\dots,\mathbf{u}_s,$ then $|C|=\prod_{i=1}^sI(\mathbf{u}_i).$
\label{codeid}
\end{prop}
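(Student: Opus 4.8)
The plan is to exhibit $C$ as an internal direct product of the cyclic submodules generated by the $\mathbf{u}_i$, and then to evaluate the size of each such submodule through the Chinese Remainder decomposition of Theorem~\ref{crt}. Write $\langle\mathbf{u}_i\rangle=B_k\mathbf{u}_i$ for the $B_k$-submodule of $B_k^n$ generated by $\mathbf{u}_i$, and consider the map
\[
\psi:\ \langle\mathbf{u}_1\rangle\times\cdots\times\langle\mathbf{u}_s\rangle\longrightarrow C,\qquad (\mathbf{x}_1,\dots,\mathbf{x}_s)\longmapsto \mathbf{x}_1+\cdots+\mathbf{x}_s.
\]
Since $\mathbf{u}_1,\dots,\mathbf{u}_s$ generate $C$, every codeword has the shape $\sum_i\alpha_i\mathbf{u}_i$ with $\alpha_i\mathbf{u}_i\in\langle\mathbf{u}_i\rangle$, so $\psi$ is surjective.

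First I would show that $\psi$ is injective, and this is precisely where the independence of the minimal generating set enters. If $\psi(\mathbf{x}_1,\dots,\mathbf{x}_s)=\psi(\mathbf{x}_1',\dots,\mathbf{x}_s')$, then writing each difference as $\mathbf{x}_i-\mathbf{x}_i'=\gamma_i\mathbf{u}_i$ (possible since both lie in $B_k\mathbf{u}_i$) yields $\sum_{i=1}^s\gamma_i\mathbf{u}_i=\mathbf{0}$. Definition~\ref{defind} then forces $\gamma_i\mathbf{u}_i=\mathbf{0}$, i.e.\ $\mathbf{x}_i=\mathbf{x}_i'$ for every $i$. Hence $\psi$ is a bijection and
\[
|C|=\prod_{i=1}^s\bigl|\langle\mathbf{u}_i\rangle\bigr|.
\]

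It remains to prove that $\bigl|\langle\mathbf{u}_i\rangle\bigr|=I(\mathbf{u}_i)$ for each $i$, and I expect this to be the main technical point. Fix $i$ and put $\mathbf{u}_i=(\beta_{i1},\dots,\beta_{in})$. The surjection $B_k\to\langle\mathbf{u}_i\rangle$, $\alpha\mapsto\alpha\mathbf{u}_i$, has kernel $\operatorname{Ann}(\mathbf{u}_i)=\bigcap_{j}\operatorname{Ann}(\beta_{ij})$, which is exactly the annihilator of the ideal $J_i=\langle\beta_{i1},\dots,\beta_{in}\rangle$, since $J_i$ is generated by the $\beta_{ij}$. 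I would then pass to the Chinese Remainder isomorphism $B_k\cong\mathbb{F}_{p^r}^{2^k}$ of Theorem~\ref{crt}: under it $J_i$ corresponds to a product $\prod_{l}J_{i,l}$ in which each factor $J_{i,l}$ is either $\{0\}$ or the whole field $\mathbb{F}_{p^r}$ (a field has no proper nonzero ideals), and $\operatorname{Ann}(J_i)$ corresponds to the complementary product. Consequently $|J_i|\cdot|\operatorname{Ann}(J_i)|=|B_k|$, so that
\[
\bigl|\langle\mathbf{u}_i\rangle\bigr|=\frac{|B_k|}{|\operatorname{Ann}(\mathbf{u}_i)|}=\frac{|B_k|}{|\operatorname{Ann}(J_i)|}=|J_i|=I(\mathbf{u}_i).
\]
Combining this with the previous display gives $|C|=\prod_{i=1}^sI(\mathbf{u}_i)$, as desired.

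To make the last step entirely explicit one may let $m_i$ be the number of coordinates $l\in\{1,\dots,2^k\}$ for which the $l$-th Chinese Remainder component of $\mathbf{u}_i$ is nonzero; then both $\bigl|\langle\mathbf{u}_i\rangle\bigr|$ and $I(\mathbf{u}_i)$ equal $(p^r)^{m_i}$, settling the identity coordinatewise. I remark that only the independence property (Definition~\ref{defind}) is actually invoked in the count; modular independence (Definition~\ref{defmodind}) plays the complementary role of guaranteeing that the generating set is genuinely minimal, rather than entering the cardinality computation itself.
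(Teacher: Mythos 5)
Your proposal is correct and takes essentially the approach the paper relies on: the paper's own proof is a one-line deferral to \cite[Theorem 4.3]{dougherty-ceng}, whose standard argument is precisely your decomposition of $C$ as a direct sum of the cyclic submodules $B_k\mathbf{u}_i$ (using independence for injectivity of $\psi$) followed by the CRT count showing $\bigl|B_k\mathbf{u}_i\bigr|=I(\mathbf{u}_i)$. Your closing remark is also accurate: only independence (Definition~\ref{defind}) enters the cardinality computation, while modular independence serves to guarantee minimality of the generating set rather than the count itself.
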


\begin{proof}
Similar to the proof of \cite[Theorem 4.3]{dougherty-ceng}.
\end{proof}

\subsection{MacWilliams Relations} \label{Macwill}

MacWilliams relation provides a connection between  weight distribution of a linear code and its dual.
In this subsection, we study some classes of MacWilliams relation of linear codes over $B_k.$

Recall that, the field $\mathbb{F}_{p^r}$ can be viewed as an $r$-dimensional vector space over $\mathbb{F}_p.$
Let $\mathcal{B}=\{b_0,b_1,\dots,b_{r-1}\}$ be a basis of $\mathbb{F}_{p^r}$ over $\mathbb{F}_{p}.$
For any element $x=\alpha_0b_0+\dots+\alpha_{r-1}b_{r-1}$ define
\[
\wtg(x)=\sum_{i=0}^{r-1}\alpha_i.
\]
Also, for any $\mathbf{x}=(x_1,\dots,x_n)\in\mathbb{F}_{p^r}^n,$ we define the weight of $\mathbf{x}$ as follows
\[
\wtgr(\mathbf{x})=\sum_{i=1}^n\wtg(x_i).
\]
Note that, if $p=2$ and $\mathcal{B}$ is a trace-orthogonal basis, the above weight is the Lee weight in \cite{betsumiya}.
Now, using the Gray map $\varphi,$ we define the weight of any $\alpha\in B_k$ as
\[
\wtgr(\alpha)=\wtgr(\varphi(\alpha)),
\]
which will correspond to Lee weight in \cite{dougherty-ceng} when $p=2$ and $r=1.$

As we have already seen, the ring $B_k$ is isomorphic to $\mathbb{F}_{p^r}^{2^k}.$
Therefore, the generating character for $\widehat{B_k}$ is the product of generating character
for the field $\mathbb{F}_{p^r}.$ Now, we define the character $\chi$ for $\mathbb{F}_{p^r}$ such that
\[
\chi(x)=\xi^{\wtg(x)},
\]
for any $x\in \mathbb{F}_{p^r},$ where $\xi=\exp(2\pi i/p).$ We can see that, $\chi$ is a generating character.
Therefore, the generating character $\chi$ for $B_k$ is
\[
\chi_1(\beta)=\xi^{\wtgr(\varphi(\beta))},
\]
for any $\beta\in B_k,$ by Chinese Remainder Theorem.

Now, define the matrix $T$ indexed by $\alpha,\beta\in B_k,$ as follows
\[
T_{\alpha,\beta}=\chi_\alpha(\beta)=\chi(\alpha\beta),
\]
and the matrix $T_H$ as follows
\[
\left(T_H\right)_{\alpha,\beta}=\chi_\alpha(\overline{\beta})=\chi(\alpha\overline{\beta}).
\]

Define the complete weight enumerator for a code $C$ as
\[
\cwe_C(\mathbf{X})=\sum_{\mathbf{c}\in C}\prod_{b\in B_k}X_b^{n_b(\mathbf{c})},
\]
where $n_b(\mathbf{c})$ is the number of occurrences of the element $b$ in $\mathbf{c}.$
Then, we have the following MacWilliams relations for complete weight enumerator.

\begin{theorem}{(MacWilliams Relations for CWE)}
Let $C$ be a linear code over $B_k,$ then
\begin{equation}
\cwe_{C^\bot}(\mathbf{X})=\frac{1}{|C|}\cwe_C(T\cdot \mathbf{X})
\end{equation}
and
\begin{equation}
\cwe_{C^H}(\mathbf{X})=\frac{1}{|C|}\cwe_C(T_H\cdot \mathbf{X})
\end{equation}
\label{macrel}
\end{theorem}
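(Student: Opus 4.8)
The plan is to run the standard character-theoretic proof of MacWilliams-type identities, specialized to the Frobenius ring $B_k$ and the generating character $\chi$ constructed above. The conceptual heart, which I would isolate as a lemma first, is the orthogonality relation: for every $\mathbf{a}\in B_k^n$,
\[
\sum_{\mathbf{c}\in C}\chi\bigl([\mathbf{a},\mathbf{c}]\bigr)=
\begin{cases}
|C|, & \text{if }\mathbf{a}\in C^\bot,\\
0, & \text{otherwise.}
\end{cases}
\]
Since $[\mathbf{a},\mathbf{c}]=\sum_i a_ic_i$ is additive in $\mathbf{c}$, the assignment $\mathbf{c}\mapsto\chi([\mathbf{a},\mathbf{c}])$ is a character of the additive group of $C$; it is the trivial character exactly when $[\mathbf{a},\mathbf{c}]=0$ for all $\mathbf{c}\in C$, i.e. when $\mathbf{a}\in C^\bot$, in which case the sum is $|C|$, while a nontrivial character of a finite abelian group sums to $0$. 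The input that makes this work is that $\chi$ is a \emph{generating} character of $B_k$, which the text has already extracted from the CRT decomposition $B_k\cong\mathbb{F}_{p^r}^{2^k}$ of Theorem~\ref{crt}; this is what forbids any accidental degeneracy and guarantees the triviality locus is precisely $C^\bot$.

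With the lemma in hand, I would rewrite the enumerator as $\cwe_C(\mathbf{X})=\sum_{\mathbf{c}\in C}\prod_{i=1}^n X_{c_i}$ and substitute $T\cdot\mathbf{X}$, whose $b$-th coordinate is obtained by contracting $\mathbf{X}$ against the $b$-th column of $T$, namely $(T\cdot\mathbf{X})_b=\sum_{a\in B_k}\chi(ab)X_a$ (row versus column being immaterial here because $T$ is symmetric). Expanding the product over the $n$ coordinates, interchanging the two summations, and using the identity $\prod_{i=1}^n\chi(a_ic_i)=\chi\bigl([\mathbf{a},\mathbf{c}]\bigr)$ (valid since $\chi$ is an additive character), this becomes
\[
\cwe_C(T\cdot\mathbf{X})=\sum_{\mathbf{a}\in B_k^n}\Bigl(\prod_{i=1}^n X_{a_i}\Bigr)\sum_{\mathbf{c}\in C}\chi\bigl([\mathbf{a},\mathbf{c}]\bigr).
\]
The lemma collapses the inner sum, leaving only the terms with $\mathbf{a}\in C^\bot$, each weighted by $|C|$; hence $\cwe_C(T\cdot\mathbf{X})=|C|\,\cwe_{C^\bot}(\mathbf{X})$, which is the first identity after dividing by $|C|$.

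For the Hermitian identity I would repeat the three moves verbatim, now contracting $\mathbf{X}$ against $T_H$, so that $(T_H\cdot\mathbf{X})_b=\sum_{a}\chi(a\overline{b})X_a$, and with $[\cdot,\cdot]_H$ replacing $[\cdot,\cdot]$ throughout; the product of characters then assembles into $\chi([\mathbf{a},\mathbf{c}]_H)$ and the analogous orthogonality relation should single out exactly $C^H$. I expect the only genuine obstacle to lie in this Hermitian orthogonality step. Unlike $T$, the matrix $T_H$ is \emph{not} symmetric, so one must be careful to let the conjugation act on the codeword coordinate (the second slot of $[\cdot,\cdot]_H$) rather than on the free dual variable, so that $\mathbf{c}\mapsto\chi([\mathbf{a},\mathbf{c}]_H)$ is still an honest additive character of $C$ whose kernel is $C^H$ and not some shifted or otherwise distorted set. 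Verifying this reduces to the additivity of $\overline{\,\cdot\,}$ on $B_k$, which makes $[\mathbf{a},\mathbf{c}]_H$ additive in $\mathbf{c}$; granting it, together with the identity $|C|\,|C^H|=|B_k|^n$ recorded above and the generating-character property, the computation is formally identical to the Euclidean one and to \cite[Theorem 4.4]{dougherty-ceng}.
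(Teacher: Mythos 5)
Your proposal is correct, but it does not follow the paper's route: the paper disposes of this theorem in one line by citing \cite[Corollary 8.2]{wood}, Wood's duality theorem for finite Frobenius rings, which packages exactly the character-sum computation you perform. In effect you have reproved the relevant special case of Wood's result from scratch: your orthogonality lemma, the expansion of $\cwe_C(T\cdot\mathbf{X})$, and the interchange of the two summations are precisely the ingredients hidden inside that citation. The trade-off is clear. The paper's approach is economical and emphasizes that the only structural input is that $B_k$ is Frobenius with a generating character; yours is self-contained, shows exactly where the generating-character property enters, and treats the Hermitian identity on the same footing as the Euclidean one, whereas extracting it from the cited corollary requires an unstated extra reduction (for instance via $C^H=(\overline{C})^\bot$, where $\overline{C}$ is the conjugate code, followed by a permutation of the variables $X_b$).

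Two points in your write-up deserve to be made explicit before the argument is airtight. First, in the orthogonality lemma, the implication ``$\chi([\mathbf{a},\mathbf{c}])=1$ for all $\mathbf{c}\in C$ forces $[\mathbf{a},\mathbf{c}]=0$ for all $\mathbf{c}\in C$'' needs the observation that $\{[\mathbf{a},\mathbf{c}]:\mathbf{c}\in C\}$ is an ideal of $B_k$ (true, since $[\mathbf{a},\lambda\mathbf{c}]=\lambda[\mathbf{a},\mathbf{c}]$ and $C$ is a $B_k$-module): the generating-character property says that the kernel of the character contains no \emph{nonzero ideal}, not that it contains no nonzero element, so the ideal structure of the value set is what closes the gap. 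Second, in the Hermitian case, the additivity of $\overline{\,\cdot\,}$ that you correctly identify as the crux holds only if one reads the bar as the ring automorphism of $B_k$ fixing $\FF_{p^r}$ and sending $v_i\mapsto 1-v_i$; the paper's literal formula $\overline{u}=1-u$ is not additive, but the automorphism reading is the only one consistent with the paper's other uses (e.g. $I^H$ being an ideal). With that reading, one also gets $[\mathbf{a},\lambda\mathbf{c}]_H=\overline{\lambda}\,[\mathbf{a},\mathbf{c}]_H$, and surjectivity of conjugation makes the Hermitian value set an ideal as well, so the same generating-character argument pins the triviality locus to exactly $C^H$. With these two verifications inserted, your proof is complete and correct.
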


\begin{proof}
This theorem is a consequence of \cite[Corollary 8.2]{wood}.
\end{proof}

As we can see, $T$ is a $p^{r2^k}$ by $p^{r2^k}$ matrix indexed by the elements of $B_k.$
Denote by $B_k^\times$ the group of units in $B_k.$ Define the relation $\sim$ as follows:
~$\alpha\sim\alpha'$ if $\alpha'=u\alpha,$
for some $u\in B_k^\times.$
It can be seen that the relation $\sim$ is an equivalence relation,
so we define $\mathcal{A}=\{\alpha_1,\dots,\alpha_t\}$ be the set of representatives.
Let $S$ be the $t$ by $t$ matrix indexed by the elements in $\mathcal{A}.$
Also, define $S_{\alpha,\beta}=\sum_{\gamma\sim\beta}T_{\alpha,\gamma}.$

Now, if $\alpha\sim\alpha'$ then for any column $\beta$ we have
\[
S_{\alpha',\beta} = \sum_{\gamma\sim\beta}T_{\alpha',\gamma} = \sum_{\gamma\sim\beta}\xi^{\wtgr(\varphi(\alpha'\gamma))}.
\]
Since $\varphi(\alpha\gamma)=\varphi(\alpha)\varphi(\gamma),$
where the multiplication in the righthand side of equality carried out coordinate-wise, we have
\[
\begin{aligned}
\sum_{\gamma\sim\beta}T_{\alpha',\gamma} & =  \sum_{\gamma\sim\beta}\xi^{\wtgr(\varphi(\alpha) \varphi(u)\varphi(\gamma))}\\
 & = \sum_{\gamma'\sim\beta}\xi^{\wtgr(\varphi(\alpha)\varphi(\gamma'))}\\
 & = \sum_{\gamma'\sim\beta}T_{\alpha,\gamma'}\\
 & = S_{\alpha,\beta}.
\end{aligned}
\]
Therefore, $S_\alpha=S_{\alpha'}$ when $\alpha\sim\alpha'.$

Define the symmetrized weight enumerator for a code $C$ as
\[
\swe_C(\mathbf{Y_{\mathcal{A}}})=\sum_{\mathbf{c}\in C}\prod_{\alpha\in\mathcal{A}}Y_\alpha^{\swc_\alpha(\mathbf{c})},
\]
where $\swc_\alpha(\mathbf{c})=\sum_{\alpha'\sim\alpha}n_{\alpha'}(\mathbf{c}).$ Then, we have the following theorem.

\begin{theorem}{(MacWilliams Relation for SWE)}
Let $C$ be a linear code over $B_k,$ then
\[
\swe_{C^\bot}=\frac{1}{|C|}\swe_C(S\cdot \mathbf{Y}_{\mathcal{A}}).
\]
\label{macrel2}
\end{theorem}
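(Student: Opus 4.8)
The plan is to derive the symmetrized relation directly from the complete weight enumerator relation in Theorem~\ref{macrel} by performing a variable specialization that merges each equivalence class of $\sim$ into a single symbol. Concretely, for $b\in B_k$ write $[b]$ for the representative in $\mathcal{A}$ of its $\sim$-class, and consider the substitution $X_b\mapsto Y_{[b]}$ that sends the family $\mathbf{X}=(X_b)_{b\in B_k}$ of complete-weight variables to the smaller family $\mathbf{Y}_{\mathcal{A}}=(Y_\alpha)_{\alpha\in\mathcal{A}}$.

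First I would record that this substitution turns the complete weight enumerator into the symmetrized one. For any codeword $\mathbf{c}$,
\[
\prod_{b\in B_k}X_b^{n_b(\mathbf{c})}\Big|_{X_b=Y_{[b]}}
=\prod_{\alpha\in\mathcal{A}}\prod_{b\sim\alpha}Y_\alpha^{n_b(\mathbf{c})}
=\prod_{\alpha\in\mathcal{A}}Y_\alpha^{\swc_\alpha(\mathbf{c})},
\]
so that $\cwe_{C}(\mathbf{X})|_{X_b=Y_{[b]}}=\swe_{C}(\mathbf{Y}_{\mathcal{A}})$, and likewise for $C^\bot$. Applying this to the left-hand side of Theorem~\ref{macrel} immediately produces $\swe_{C^\bot}(\mathbf{Y}_{\mathcal{A}})$.

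Next I would track what the same substitution does to the right-hand side $\cwe_C(T\cdot\mathbf{X})$. The $\alpha$-th entry of $T\cdot\mathbf{X}$ is $\sum_{\beta\in B_k}T_{\alpha,\beta}X_\beta$; grouping this sum by $\sim$-classes and invoking the defining identity $S_{\alpha,\gamma}=\sum_{\beta\sim\gamma}T_{\alpha,\beta}$ gives
\[
\Big(\sum_{\beta\in B_k}T_{\alpha,\beta}X_\beta\Big)\Big|_{X_\beta=Y_{[\beta]}}
=\sum_{\gamma\in\mathcal{A}}\Big(\sum_{\beta\sim\gamma}T_{\alpha,\beta}\Big)Y_\gamma
=\sum_{\gamma\in\mathcal{A}}S_{\alpha,\gamma}Y_\gamma
=\big(S\cdot\mathbf{Y}_{\mathcal{A}}\big)_\alpha.
\]
Thus after substitution the $\alpha$-th argument of $\cwe_C$ becomes $(S\cdot\mathbf{Y}_{\mathcal{A}})_\alpha$.

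The crucial step, and the one I expect to be the main obstacle to state cleanly, is the \emph{second} symmetrization: collapsing $\cwe_C$ over the indices $\alpha\in B_k$ into $\swe_C$ over $\mathcal{A}$. Here I would use the fact established just before the statement that $S_\alpha=S_{\alpha'}$ whenever $\alpha\sim\alpha'$, i.e. the rows of $S$ are constant on $\sim$-classes, which forces $(S\cdot\mathbf{Y}_{\mathcal{A}})_\alpha=(S\cdot\mathbf{Y}_{\mathcal{A}})_{\alpha'}$ for $\alpha\sim\alpha'$. Grouping the product $\prod_{\alpha\in B_k}\big((S\cdot\mathbf{Y}_{\mathcal{A}})_\alpha\big)^{n_\alpha(\mathbf{c})}$ by classes and using this equality of entries yields $\prod_{\delta\in\mathcal{A}}\big((S\cdot\mathbf{Y}_{\mathcal{A}})_\delta\big)^{\swc_\delta(\mathbf{c})}$, so that $\cwe_C(T\cdot\mathbf{X})|_{X_b=Y_{[b]}}=\swe_C(S\cdot\mathbf{Y}_{\mathcal{A}})$. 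Combining the three displays with Theorem~\ref{macrel} then gives $\swe_{C^\bot}(\mathbf{Y}_{\mathcal{A}})=\tfrac{1}{|C|}\swe_C(S\cdot\mathbf{Y}_{\mathcal{A}})$, as required. The only care needed is to keep the two symmetrizations separate — one on the $\beta$-index, which builds the columns of $S$ via its definition, and one on the $\alpha$-index, which collapses the rows and relies on the well-definedness of $S$ on $\sim$-classes.
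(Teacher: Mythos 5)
Your proposal is correct, but it takes a different route from the paper: the paper's entire proof is a one-line citation of Wood's general symmetrized MacWilliams identity for finite Frobenius rings (Theorem 8.4 of the cited paper of Wood), whereas you derive the statement internally from Theorem~\ref{macrel}, the already-established CWE identity. Your derivation is the standard specialization argument, and you execute it correctly: the substitution $X_b\mapsto Y_{[b]}$ collapses $\cwe_{C^\bot}$ to $\swe_{C^\bot}$; grouping the column index by $\sim$-classes turns the entries $\sum_{\beta}T_{\alpha,\beta}X_\beta$ into $\sum_{\gamma\in\mathcal{A}}S_{\alpha,\gamma}Y_\gamma$ by the very definition of $S$; and the collapse of the row index $\alpha\in B_k$ down to $\mathcal{A}$ is exactly where the computation the paper carries out just before the theorem (that $S_{\alpha,\beta}=S_{\alpha',\beta}$ whenever $\alpha'=u\alpha$ for a unit $u$, since $\gamma\mapsto u\gamma$ permutes each $\sim$-class) is needed. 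You are right to flag this second symmetrization as the crucial step, and you correctly note that it amounts to extending the row index of $S$ to all of $B_k$ and checking well-definedness on classes. What each approach buys: the paper's citation is shorter and inherits the full generality of Wood's duality framework, but it leaves the row-constancy computation preceding the theorem logically orphaned; your argument is self-contained given Theorem~\ref{macrel}, is purely formal (no characters beyond those already packaged into $T$), and makes that preparatory computation actually carry the proof.
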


\begin{proof}
Apply \cite[Theorem 8.4]{wood}.
\end{proof}

\begin{remark}
We note that when $p=2$ and $r=1,$ the above symmetrized weight enumerator
is equal to complete weight enumerator, since the only unit in $A_k$ is 1.
\end{remark}

\section{Self-Dual Codes}

We define two types of self-duality for codes as follows.

\begin{definition}
A code $C$ is called \emph{Euclidean self-dual} if $C=C^\bot.$ It is called \emph{Hermitian self-dual} if $C=C^H.$
It is  called \emph{Euclidean [Hermitian] self-orthogonal} code if $C\subseteq C^\bot$ [$C\subseteq C^H$].
\end{definition}


The proposition below shows the non-existence of Euclidean self-dual codes of length $1$ over $B_k.$
Basically, this proposition
generalizes \cite[Theorem 4.2]{dougherty-ceng}.

\begin{prop}
If $I$ is an ideal of $B_k$ then $I^\bot\not= I.$
\label{nosde}
\end{prop}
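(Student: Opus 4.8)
The plan is to transport the whole question across the Chinese Remainder Theorem isomorphism of Theorem~\ref{crt}, where taking the Euclidean orthogonal of a length-one code turns into simple set-complementation of the supporting coordinates, and then to observe that a subset of a nonempty index set can never coincide with its own complement.

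First I would invoke Theorem~\ref{crt} to fix an isomorphism $B_k \cong \mathbb{F}_{p^r}^{2^k}$, under which multiplication is performed coordinatewise; this is exactly the multiplicativity $\varphi(\alpha\gamma)=\varphi(\alpha)\varphi(\gamma)$ recorded in Section~\ref{Macwill}. Since each factor $\mathbb{F}_{p^r}$ is a field, its only ideals are $\{0\}$ and $\mathbb{F}_{p^r}$, so every ideal $I$ of $B_k$ corresponds to a product $\prod_{i=1}^{2^k} J_i$ with $J_i \in \{\{0\},\mathbb{F}_{p^r}\}$ and is therefore completely determined by its support $T=\{\, i : J_i=\mathbb{F}_{p^r}\,\}\subseteq\{1,\dots,2^k\}$. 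Equivalently, one may write $I=\langle e\rangle$ for the idempotent generator furnished by Proposition~\ref{genideal}, whose coordinates are the indicator of $T$.

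Next I would compute $I^\bot$ in this picture. For length $1$ the Euclidean inner product is just the product $[\mathbf{u},\mathbf{u}']=uu'$, so $I^\bot=\{\,u\in B_k : uu'=0 \text{ for all } u'\in I\,\}$ is the annihilator of $I$. Coordinatewise, $u$ annihilates $I$ precisely when $u_i=0$ for every $i\in T$, because $u_i u_i'=0$ for all $u_i'\in\mathbb{F}_{p^r}$ forces $u_i=0$ in a field, while the coordinates $i\notin T$ stay unconstrained. Hence $I^\bot$ is the ideal with support the complement $T^c=\{1,\dots,2^k\}\setminus T$; in the generator language this is $I^\bot=\langle 1-e\rangle$, matching Proposition~\ref{idealdual}.

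Finally, $I=I^\bot$ would force $T=T^c$ as subsets of $\{1,\dots,2^k\}$, which is impossible since $T\cap T^c=\emptyset$ while $T\cup T^c=\{1,\dots,2^k\}\neq\emptyset$ (the index set has $2^k\geq 1$ elements). Therefore $I^\bot\neq I$, as claimed; note that a cardinality count alone does not suffice, since $|I|\,|I^\bot|=|B_k|=p^{r2^k}$ only forces $|I|=p^{r2^{k-1}}$, which is compatible with $|T|=|T^c|$. The sole step demanding care is the translation of the Euclidean orthogonal through the isomorphism of Theorem~\ref{crt}; once the inner product is identified with coordinatewise multiplication, the remaining combinatorial fact ($T\neq T^c$) is immediate, so I anticipate no substantive obstacle.
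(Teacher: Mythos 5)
Your proposal is correct and is essentially the argument the paper relies on: the paper's own ``proof'' is only a pointer to \cite[Theorem 4.2]{dougherty-ceng}, and that argument is precisely the reduction you carry out --- transport the ideal through the CRT isomorphism of Theorem~\ref{crt} so that an ideal becomes a subset $T\subseteq\{1,\dots,2^k\}$ of coordinates, Euclidean duality (annihilation, for length $1$) becomes complementation $T\mapsto T^c$, and no subset of a nonempty index set equals its own complement. Your closing remark that a cardinality count alone cannot work is a worthwhile addition, since $|I|\,|I^\bot|=|B_k|=p^{r2^k}$ is perfectly consistent with $|I|=|I^\bot|$.
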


\begin{proof}
Similar to the proof of Theorem 4.2 in \cite{dougherty-ceng}.
\end{proof}

\begin{remark}
Proposition \ref{nosde} does not hold for the Hermitian inner product.
For example, take $k=3,$ for any $p$ and $r,$ the Hermitian dual of the ideal
$I=\langle v_2\rangle$ is $I^H=\langle \overline{1-v_2}\rangle=\langle v_2\rangle.$
\end{remark}

Following \cite{dougherty-ceng}, we define a map $\prod_{j,k}: B_j\rightarrow B_k^{2^{k-j}}$ for $j>k$ such that
\[
\prod_{j,k}=\phi_{k+1}\circ\phi_{k+2}\circ\cdots\circ\phi_{j}.
\]

Then, we have the following results.

\begin{prop}
If $C$ is an Euclidean [Hermitian] self-dual code over $B_j,$ then $\prod_{j,k}(C)$
is an Euclidean [Hermitian] self-orthogonal code over $B_k$ for $j>k.$
\end{prop}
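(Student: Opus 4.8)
The plan is to reduce everything to the single basic building block $\phi_j\colon B_j\to B_{j-1}^2$ and to show that this map \emph{halves inner products} in an appropriate sense, so that orthogonality is preserved (indeed turned into self-orthogonality after passing from $C=C^\bot$ to $\prod_{j,k}(C)\subseteq\bigl(\prod_{j,k}(C)\bigr)^\bot$). First I would record the key algebraic identity for $\phi_j$: writing $\gamma=\alpha+\beta v_j$ and $\gamma'=\alpha'+\beta' v_j$ with $\alpha,\beta,\alpha',\beta'\in B_{j-1}$, a direct computation using $v_j^2=v_j$ gives
\[
\gamma\gamma'=\alpha\alpha'+\bigl(\alpha\beta'+\beta\alpha'+\beta\beta'\bigr)v_j,
\]
so that, under $\phi_j(\gamma)=(\alpha,\alpha+\beta)$, the two coordinates of $\phi_j(\gamma\gamma')$ are exactly $\alpha\alpha'$ and $(\alpha+\beta)(\alpha'+\beta')$; that is,
\[
\phi_j(\gamma\gamma')=\phi_j(\gamma)\cdot\phi_j(\gamma'),
\]
with coordinatewise multiplication in $B_{j-1}^2$. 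This multiplicativity is the heart of the matter.

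Next I would extend this to the Euclidean inner product on tuples. For $\mathbf{u},\mathbf{u}'\in B_j^n$ the computation above, summed over coordinates, shows that $\phi_j$ applied componentwise sends the pair $(\mathbf{u},\mathbf{u}')$ to vectors in $B_{j-1}^{2n}$ whose Euclidean inner product is determined by $[\mathbf{u},\mathbf{u}']$ via $\phi_j$ itself: concretely, if $[\mathbf{u},\mathbf{u}']=\sum_i u_iu_i'=0$ in $B_j$, then applying the multiplicative identity term by term and using $\phi_j(0)=(0,0)$ yields that both halves of the image inner product vanish, hence $\phi_j(\mathbf{u})$ and $\phi_j(\mathbf{u}')$ are Euclidean-orthogonal in $B_{j-1}^{2n}$. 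Since $\prod_{j,k}=\phi_{k+1}\circ\cdots\circ\phi_j$ is a composition of such maps, iterating preserves orthogonality; thus $C\subseteq C^\bot$ (in particular $C=C^\bot$) forces $\prod_{j,k}(\mathbf{u})$ and $\prod_{j,k}(\mathbf{u}')$ to be orthogonal for all $\mathbf{u},\mathbf{u}'\in C$, i.e. $\prod_{j,k}(C)\subseteq\bigl(\prod_{j,k}(C)\bigr)^\bot$, which is precisely Euclidean self-orthogonality of the image.

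For the Hermitian case the same scheme applies, but with the conjugation $\overline{x}=1-x$ inserted. Here I would check how $\phi_j$ interacts with the bar operation. Writing the Hermitian pairing $[\mathbf{u},\mathbf{u}']_H=\sum_i u_i\overline{u_i'}$ and noting that on $B_j$ the conjugate of $\alpha'+\beta'v_j$ must be expressed back in the $\{1,v_j\}$-basis over $B_{j-1}$, I would establish the compatibility $\phi_j(\overline{\gamma'})$ equals the appropriate conjugate of $\phi_j(\gamma')$ in $B_{j-1}^2$, so that the multiplicative identity upgrades to a Hermitian-multiplicative one; the preservation of Hermitian orthogonality then follows exactly as in the Euclidean case by composing the $\phi_j$'s.

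The main obstacle I anticipate is not the Euclidean identity, which is a one-line expansion, but making the Hermitian bookkeeping clean: the conjugation $\overline{x}=1-x$ is affine rather than linear, so $\phi_j$ does not commute with the bar map on the nose, and one must verify carefully that the extra additive terms introduced by $1-v_j$ cancel against each other across the two coordinates of the image. I would handle this by first computing $\overline{\gamma'}$ explicitly in terms of $\alpha',\beta'$ and the identity $\overline{v_j}=1-v_j$, then feeding it through the Euclidean multiplicativity already proved, thereby reducing the Hermitian claim to the Euclidean one composed with a single verified conjugation rule rather than re-deriving everything. Once that compatibility lemma is in hand, the induction on the composition $\phi_{k+1}\circ\cdots\circ\phi_j$ is routine.
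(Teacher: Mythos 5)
Your proof is correct and is essentially the argument the paper relies on: the paper's proof simply defers to the $A_k$ case (\cite[Theorem 5.1]{dougherty-ceng}), whose computation is exactly your key identity, namely that $\phi_j(\alpha+\beta v_j)=(\alpha,\alpha+\beta)$ is a unital ring homomorphism $B_j\to B_{j-1}^2$, so vanishing inner products pass to both coordinates and iterate through $\prod_{j,k}=\phi_{k+1}\circ\cdots\circ\phi_j$. The Hermitian ``obstacle'' you anticipate in fact dissolves immediately: unitality gives $\phi_j(\overline{\gamma})=\phi_j(1-\gamma)=(1,1)-\phi_j(\gamma)=\overline{\phi_j(\gamma)}$ coordinatewise, so $\phi_j$ commutes with the conjugation on the nose and the Hermitian case follows verbatim from the Euclidean one.
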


\begin{proof}
Similar to the proof of \cite[Theorem 5.1]{dougherty-ceng}.
\end{proof}

\begin{prop}
If $\mathbf{c}_1,\mathbf{c}_2,\dots,\mathbf{c}_s$ generate a self-dual code over $B_k,$
then $\mathbf{c}_1,\mathbf{c}_2,\dots,\mathbf{c}_s$ generate a Hermitian [Euclidean] self-dual code over $B_j,$ for $j>k.$
\end{prop}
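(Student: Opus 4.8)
The plan is to transport the whole situation from $B_j$ down to $B_k$ through the ring isomorphism $\prod_{j,k}\colon B_j\to B_k^{2^{j-k}}$ and to check that, under this transport, the code generated over $B_j$ becomes a product of copies of the code generated over $B_k$, while the ambient inner product becomes the componentwise one. Write $C$ for the $B_k$-module generated by $\mathbf{c}_1,\dots,\mathbf{c}_s$ inside $B_k^n$ and $C'$ for the $B_j$-module they generate inside $B_j^n$ (using $B_k\subseteq B_j$); denote by the same symbol $\prod_{j,k}$ its coordinatewise extension to $n$-tuples, which is again a bijection because each $\phi_\ell$ is.

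First I would identify $\prod_{j,k}(C')$. For any $c\in B_k$ and any $\ell>k$ one has $c=c+0\cdot v_\ell$, so $\phi_\ell(c)=(c,c)$; iterating, $\prod_{j,k}(c)=(c,c,\dots,c)$ is the diagonal tuple. Hence each generator is sent to $(\mathbf{c}_i,\dots,\mathbf{c}_i)$, and since $\prod_{j,k}$ is a ring isomorphism, forming $B_j$-spans corresponds to forming $B_k$-spans in each coordinate; therefore $\prod_{j,k}(C')=C\times C\times\cdots\times C$ with $2^{j-k}$ factors.

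Second I would record how the two inner products behave. A direct computation gives $\phi_{k+1}(\alpha+\beta v_{k+1})=(\alpha,\alpha+\beta)$ and $\phi_{k+1}\big(\overline{\alpha+\beta v_{k+1}}\big)=(\overline{\alpha},\overline{\alpha+\beta})$, so $\phi_{k+1}$ intertwines the conjugation $\overline{(\cdot)}=1-(\cdot)$ on $B_{k+1}$ with the componentwise conjugation on $B_k\times B_k$. Because $\phi_{k+1}$ is moreover a ring homomorphism, both $[\cdot,\cdot]$ and $[\cdot,\cdot]_H$ split as $\phi_{k+1}([\mathbf{u},\mathbf{u}'])=\big([\mathbf{u}^{(0)},\mathbf{u}'^{(0)}],[\mathbf{u}^{(1)},\mathbf{u}'^{(1)}]\big)$ and likewise for the Hermitian product. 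Composing through $\phi_{k+1},\dots,\phi_j$, the isomorphism $\prod_{j,k}$ carries the Euclidean (respectively Hermitian) product on $B_j^n$ to the componentwise Euclidean (respectively Hermitian) product on $(B_k^n)^{2^{j-k}}$, and in particular sends $(C')^\bot$ to $(C^\bot)^{2^{j-k}}$ and $(C')^H$ to $(C^H)^{2^{j-k}}$.

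Finally I would combine the two observations: since $\prod_{j,k}$ is a bijection carrying $C'$ to $C^{2^{j-k}}$ and simultaneously $(C')^\bot$ (respectively $(C')^H$) to $(C^\bot)^{2^{j-k}}$ (respectively $(C^H)^{2^{j-k}}$), the identity $C'=(C')^\bot$ (respectively $C'=(C')^H$) holds if and only if $C=C^\bot$ (respectively $C=C^H$). The hypothesis is precisely the latter, so $C'$ is self-dual of the same type over $B_j$. The step I expect to be the main obstacle is the conjugation compatibility $\phi_{k+1}(\overline{u})=(\overline{\alpha},\overline{\alpha+\beta})$ and its persistence through the composition $\phi_{k+1}\circ\cdots\circ\phi_j$: this is the only point where the Euclidean and Hermitian arguments could diverge, and it is what makes the Hermitian case go through unchanged.
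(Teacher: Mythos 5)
Your route is genuinely different from the paper's and, in its Euclidean half, completely correct. The paper stays inside $B_j$: it observes that the pairwise inner products of the generators are unchanged when computed in $B_j\supseteq B_k$ (the conjugation of $B_j$ restricts to that of $B_k$), so the lifted code $C_j$ is self-orthogonal, $C_j\subseteq C_j^H$; it then closes the gap between self-orthogonal and self-dual by a cardinality count, invoking Proposition~\ref{codeid} (as in Theorem 5.2 of the cited Cengellenmis--Dertli--Dougherty paper) together with $|C_j||C_j^H|=|B_j|^n$. Your transport through the ring isomorphism $\prod_{j,k}\colon B_j\to B_k^{2^{j-k}}$ replaces all counting by the observation that the $B_j$-span of vectors lying over the diagonally embedded $B_k$ is the full product $C\times\cdots\times C$, and that duals of full products are products of duals. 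This is cleaner: it needs neither minimal generating sets nor the Frobenius identity, and it makes the actual mechanism (the diagonal embedding of $B_k$ into the CRT decomposition of $B_j$) explicit.

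However, the step you yourself flagged as the crux --- conjugation compatibility --- is wrong as stated, for an instructive reason. You computed with the paper's literal formula $\overline{u}=1-u$; but that formula cannot be what is meant: it is not multiplicative, and it would make Hermitian duals degenerate (since $\mathbf{0}\in C$, every $\mathbf{u}\in C^H$ would have to satisfy $\sum_i u_i=0$, and the length-one Hermitian self-dual ideals of Theorem~\ref{herm} could not exist). The conjugation used throughout the paper is the ring automorphism fixing $\mathbb{F}_{p^r}$ with $v_i\mapsto 1-v_i$. Under it, for $u=\alpha+\beta v_{k+1}$ one gets $\overline{u}=\overline{\alpha}+\overline{\beta}(1-v_{k+1})$, hence $\phi_{k+1}(\overline{u})=\left(\overline{\alpha+\beta},\,\overline{\alpha}\right)$: conjugation \emph{swaps} the two CRT components (it interchanges the idempotents $v_{k+1}$ and $1-v_{k+1}$) and then conjugates each. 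Consequently $\prod_{j,k}$ carries the Hermitian form not to the componentwise Hermitian form but to a pairing twisted by a fixed permutation $\pi$ of the $2^{j-k}$ components; this is precisely why Hermitian self-dual codes over $B_j$ decompose as $D\times D^{\bot}$ (Theorem~\ref{hermisom}) rather than as products of self-dual codes. Your conclusion nevertheless survives: because $\prod_{j,k}(C')=C\times\cdots\times C$ has independently varying components, requiring $\mathbf{u}^{(m)}$ to pair to zero with $\mathbf{w}^{(\pi(m))}$ for all choices of $\mathbf{w}^{(\pi(m))}\in C$ is still equivalent to $\mathbf{u}^{(m)}\in C^{H}$ for every $m$, so $(C')^H$ still corresponds to $(C^{H})^{2^{j-k}}$ and the equivalence $C'=(C')^H\Leftrightarrow C=C^H$ holds. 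Repair that one computation and your proof is complete --- and, like the paper's effective statement, it is type-preserving: Euclidean self-duality over $B_k$ yields Euclidean over $B_j$, Hermitian yields Hermitian.
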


\begin{proof}
Since $B_k\subseteq B_j,$ for $j>k,$ if $\mathbf{c}_1,\mathbf{c}_2,\dots,\mathbf{c}_s$
generate a self-dual code, say $C_k,$ over $B_k,$ then $\mathbf{c}_1,\mathbf{c}_2,\dots,\mathbf{c}_s$
also generate a self-orthogonal code, say $C_j,$ over $B_j,$ for $j>k.$ Therefore, we have $C_j\subseteq C_j^H.$
The rest of the proof is similar to the proof \cite[Theorem 5.2]{dougherty-ceng} by using Proposition~\ref{codeid}.
\end{proof}

As a direct consequence, we have the following corollary.

\begin{cor}
If $C$ is a Hermitian [Euclidean] self-dual code over $B_k,$
then there exists a self-dual code $C'$ over $B_j,$ for all $j>k,$ with $\prod_{j,k}(C')=C.$
\end{cor}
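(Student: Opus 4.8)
The plan is to produce $C'$ as the preimage of $C$ under the Gray map $\prod_{j,k}$ and to deduce its self-duality from the two propositions immediately preceding. First I would fix a minimal generating set of $C$ (which exists by the discussion of minimal generating sets and Proposition~\ref{codeid}), so that $|C|$ is pinned down as $\prod_{i}I(\mathbf{u}_i)$ and can be tracked after lifting. Since each $\phi_m$ is a bijection and $\prod_{j,k}=\phi_{k+1}\circ\cdots\circ\phi_j$, its natural extension to $n$-tuples is a bijection of $B_j^{\,n}$ onto $B_k^{\,n2^{j-k}}$; writing $N$ for the length of $C$ and assuming $2^{j-k}\mid N$ (so that $C$ lies in the codomain), I set $C':=\prod_{j,k}^{-1}(C)\subseteq B_j^{\,N/2^{j-k}}$. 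This $C'$ is automatically linear over $B_j$, because the preimage of a $B_k$-submodule under a module isomorphism is a $B_j$-submodule, and it satisfies $\prod_{j,k}(C')=C$ by construction.

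It then remains to check that $C'$ is self-dual over $B_j$. The key structural fact is that $\prod_{j,k}$ respects the pairing: each $\phi_m$ is a ring homomorphism onto the product ring and intertwines the conjugation, since $\phi_m(\overline{x})=\overline{\phi_m(x)}$ coordinatewise (here $\overline{x}=1-x$), so that $\prod_{j,k}$ carries the Hermitian [Euclidean] inner product on $B_j^{\,n}$ to the matching inner product on $B_k^{\,n2^{j-k}}$ exactly as in the first of the two preceding propositions. Consequently $C$ being self-orthogonal of a given type forces $C'$ to be self-orthogonal of the type paired with it there. To upgrade self-orthogonality to self-duality I would invoke the Frobenius cardinality identity $|C'|\,|(C')^{\bot}|=|B_j|^{\,N/2^{j-k}}$ from Section~3 together with $|C'|=|C|$ (the map is a bijection); since $|C|$ already has the self-dual size over $B_k$, the same count makes $C'$ self-dual over $B_j$. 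The assertion for every $j>k$ then follows by iterating the one-step maps $\phi_m$, so all of the content sits in a single step.

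I expect the main obstacle to be the careful bookkeeping of the Euclidean--Hermitian labels and of the lengths rather than any deep difficulty. One must verify that the conjugation $\overline{x}=1-x$ is transported faithfully through each $\phi_m$, so that the type of self-duality attached to $C'$ is precisely the one matched to $C$ by the preceding propositions, and one must ensure the divisibility $2^{j-k}\mid N$ needed for the preimage to live in $B_j^{\,N/2^{j-k}}$. Once the interchange of the two inner products and the cardinality count are pinned down, the equality $\prod_{j,k}(C')=C$ is immediate from the definition of $C'$, and the corollary emerges as the surjectivity counterpart of the correspondence recorded in the preceding proposition.
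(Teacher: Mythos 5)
Your construction $C':=\prod_{j,k}^{-1}(C)$ is not the paper's, and it fails at the step where you declare $C'$ ``automatically linear over $B_j$.'' The extension of $\prod_{j,k}$ to tuples is a $B_j$-module isomorphism only when the target is given the $B_j\cong B_k^{2^{j-k}}$-module structure in which a scalar acts through its CRT components on the corresponding positions of each block of $2^{j-k}$ coordinates; a $B_k$-linear code $C$ need not be a submodule for that larger action. Transporting the action through the Gray map, linearity of $C'$ forces $C$ to be closed under the block idempotents $(1,0,\dots,0),\dots,(0,\dots,0,1)$, i.e., forces $C$ to split as a direct sum of subcodes supported on the block positions, which a generic self-dual code is not. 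Concretely, take $p^r\equiv 1 \pmod{4}$, $\lambda\in\mathbb{F}_{p^r}$ with $\lambda^2=-1$, and the Euclidean self-dual code $C=\{(a,\lambda a): a\in B_k\}$ of length $2$; its preimage under $\phi_{k+1}$ is $\{\alpha(1+(\lambda-1)v_{k+1}): \alpha\in B_k\}$, which is not closed under multiplication by $v_{k+1}$ (closure would force $\lambda v_{k+1}$ into the set), hence is not an ideal of $B_{k+1}$. Your orthogonality transfer also runs in the wrong direction: since each $\phi_m$ is a ring homomorphism, $[\overline{\prod_{j,k}}(x),\overline{\prod_{j,k}}(y)]$ equals the sum of the CRT components of $[x,y]$, so orthogonality upstairs implies orthogonality downstairs --- which is exactly why the first preceding proposition only yields a self-\emph{orthogonal} image --- but not conversely: $x=1$ and $y=1-2v_{k+1}$ satisfy $[\phi_{k+1}(x),\phi_{k+1}(y)]=1-1=0$ while $[x,y]=1-2v_{k+1}\neq 0$ for odd $p$. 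Hence self-duality of $C$ does not make $C'$ self-orthogonal, and the (correct) cardinality count cannot rescue the argument.

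There is also the length obstruction you flagged but never resolved: your $C'$ exists only when $2^{j-k}$ divides the length $N$, while the corollary asserts existence for \emph{every} $j>k$; already the length-one Hermitian self-dual codes $\langle v_i\rangle$ of Theorem~\ref{herm} admit no such preimage for any $j>k$. The paper's intended argument avoids all of this and runs in the opposite direction: take $C'$ to be the code over $B_j$ generated by the same vectors $\mathbf{c}_1,\dots,\mathbf{c}_s$ that generate $C$ (so $C'$ is linear by construction, of the same length $n$, for every $j>k$); the proposition immediately preceding the corollary shows this $C'$ is Hermitian [Euclidean] self-dual over $B_j$; and since $\prod_{j,k}$ restricted to $B_k\subseteq B_j$ is the diagonal embedding $\alpha\mapsto(\alpha,\dots,\alpha)$, each CRT component of $\prod_{j,k}(C')$ is exactly $C$, which is the sense in which $\prod_{j,k}(C')=C$ in the statement. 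You should rebuild the proof along those lines rather than via preimages.
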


\subsection{Euclidean Self-Dual Codes}

The following theorem gives a characterization for Euclidean self-dual codes over $B_k.$  Note
that here $\crt(C_1,\ldots, C_{2^k})$ is defined as a pre-image of $(C_1,\ldots, C_{2^k})$ under the map $\Theta,$ namely
\[
\crt(C_1,\ldots, C_{2^k}):=\{\Theta^{-1}(x_1,x_2,\ldots,x_{2^k}):~x_i \in C_i,~ 1 \leq i \leq 2^k\}.
\]

\begin{theorem}
A code $C$ is an Euclidean self-dual code if and only if $C=\crt(C_1,\ldots, C_{2^k})$
and each $C_i$ is an Euclidean self-dual code over $\mathbb{F}_{p^r}.$
\label{chareuc}
\end{theorem}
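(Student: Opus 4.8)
The plan is to transport the Euclidean inner product through the Chinese Remainder Theorem isomorphism $\Theta$ of Theorem~\ref{crt} and thereby reduce self-duality over $B_k$ to componentwise self-duality over $\mathbb{F}_{p^r}.$ The first thing I would record is that the inner product is compatible with each reduction map $\Theta_i,$ namely that
\[
\Theta_i([\mathbf{u},\mathbf{u}']) = [\Theta_i(\mathbf{u}),\Theta_i(\mathbf{u}')]
\]
for all $\mathbf{u},\mathbf{u}' \in B_k^n$ and each $1 \leq i \leq 2^k,$ the right-hand side being the Euclidean inner product over $\mathbb{F}_{p^r}^n.$ This holds because $\Theta_i$ is a ring homomorphism and hence commutes with the sums and products defining $[\cdot,\cdot].$ Since $\Theta=(\Theta_1,\dots,\Theta_{2^k})$ is an isomorphism, it follows that $[\mathbf{u},\mathbf{u}']=0$ in $B_k$ precisely when $[\Theta_i(\mathbf{u}),\Theta_i(\mathbf{u}')]=0$ in $\mathbb{F}_{p^r}$ for every $i.$

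Next I would show that every linear code over $B_k$ is automatically of CRT form. Transporting $C$ through $\Theta$ turns it into a submodule of $(\mathbb{F}_{p^r}^n)^{2^k}$ over the product ring $\mathbb{F}_{p^r}^{2^k},$ and the orthogonal idempotents of that ring split any such submodule as a direct product of its components. Pulling back, this yields $C = \crt(C_1,\dots,C_{2^k})$ with $C_i=\Theta_i(C)$ a linear code over $\mathbb{F}_{p^r}.$ Combining this decomposition with the inner-product compatibility above gives the key identity
\[
C^\bot = \crt(C_1^\bot,C_2^\bot,\dots,C_{2^k}^\bot),
\]
since $\mathbf{u} \in C^\bot$ is equivalent to $\Theta_i(\mathbf{u}) \in C_i^\bot$ for every $i,$ the separability of these constraints coming from the freedom to fix the remaining CRT components arbitrarily.

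With the dual identity in hand, both implications are immediate. For the reverse direction, if $C=\crt(C_1,\dots,C_{2^k})$ with each $C_i=C_i^\bot,$ then the identity gives $C^\bot=\crt(C_1^\bot,\dots,C_{2^k}^\bot)=\crt(C_1,\dots,C_{2^k})=C.$ For the forward direction, if $C=C^\bot$ then $C$ already has CRT form, and uniqueness of the CRT components forces $C_i=C_i^\bot$ for each $i.$ As a consistency check, each self-dual $C_i$ satisfies $|C_i|=p^{rn/2},$ so $|C|=\prod_i|C_i|=p^{rn2^{k-1}},$ whose square equals $|B_k|^n,$ in agreement with $|C||C^\bot|=|B_k|^n.$

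The step I expect to demand the most care is the exact decomposition $C=\crt(C_1,\dots,C_{2^k}),$ as opposed to a mere inclusion of $C$ into such a product: one must check that the component projections $\Theta_i$ are surjective onto $C_i$ and that the orthogonal idempotents of $\mathbb{F}_{p^r}^{2^k}$ lift back into $B_k,$ so that every element of the product $\prod_i C_i$ admits a preimage lying in $C.$ Once this equality is secured for arbitrary linear codes, the self-dual characterization follows formally from the inner-product compatibility of $\Theta.$
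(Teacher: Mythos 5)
Your proposal is correct and is in substance the same approach as the paper: the paper's entire proof is ``Apply Theorem~\ref{crt} and \cite[Theorem 2.3]{dougherty},'' where the cited theorem is exactly the statement that, over a ring that is a CRT product of Frobenius rings, a code is self-dual if and only if it is $\crt(C_1,\ldots,C_{2^k})$ with every component $C_i$ self-dual. What you have added is a correct, self-contained verification of that cited fact in this special case --- the idempotent splitting $C=\crt(\Theta_1(C),\ldots,\Theta_{2^k}(C))$ of any $B_k$-linear code, compatibility of the Euclidean form with each $\Theta_i,$ and hence $C^\bot=\crt(C_1^\bot,\ldots,C_{2^k}^\bot)$ --- which serves as a valid substitute for the citation.
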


\begin{proof}
Apply Theorem~\ref{crt} and \cite[Theorem 2.3]{dougherty}.
\end{proof}

Then, we have the following consequence.

\begin{cor}
Euclidean self-dual codes of length $n$ over $B_k$ exist if and only if $n$ is even.
\end{cor}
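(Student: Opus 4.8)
The plan is to reduce the statement to the corresponding existence question over the finite field $\FF_{p^r}$ and then settle that question by hand. By Theorem~\ref{chareuc}, a code $C\seq B_k^n$ is Euclidean self-dual if and only if $C=\crt(C_1,\dots,C_{2^k})$ with each $C_i$ a Euclidean self-dual code of length $n$ over $\FF_{p^r}$. Since $\crt$ is simply the pre-image under the isomorphism $\Theta$, such a $C$ can be assembled precisely when each component $C_i$ can be chosen to be a self-dual code of length $n$ over $\FF_{p^r}$. Hence the existence of a Euclidean self-dual code of length $n$ over $B_k$ is \emph{equivalent} to the existence of one over $\FF_{p^r}$, and the parameter $k$ drops out of the problem entirely.

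For the forward implication, suppose $D$ is a Euclidean self-dual code of length $n$ over $\FF_{p^r}$. Over a field one has $\dim D+\dim D^\bot=n$, and self-duality $D=D^\bot$ forces $2\dim D=n$, so $n$ must be even. Pulling this back through the CRT equivalence of the previous paragraph shows that a Euclidean self-dual code over $B_k$ can exist only when $n$ is even.

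For the converse I would exhibit, for each even $n$, an explicit self-dual code $D$ over $\FF_{p^r}$ and then take $C=\crt(D,\dots,D)$, which is self-dual over $B_k$ by Theorem~\ref{chareuc}. When $-1$ is a square in $\FF_{p^r}$, say $a^2=-1$, the length-$2$ code $\langle(1,a)\rangle$ is self-dual because $1+a^2=0$, and a direct sum of $n/2$ copies produces a self-dual code of any even length; this handles characteristic $2$ and the case $p^r\equiv 1\pmod 4$. When $p^r\equiv 3\pmod 4$ no such length-$2$ block exists, but a length-$4$ block always does: picking $a,b$ with $a^2+b^2=-1$ (solvable over every finite field), the rows $(1,0,a,b)$ and $(0,1,-b,a)$ generate a $[4,2]$ self-dual code, and direct sums of these cover every length divisible by $4$.

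The step I expect to be the main obstacle is exactly this field-level construction, since it is where the behaviour of $-1$ as a square in $\FF_{p^r}$ becomes decisive. When $p^r\equiv 3\pmod 4$ the length-$4$ block only yields lengths that are multiples of $4$, so lengths $n\equiv 2\pmod 4$ are not reached by the naive direct-sum construction; confirming whether they genuinely fail amounts to invoking the classical enumeration of self-dual codes over finite fields. I would therefore treat the clean criterion ``$n$ even'' as the governing statement in characteristic $2$ and when $p^r\equiv 1\pmod 4$, and, in the remaining case $p^r\equiv 3\pmod 4$, carefully reconcile it with the finite-field theory; this comparison is the crux of the argument.
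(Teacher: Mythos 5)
Your reduction through Theorem~\ref{chareuc} is exactly the paper's route, and everything you actually prove is correct: the dimension-count forward direction, the length-$2$ block $\langle(1,a)\rangle$ when $a^2=-1$ is solvable, and the length-$4$ block generated by $(1,0,a,b)$ and $(0,1,-b,a)$ with $a^2+b^2=-1$. The paper's own proof is one line: it cites ``the well-known fact that Euclidean self-dual codes of length $n$ over a finite field exist if and only if $n$ is even.'' You tried to prove that fact rather than cite it, and the obstacle you ran into at $p^r\equiv 3\pmod 4$ is not a weakness of your direct-sum construction --- it is a counterexample to the quoted fact, and hence to the corollary itself.

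The classical existence theorem is: over $\FF_q$, Euclidean self-dual codes of length $n$ exist iff $n$ is even when $q$ is even or $q\equiv 1\pmod 4$, but iff $n\equiv 0\pmod 4$ when $q\equiv 3\pmod 4$ (for odd $q$, a self-dual code is a totally isotropic subspace of dimension $n/2$ for the form $\sum x_i^2$, and that form is hyperbolic iff $(-1)^{n/2}$ is a square in $\FF_q$). Take $p^r=3$ and $n=2$: a self-dual code over $\FF_3$ would be spanned by some $(1,a)$ with $1+a^2=0$, which is impossible since $-1$ is a nonsquare modulo $3$; by Theorem~\ref{chareuc} there is then no Euclidean self-dual code of length $2$ over $B_k$ either. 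So the corollary is false as stated, and the paper's ``well-known fact'' is a misstatement of the classical result. The only repair your write-up needs is in its last paragraph: instead of proposing to ``reconcile'' the case $p^r\equiv 3\pmod 4$ with the finite-field theory, conclude outright that lengths $n\equiv 2\pmod 4$ genuinely fail there, so the correct statement is that Euclidean self-dual codes of length $n$ over $B_k$ exist iff $n$ is even when $p=2$ or $p^r\equiv 1\pmod 4$, and iff $4\mid n$ when $p^r\equiv 3\pmod 4$.
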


\begin{proof}
Apply Theorem~\ref{chareuc} and the well-known fact that the Euclidean self-dual codes of length $n$
over finite field exist if and only if $n$ is even.
\end{proof}

The theorem below describes the relation between Euclidean self-duality of the code and
Euclidean self-duality of its image under the Gray maps $\overline{\varphi}$ and $\Phi_k.$

\begin{theorem}
The image under the maps $\overline{\varphi}$ and $\Phi_k$ of an Euclidean self-dual code is an Euclidean
self-dual code over finite field $\mathbb{F}_{p^r}.$
\end{theorem}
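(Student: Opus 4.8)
The plan is to verify three things about the image code $\overline{\varphi}(C)$: that it is $\mathbb{F}_{p^r}$-linear, that it is Euclidean self-orthogonal, and that it has exactly half the dimension of the ambient space $\mathbb{F}_{p^r}^{n2^k}$; Euclidean self-duality then follows, and the argument for $\overline{\Phi}_k$ is identical once one checks that $\Phi_k$ shares the same formal properties as $\varphi$.

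First I would record the structural properties of $\varphi$. Reading off its definition, $\varphi$ is $\mathbb{F}_{p^r}$-linear: scaling $a$ by $c\in\mathbb{F}_{p^r}$ scales every coefficient $\alpha_S$, and the defining partial sums are additive in the $\alpha_S$. Moreover $\varphi$ is multiplicative, $\varphi(ab)=\varphi(a)\varphi(b)$ with the product taken coordinate-wise, as recorded in Section~\ref{Macwill}; together with Theorem~\ref{crt} this exhibits $\varphi$ as a ring isomorphism $B_k\cong\mathbb{F}_{p^r}^{2^k}$. Consequently $\overline{\varphi}$ is an $\mathbb{F}_{p^r}$-linear bijection, so $\overline{\varphi}(C)$ is a linear code with $|\overline{\varphi}(C)|=|C|$. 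Since $C$ is self-dual we have $|C|^2=|B_k|^n=p^{r2^kn}$, whence $|C|=p^{rn2^{k-1}}=(p^r)^{n2^k/2}$, which is precisely the cardinality forced on a self-dual code of length $n2^k$ over $\mathbb{F}_{p^r}$.

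The heart of the argument is the compatibility of the two inner products. For $\mathbf{u}=(u_1,\dots,u_n)$ and $\mathbf{u}'=(u_1',\dots,u_n')$ in $B_k^n$, writing $\langle\,,\rangle$ for the standard Euclidean form on $\mathbb{F}_{p^r}^{n2^k}$, the multiplicativity and additivity of $\varphi$ give
\[
\langle \overline{\varphi}(\mathbf{u}), \overline{\varphi}(\mathbf{u}') \rangle
= \sum_{i=1}^n \sum_{j=1}^{2^k} \varphi(u_i)_j\,\varphi(u_i')_j
= \sum_{j=1}^{2^k} \varphi\!\left( \sum_{i=1}^n u_i u_i' \right)_{\!j}
= \sum_{j=1}^{2^k} \varphi([\mathbf{u},\mathbf{u}'])_j .
\]
Because $C=C^\bot$, every pair $\mathbf{u},\mathbf{u}'\in C$ satisfies $[\mathbf{u},\mathbf{u}']=0$, so the right-hand side equals $\sum_j\varphi(0)_j=0$. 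Hence $\overline{\varphi}(C)\subseteq\overline{\varphi}(C)^\bot$. Combining this self-orthogonality with $|\overline{\varphi}(C)|=(p^r)^{n2^k/2}$ and the field relation $|\overline{\varphi}(C)|\,|\overline{\varphi}(C)^\bot|=(p^r)^{n2^k}$ forces $\overline{\varphi}(C)=\overline{\varphi}(C)^\bot$, i.e.\ the image is Euclidean self-dual.

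For $\overline{\Phi}_k$ the same three steps apply verbatim once one checks that each $\phi_j$, and hence $\Phi_k$, is an $\mathbb{F}_{p^r}$-linear bijective ring homomorphism; a direct expansion of $\phi_j\big((\alpha_1+\beta_1 v_j)(\alpha_2+\beta_2 v_j)\big)$ using $v_j^2=v_j$ shows it equals $\phi_j(\alpha_1+\beta_1 v_j)\,\phi_j(\alpha_2+\beta_2 v_j)$ coordinate-wise, so $\Phi_k(ab)=\Phi_k(a)\Phi_k(b)$ and the displayed inner-product identity holds with $\overline{\Phi}_k$ in place of $\overline{\varphi}$. I expect the main obstacle to be precisely this inner-product identity: one must be careful that the coordinate-wise product supplied by multiplicativity is exactly what the Euclidean form on $\mathbb{F}_{p^r}^{n2^k}$ pairs together, and that summing over the $2^k$ coordinates of the image returns the image of the single ring element $[\mathbf{u},\mathbf{u}']$; the remaining linearity and counting steps are then routine.
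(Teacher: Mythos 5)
Your proof is correct, and it is genuinely more than a fleshed-out version of the paper's argument, because the paper disposes of this theorem in a single line: the claim ``follows from the fact that $\overline{\varphi}$ and $\Phi_k$ are linear maps.'' As your proposal implicitly demonstrates, linearity alone cannot carry the proof: an arbitrary $\mathbb{F}_{p^r}$-linear bijection of $\mathbb{F}_{p^r}^{n2^k}$ need not preserve orthogonality, so it need not send self-dual codes to self-dual codes. The two ingredients you isolate are exactly what is missing from the one-liner: (i) multiplicativity of the Gray maps --- $\varphi$, and likewise each $\phi_j$ and hence $\Phi_k$, is a ring isomorphism onto a coordinatewise product of copies of $\mathbb{F}_{p^r}$ --- which yields the identity $\langle\overline{\varphi}(\mathbf{u}),\overline{\varphi}(\mathbf{u}')\rangle=\sum_{j=1}^{2^k}\varphi([\mathbf{u},\mathbf{u}'])_j$ and hence self-orthogonality of the image; and (ii) the counting step $|\overline{\varphi}(C)|=|C|=(p^r)^{n2^k/2}$, coming from the Frobenius-ring relation $|C||C^\bot|=|B_k|^n$ quoted in Section 3, which upgrades self-orthogonality to self-duality over the field. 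Your verification of multiplicativity for $\phi_j$ via $v_j^2=v_j$ is also correct, so the transfer to $\overline{\Phi}_k$ goes through. For comparison, a route closer to the paper's toolkit would be to invoke Theorem~\ref{chareuc}: a Euclidean self-dual $C$ equals $\crt(C_1,\dots,C_{2^k})$ with each $C_i$ self-dual over $\mathbb{F}_{p^r}$, and since the coordinates of $\varphi$ are precisely the reductions modulo the maximal ideals, $\overline{\varphi}(C)$ is, up to a coordinate permutation, the direct product $C_1\times\cdots\times C_{2^k}$, which is self-dual. That argument buys brevity by reusing the CRT machinery; yours buys self-containment, and it adapts verbatim to $\overline{\Phi}_k$, for which the identification with the CRT projections is less immediate.
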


\begin{proof}
Follows from the fact that $\overline{\varphi}$ and $\Phi_k$ are linear maps.
\end{proof}

\begin{definition}
An Euclidean self-dual code is said to be \emph{Type II} if and only if the weights of every codewords,
i.e. the element in the code, are divisible by $4.$
\end{definition}

Regarding the Type II codes, we have the following theorem.

\begin{theorem}
If $C$ is a Type II Euclidean self-dual code then $C=\crt(C_1,\ldots,C_{2^k})$ and
each $C_i$ is a Type II code over $\mathbb{F}_{p^r}.$
\end{theorem}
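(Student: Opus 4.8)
The plan is to split the statement into its two component claims and dispatch each using results already established. The structural part — that $C=\crt(C_1,\ldots,C_{2^k})$ with each $C_i$ an Euclidean self-dual code over $\mathbb{F}_{p^r}$ — is immediate from Theorem~\ref{chareuc}, since a Type II code is by definition Euclidean self-dual. Hence the genuine content is to upgrade ``each $C_i$ is Euclidean self-dual'' to ``each $C_i$ is Type II,'' that is, to show every codeword of every $C_i$ has $\wtgr$ divisible by $4$.

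The central step is to record how the weight $\wtgr$ on $B_k^n$ decomposes across the CRT factors. First I would verify that the Gray map $\varphi$ and the CRT map $\Theta$ agree up to a permutation of coordinates. Reducing $\alpha=\sum_{S}\alpha_S v_S$ modulo the maximal ideal $\mathcal{I}_i=\langle w_1,\ldots,w_k\rangle$ amounts to setting $v_j=0$ or $v_j=1$ according as $w_j=v_j$ or $w_j=1-v_j$; if $T_i\subseteq\Omega$ records the indices sent to $1$, then $\Theta_i(\alpha)=\sum_{S\subseteq T_i}\alpha_S$, which is precisely the coordinate $\sum_{S\subseteq S_i}\alpha_S$ of $\varphi(\alpha)$ once the enumeration $S_1,\ldots,S_{2^k}$ of $2^\Omega$ is matched with the maximal ideals. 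Since $\wtgr$ is a sum over coordinates and is invariant under permuting them, this gives $\wtgr(\alpha)=\sum_{i=1}^{2^k}\wtg\big(\Theta_i(\alpha)\big)$, and applying it coordinatewise through $\overline{\varphi}$ yields, for any $\mathbf{c}\in B_k^n$ with CRT components $x_i=\Theta_i(\mathbf{c})\in\mathbb{F}_{p^r}^n$,
\[
\wtgr(\mathbf{c})=\sum_{i=1}^{2^k}\wtgr(x_i).
\]

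With this additivity in hand, the Type II claim follows by an isolation argument. Fix an index $i$ and a codeword $x\in C_i$. Because each $C_j$ is linear it contains $\mathbf{0}$, so the vector $\mathbf{c}:=\Theta^{-1}(\mathbf{0},\ldots,\mathbf{0},x,\mathbf{0},\ldots,\mathbf{0})$, with $x$ placed in the $i$-th slot, lies in $\crt(C_1,\ldots,C_{2^k})=C$. By the weight decomposition all summands except the $i$-th vanish, so $\wtgr(\mathbf{c})=\wtgr(x)$; since $C$ is Type II we conclude $\wtgr(x)\equiv 0\pmod 4$. As $i$ and $x$ were arbitrary, every $C_i$ has all its weights divisible by $4$ and is therefore Type II.

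The step I expect to be the main obstacle is the weight-decomposition identity: everything else is either a direct appeal to Theorem~\ref{chareuc} or the elementary isolation trick. The subtlety is essentially bookkeeping — confirming that $\varphi$ and $\Theta$ are the same map up to reindexing coordinates, so that the single-symbol weight $\wtgr$ on $B_k$ genuinely splits as the sum of the $\mathbb{F}_{p^r}$-weights of the CRT components. Once that compatibility is pinned down, additivity of $\wtgr$ over the CRT decomposition is automatic and the theorem follows.
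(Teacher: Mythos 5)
Your proof is correct and is essentially the paper's own argument: the paper's one-line proof (``this follows from the definition of weight for codewords over $B_k$'') is precisely your observation that $\wtgr$ is defined through the Gray map, which coincides with the CRT map $\Theta$ up to a permutation of coordinates, so that weights add across the CRT components; combined with Theorem~\ref{chareuc} and the isolation trick (placing $x\in C_i$ in one slot and $\mathbf{0}$ elsewhere), this forces $4\mid\wtgr(x)$. You have simply made explicit the bookkeeping that the paper leaves implicit.
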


\begin{proof}
This follows from the definition of weight for codewords over $B_k.$
\end{proof}

\subsection{Hermitian Self-Dual Codes}

The following theorem gives two Hermitian self-dual codes of length 1 over $B_k.$

\begin{theorem}
The code $I=\langle v_i\rangle$ and $J=\langle 1-v_i\rangle$ are Hermitian self-dual codes of length 1.
\label{herm}
\end{theorem}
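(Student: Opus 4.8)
The plan is to deduce the Hermitian self-duality of these two length-$1$ codes from a short computation of their Euclidean duals, exploiting that $v_i$ and $1-v_i$ are complementary idempotents. First I would record $v_i^2=v_i$, $(1-v_i)^2=1-v_i$, and $v_i(1-v_i)=0$, giving the direct sum $B_k=B_kv_i\oplus B_k(1-v_i)$, so that $I=\langle v_i\rangle=B_kv_i$ and $J=\langle 1-v_i\rangle=B_k(1-v_i)$. Via the isomorphism of Theorem~\ref{crt}, $v_i$ equals $1$ in exactly $2^{k-1}$ of the $2^k$ coordinates and $0$ in the rest, whence $|I|=|J|=(p^r)^{2^{k-1}}$ and $|I|^2=|J|^2=|B_k|$; this is precisely the cardinality a Hermitian self-dual length-$1$ code must have, and it serves as a consistency check against $|C|\,|C^H|=|B_k|$.

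The key step is to identify the Euclidean duals. As $I$ is generated by the idempotent $v_i$, its Euclidean dual is the annihilator $\operatorname{Ann}(v_i)=\{u\in B_k:uv_i=0\}$; writing $u=uv_i+u(1-v_i)$ shows $uv_i=0$ iff $u\in B_k(1-v_i)$, so $\langle v_i\rangle^\bot=\langle 1-v_i\rangle$. By involutivity of the Euclidean dual over the Frobenius ring $B_k$ (equivalently, the same annihilator argument applied to $1-v_i$), also $\langle 1-v_i\rangle^\bot=\langle v_i\rangle$. This matches, on these two generators, the general description furnished by Proposition~\ref{idealdual}.

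Finally I would pass to the Hermitian dual using Proposition~\ref{idealdual}, which expresses the Hermitian dual of an ideal as $\langle\overline g\rangle$, where $\langle g\rangle$ is its Euclidean dual. With $g=1-v_i$ and $g=v_i$ this gives
\[
I^H=\langle v_i\rangle^H=\langle \overline{1-v_i}\rangle=\langle v_i\rangle=I,\qquad
J^H=\langle 1-v_i\rangle^H=\langle \overline{v_i}\rangle=\langle 1-v_i\rangle=J,
\]
exactly as in the Remark after Proposition~\ref{nosde}, since the conjugation swaps the two idempotents: $\overline{v_i}=1-v_i$ and $\overline{1-v_i}=v_i$.

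The part needing the most care is the conjugation step. The bar $\overline{x}=1-x$ is affine rather than a ring map, so one must not read $C^H$ as $\{u:u\overline w=0\ \forall w\in C\}$ with $\overline w$ taken over all $w$ (with $0\in C$ this would collapse $C^H$ to $\{0\}$). I would therefore invoke the Hermitian formula of Proposition~\ref{idealdual} directly, and observe that on the only elements that appear, $v_i$ and $1-v_i$, the bar acts unambiguously as the swap $v_i\leftrightarrow 1-v_i$---here the affine map agrees with the idempotent-exchanging automorphism---so no ambiguity survives. I would also avoid substituting $1-v_i$ into the full inclusion--exclusion generator of Proposition~\ref{idealdual}, whose nonzero constant term makes that expression degenerate, and instead take $\langle 1-v_i\rangle^\bot$ from the annihilator argument above.
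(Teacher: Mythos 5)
Your proof is correct, and its final step coincides with the paper's entire proof, which reads: apply Proposition~\ref{idealdual} together with $\overline{v_i}=1-v_i$ and $\overline{1-v_i}=v_i$. The extra care you take is not redundant, however; it patches two genuine defects that the paper's one-line argument glides over. First, for $J=\langle 1-v_i\rangle$ the generator has a nonzero constant term, and feeding $1-v_i=v_\emptyset-v_{\{i\}}$ into the inclusion--exclusion generator of Proposition~\ref{idealdual} yields
\[
1+\bigl((-1)v_\emptyset+(-1)v_{\{i\}}+(-1)(-1)v_{\emptyset\cup\{i\}}\bigr)=1-1-v_i+v_i=0,
\]
so that formula would return $J^\bot=\langle 0\rangle$, contradicting $|J|\,|J^\bot|=|B_k|$ (you computed $|J|=(p^r)^{2^{k-1}}$). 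Your annihilator argument ($uv_i=0$ iff $u\in B_k(1-v_i)$, and symmetrically) supplies the correct Euclidean duals $\langle v_i\rangle^\bot=\langle 1-v_i\rangle$ and $\langle 1-v_i\rangle^\bot=\langle v_i\rangle$ independently of that degenerate formula; only then do you invoke the Hermitian half of Proposition~\ref{idealdual} (Hermitian dual equals the conjugate of the Euclidean dual) together with the idempotent swap. Second, you are right that the paper's literal definition $\overline{x}=1-x$ cannot be meant element-wise on all of $B_k$, since that would force $C^H=\{\mathbf{0}\}$ for every code containing $\mathbf{0}$; the intended bar is the ring automorphism exchanging $v_i$ and $1-v_i$, and on the elements occurring here the two readings agree, as you note. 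In short, you follow the paper's route, but yours is the version that actually closes the case of $J$. One further streamlining is available to you: since the bar is an involutive ring automorphism compatible with the inner products, $C^H=(\overline{C})^\bot$ for any code $C$, so $I^H=J^\bot=I$ and $J^H=I^\bot=J$ follow at once from your two annihilator computations, with no appeal to Proposition~\ref{idealdual} at all.
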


\begin{proof}
Apply Proposition~\ref{idealdual} and the fact that $\overline{v_i}=1-v_i$ and $\overline{1-v_i}=v_i.$
\end{proof}

As a direct consequence, we know the existence of Hermitian self-dual codes for
all lengths.

\begin{cor}
Hermitian self-dual codes over $B_k$ exist for all lengths.
\end{cor}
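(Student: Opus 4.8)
The plan is to manufacture a Hermitian self-dual code of any prescribed length $n$ by taking repeated copies of a length-$1$ Hermitian self-dual code supplied by Theorem~\ref{herm}. Concretely, I would fix the length-$1$ code $D=\langle v_i\rangle$ (for any single index $i$), which Theorem~\ref{herm} asserts satisfies $D=D^H$, and form the $n$-fold Cartesian product
\[
C=D^n=\{(c_1,\ldots,c_n)\in B_k^n:~c_j\in D \text{ for all } j\}.
\]
Since $D$ is an ideal of $B_k$, it is a $B_k$-submodule of $B_k$, and therefore $C$ is a $B_k$-submodule of $B_k^n$, i.e.\ a linear code of length $n$.

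First I would verify the self-orthogonality inclusion $C\seq C^H$. For any $\mathbf{u}=(u_1,\ldots,u_n)$ and $\mathbf{u}'=(u_1',\ldots,u_n')$ in $C$, the Hermitian inner product is the coordinatewise sum $[\mathbf{u},\mathbf{u}']_H=\sum_{j=1}^n u_j\overline{u_j'}$. Because each $u_j,u_j'\in D$ and $D=D^H$, the very definition of the length-$1$ Hermitian dual forces $u_j\overline{u_j'}=0$ for every $j$; hence the whole sum vanishes and $\mathbf{u}\in C^H$.

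Next I would upgrade this inclusion to an equality by a cardinality count. Since $D=D^H$ and $B_k$ is Frobenius, the relation $|D|\,|D^H|=|B_k|$ recorded in Section~\ref{Macwill} gives $|D|^2=|B_k|$, whence $|C|=|D|^n=|B_k|^{n/2}$. Applying the analogous identity $|C|\,|C^H|=|B_k|^n$ for length $n$ then yields $|C^H|=|B_k|^{n/2}=|C|$. Combined with $C\seq C^H$, this forces $C=C^H$, so $C$ is a Hermitian self-dual code of length $n$. As $n$ was arbitrary, Hermitian self-dual codes over $B_k$ exist for every length.

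The argument presents no genuine obstacle; the only points needing a moment's care are the observation that the Hermitian inner product of a Cartesian product decouples across coordinates (which makes the self-orthogonality step immediate from $D=D^H$), and the bookkeeping that $|D|$ is exactly $|B_k|^{1/2}$, so that the product code lands precisely at the self-dual cardinality rather than being merely self-orthogonal.
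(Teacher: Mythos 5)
Your proof is correct and follows essentially the same route as the paper, whose entire argument is ``take the direct products of codes in Theorem~\ref{herm}.'' You have simply filled in the details the paper leaves implicit: the coordinatewise vanishing of the Hermitian inner product giving $C\seq C^H$, and the Frobenius cardinality relation $|C|\,|C^H|=|B_k|^n$ upgrading the inclusion to equality.
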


\begin{proof}
Take the direct products of codes in Theorem~\ref{herm}.
\end{proof}

The image of a Hermitian self-dual code need not to be self-dual. Consider the following example.

\begin{ex}
Let $p=2, r=2,$ and $k=1.$ Take $I=\langle 1-v \rangle.$ As we can see, $\varphi(1-v)=(1\;0),$
which is neither Euclidean [Hermitian] self-orthogonal nor Euclidean [Hermitian] self-dual.
\end{ex}

By similar point of view as in \cite{dougherty-ceng}, we have that $B_j$ is isomorphic to $B_{j-1}^2$
(as a convention, $B_0=\mathbb{F}_{p^r}$) via the Chinese Remainder Theorem, for any $j\geq 1.$
As a consequence, if $C$ is a Hermitian self-dual code over $B_j,$ then $C$ is isomorphic to $D\times D^\bot,$
where $D$ is any code over $B_{k-1}.$ Then, we have the following theorem.

\begin{theorem}
If $C$ is a Hermitian self-dual code over $B_k,$ then, with the proper arrangement of indices, $C$ is isomorphic to
\[
C_1\times C_1^\bot\times C_2\times C_2^\bot\times\cdots\times C_{2^{k-1}}\times C_{2^{k-1}}^\bot,
\]
where $C_i$ is any linear code over $\mathbb{F}_{p^r}.$
\label{hermisom}
\end{theorem}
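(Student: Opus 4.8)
The plan is to reduce the statement to the already-recorded splitting of a Hermitian self-dual code and then to run that splitting all the way down to $\FF_{p^r}$ by the full Chinese Remainder decomposition of Theorem~\ref{crt}. Since $C$ is Hermitian self-dual over $B_k$, the discussion immediately preceding the theorem produces a code $D$ over $B_{k-1}$ with $C\cong D\times D^\bot$, where $D^\bot$ is the \emph{Euclidean} dual. It therefore suffices to split the single factor $D$ (and, compatibly, $D^\bot$) into its $\FF_{p^r}$-components and then to regroup the resulting components into dual pairs.

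First I would apply Theorem~\ref{crt} to $B_{k-1}$, writing $B_{k-1}\cong\FF_{p^r}^{2^{k-1}}$ and hence $D\cong\crt(C_1,\dots,C_{2^{k-1}})$ for suitable linear codes $C_i$ over $\FF_{p^r}$, where $C_i$ is the image of $D$ under the $i$-th coordinate projection $\Theta_i$. Next I would show that the Euclidean dual commutes with this decomposition, i.e. that $D^\bot\cong\crt(C_1^\bot,\dots,C_{2^{k-1}}^\bot)$; this is exactly the mechanism already used in the proof of Theorem~\ref{chareuc} (via \cite[Theorem 2.3]{dougherty}) and rests only on the fact that $\Theta^{-1}$ is a ring isomorphism, so it carries the coordinatewise Euclidean form on $\FF_{p^r}^{2^{k-1}}$ to the Euclidean form on $B_{k-1}$ and thus sends orthogonal complements to products of orthogonal complements. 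Finally I would combine the two displays to get $C\cong\crt(C_1,\dots,C_{2^{k-1}})\times\crt(C_1^\bot,\dots,C_{2^{k-1}}^\bot)$ and then permute the $2^{k-1}+2^{k-1}=2^k$ field-coordinates so that each $C_i$ sits next to $C_i^\bot$; this ``proper arrangement of indices'' yields the asserted form $C_1\times C_1^\bot\times\cdots\times C_{2^{k-1}}\times C_{2^{k-1}}^\bot$.

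I expect the main obstacle to be precisely this commutation of Euclidean duality with the CRT splitting together with the accompanying index bookkeeping: one must verify that the orthogonal complement decomposes factorwise, so that no cross terms between distinct coordinates survive, and that exactly $2^{k-1}$ dual pairs arise. An alternative, self-contained route avoids the auxiliary factor $D$ altogether: apply Theorem~\ref{crt} directly to $B_k\cong\FF_{p^r}^{2^k}$, index the $2^k$ coordinates by the choices $w_i\in\{v_i,1-v_i\}$ of Proposition~\ref{formmaxideal}, and use that $\overline{v_i}=1-v_i$ and $\overline{1-v_i}=v_i$ to conclude that conjugation interchanges, at each position $i$, the two maximal ideals, hence acts on the coordinate set as the fixed-point-free involution sending an index $\epsilon$ to its complement $\overline{\epsilon}$. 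Writing $C\cong\prod_\epsilon C_\epsilon$, the Hermitian form then pairs the coordinate $\epsilon$ with $\overline{\epsilon}$, so that $C=C^H$ forces $C_{\overline{\epsilon}}=C_\epsilon^\bot$ for every $\epsilon$; selecting one representative from each of the $2^{k-1}$ complementary pairs gives the product directly. In either approach the substantive point is the same---identifying how conjugation permutes the CRT coordinates---and the remainder is routine.
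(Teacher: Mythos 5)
Your main argument is essentially the paper's own proof, which reads in full ``use the above fact inductively and rearrange the images'': split $C\cong D\times D^\bot$ using $B_k\cong B_{k-1}^2$, push $D$ and $D^\bot$ down to their $\FF_{p^r}$-components via Theorem~\ref{crt}, using that Euclidean duality commutes with the CRT splitting (the same mechanism, from \cite{dougherty}, behind Theorem~\ref{chareuc}), and then pair up indices --- you have simply made explicit the duality-commutes-with-CRT step that the paper leaves implicit. Your alternative route --- conjugation permutes the $2^k$ CRT coordinates of $B_k$ by the fixed-point-free involution $\epsilon\mapsto\overline{\epsilon}$, so $C=C^H$ forces $C_{\overline{\epsilon}}=C_\epsilon^\bot$ on each of the $2^{k-1}$ complementary pairs --- is also correct and arguably cleaner, since it avoids the induction altogether; the only caveat is that it requires interpreting the paper's conjugation as the ring automorphism determined by $\overline{v_i}=1-v_i$ and $\overline{1-v_i}=v_i$ (as used in the proof of Theorem~\ref{herm}), rather than the literal formula $\overline{x}=1-x$ given where the Hermitian product is first defined.
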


\begin{proof}
Use the above fact inductively and rearrange the images.
\end{proof}

\begin{theorem}
If $C$ is a Hermitian self-dual code of length $n$ over $B_k,$ then $\Phi_k(C)$ is
a formally self-dual code of length $2^kn$ over $\mathbb{F}_{p^r}$ with respect to the Hamming weight.
\label{fsd}
\end{theorem}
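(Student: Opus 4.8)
The plan is to exploit the product structure of $C$ furnished by Theorem~\ref{hermisom} together with the fact that the Gray map $\Phi_k$ is essentially the Chinese Remainder isomorphism of Theorem~\ref{crt}. First I would observe that each $\phi_j$ is in fact a ring isomorphism $B_j\to B_{j-1}^2$: additivity is immediate, and for $\gamma=\alpha+\beta v_j$, $\gamma'=\alpha'+\beta'v_j$ one computes $\phi_j(\gamma\gamma')=(\alpha\alpha',(\alpha+\beta)(\alpha'+\beta'))=\phi_j(\gamma)\phi_j(\gamma')$ using $v_j^2=v_j$. Hence $\Phi_k$ agrees, up to a fixed ordering of the $2^k$ maximal ideals, with the componentwise reduction $\beta\mapsto(\Theta_1(\beta),\dots,\Theta_{2^k}(\beta))$ under the identifications $B_k/\mathcal{I}_j\cong\mathbb{F}_{p^r}$. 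Consequently, writing a codeword of $\overline{\Phi}_k(C)\subseteq\mathbb{F}_{p^r}^{n2^k}$ with coordinates indexed by pairs $(i,j)$, $i\in[1,n]$, $j\in[1,2^k]$, its $(i,j)$ entry is $\Theta_j(c_i)$.

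Next I would regroup these coordinates by the second index $j$. For each fixed $j$ the projection $\mathbf{c}\mapsto(\Theta_j(c_1),\dots,\Theta_j(c_n))$ ranges over a linear code $D_j\subseteq\mathbb{F}_{p^r}^n$, and by Theorem~\ref{crt} one has $C=\crt(D_1,\dots,D_{2^k})$. Under the product structure of $\mathbb{F}_{p^r}^{n2^k}$ this means that, after the single codeword-independent permutation sending the index $(i,j)$ to the block-$j$ position, $\overline{\Phi}_k(C)$ is exactly the direct product $D_1\times D_2\times\cdots\times D_{2^k}$. Since Hamming weight is invariant under permutation of coordinates, it suffices to prove that this product code is formally self-dual. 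By Theorem~\ref{hermisom}, with the proper arrangement of indices the multiset $\{D_1,\dots,D_{2^k}\}$ equals $\{C_1,C_1^\bot,C_2,C_2^\bot,\dots,C_{2^{k-1}},C_{2^{k-1}}^\bot\}$ for suitable linear codes $C_i$ over $\mathbb{F}_{p^r}$.

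Finally I would invoke two standard facts for codes over a field: the Hamming weight enumerator of a direct product is the product of the factors' enumerators, and the dual of a direct product is the product of the duals, $(D_1\times\cdots\times D_{2^k})^\bot=D_1^\bot\times\cdots\times D_{2^k}^\bot$. Because the multiset $\{D_j\}$ is closed under duality (each pair $C_i,C_i^\bot$ is carried to the pair $C_i^\bot,C_i$), the multiset $\{D_j^\bot\}$ is a permutation of $\{D_j\}$, so
\[
W_{(\overline{\Phi}_k(C))^\bot}(x,y)=\prod_{j=1}^{2^k}W_{D_j^\bot}(x,y)=\prod_{j=1}^{2^k}W_{D_j}(x,y)=W_{\overline{\Phi}_k(C)}(x,y),
\]
where $W$ denotes the Hamming weight enumerator in two variables. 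This is precisely formal self-duality with respect to the Hamming weight, and the length is visibly $2^kn$, as claimed.

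The step I expect to be the main obstacle is the bookkeeping in the second paragraph: one must verify carefully that $\overline{\Phi}_k$ really carries the code $\crt(D_1,\dots,D_{2^k})$ onto the direct product $D_1\times\cdots\times D_{2^k}$ via one and the same permutation of the $n2^k$ coordinates for every codeword, rather than a codeword-dependent rearrangement. Everything after this identification is routine. A secondary point worth stating explicitly is that $\overline{\Phi}_k$ need not be a Hamming isometry on $B_k^n$; this is harmless here, since the Hamming weight enumerator of the image is computed directly from its product structure over $\mathbb{F}_{p^r}$ and not transported from any weight on $B_k$.
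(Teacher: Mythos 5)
Your argument is correct, but it takes a genuinely different route from the paper's. The paper stays at the level of duality theory over $B_k$: its one-sentence proof combines the fact that $\Phi_k$ is distance preserving with the MacWilliams relations for codes over $B_k$ from Section~\ref{Macwill}, so that Hermitian self-duality of $C$ forces the Hamming weight enumerator of $\Phi_k(C)$ to satisfy the MacWilliams identity. You instead exploit the multiplicative structure of $\Phi_k$: after checking that each $\phi_j$ is a ring isomorphism, you identify $\overline{\Phi}_k$ with the coordinatewise CRT map, so that $\overline{\Phi}_k(C)$ is, up to one fixed coordinate permutation, the direct product $D_1\times\cdots\times D_{2^k}$ of the residue codes $D_j=\Theta_j(C)$; Theorem~\ref{hermisom} makes this multiset of factors closed under Euclidean duality, and elementary product-code identities give $W_{(\overline{\Phi}_k(C))^\bot}=W_{\overline{\Phi}_k(C)}$, which is formal self-duality by definition. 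Each approach buys something: the paper's is shorter given the character-theoretic machinery already built in Section~\ref{Macwill}, and it extends without change to complete and symmetrized enumerators; yours is elementary (everything happens over $\FF_{p^r}$) and makes explicit the point the paper glosses over, namely how \emph{Hermitian} self-duality over $B_k$ turns into \emph{Euclidean} formal self-duality over the field, through the pairing $C_i, C_i^\bot$ of Theorem~\ref{hermisom}. One small repair you should make: the equality $C=\crt(D_1,\dots,D_{2^k})$ is not literally Theorem~\ref{crt}, which concerns the ring rather than codes; it needs the standard idempotent argument---writing $1=e_1+\cdots+e_{2^k}$ as a sum of the orthogonal primitive idempotents of $B_k$, the $B_k$-linearity of $C$ gives $C=\bigoplus_j e_jC$, so the image of $C$ is the full direct product and not merely a subdirect product. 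This is precisely where linearity over $B_k$ (as opposed to $\FF_{p^r}$-linearity) enters, and it is the same fact the paper uses implicitly whenever it writes a linear code as $\crt(C_1,\dots,C_{2^k})$.
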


\begin{proof}
From the facts that $\Phi_k$ is a distance preserving map and the Hamming weight enumerator for
codes over $B_k$ satisfies the MacWilliams relation as in \cite{irw16} (see Section \ref{Macwill}),
we have the Hamming weight enumerator for $\Phi_k(C)$ also satisfies the MacWilliams relation.
\end{proof}

Then, we have the following construction for formally self-dual codes over $\mathbb{F}_{p^r}$ with respect to the Hamming weight.

\begin{cor}
If $C_1,C_2,\dots,C_{2^{k-1}}$ are arbitrary codes over $\mathbb{F}_{p^r},$ then
\[
\Phi_k(\crt(C_1,C_1^\bot,\ldots,C_{2^{k-1}},C_{2^{k-1}}^\bot))
\]
(given the right ordering of indices) is a formally self-dual code of length $n$ over $\mathbb{F}_{p^r}$
with respect to the Hamming weight.
\end{cor}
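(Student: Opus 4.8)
The plan is to recognize the displayed code as a Hermitian self-dual code over $B_k$ and then apply Theorem~\ref{fsd}. Let each $C_i$ have a common length $\ell$ over $\mathbb{F}_{p^r}$ and set
\[
C := \crt(C_1,C_1^\bot,\dots,C_{2^{k-1}},C_{2^{k-1}}^\bot),
\]
so that, by the definition of $\crt$ as the $\Theta$-preimage, $\Theta(C)=C_1\times C_1^\bot\times\cdots\times C_{2^{k-1}}\times C_{2^{k-1}}^\bot$. First I would show that $C$ is Hermitian self-dual. Theorem~\ref{hermisom} already asserts that every Hermitian self-dual code over $B_k$ is isomorphic to a product of exactly this shape, with the $C_i$ arbitrary; what the corollary needs is the reverse direction of this parametrization, that every such product is itself Hermitian self-dual.

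To see the reverse direction I would trace the Hermitian conjugation through the CRT splitting. The Hermitian and Euclidean forms are related by $[\mathbf{u},\mathbf{u}']_H=[\mathbf{u},\overline{\mathbf{u}'}]$, so computing $C^H$ amounts to computing a Euclidean dual after feeding in the conjugates. Now the $2^k$ CRT factors are indexed by the maximal ideals $\mathcal{I}_1,\dots,\mathcal{I}_{2^k}$, i.e.\ by the choices $w_j\in\{v_j,1-v_j\}$, and since conjugation interchanges $v_j$ with $1-v_j$ it induces a fixed-point-free involution on this index set. Under the ordering of indices singled out in the statement, this involution is precisely the one that pairs the slot carrying $C_i$ with the slot carrying $C_i^\bot$. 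Combining this with the factorwise behaviour of the Euclidean form under $\Theta$ (the same compatibility used in Theorem~\ref{chareuc}) returns $(C_i^\bot)^\bot=C_i$ in one slot of each pair and $C_i^\bot$ in the other, so $\Theta(C^H)=\Theta(C)$ and hence $C^H=C$.

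Once $C$ is known to be Hermitian self-dual of length $\ell$ over $B_k$, Theorem~\ref{fsd} applies verbatim and yields that $\Phi_k(C)$ is a formally self-dual code over $\mathbb{F}_{p^r}$ with respect to the Hamming weight, of length $2^k\ell$ (the $n$ of the statement). I expect the crux to be the middle paragraph: making the involution induced by conjugation on the CRT factors explicit and verifying that, for the correct identification of maximal ideals with subsets of $\Omega$, it is exactly the pairing sending $C_i$ to $C_i^\bot$. This is where the qualifier ``given the right ordering of indices'' carries the weight, since a mismatched identification would break the matching of $C_i$ with $C_i^\bot$; with that involution settled, the statement follows formally from Theorems~\ref{hermisom} and~\ref{fsd}.
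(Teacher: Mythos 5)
Your proposal is correct and takes essentially the same route as the paper, whose entire proof reads ``Apply Theorem~\ref{hermisom} and Theorem~\ref{fsd}.'' The only difference is that you explicitly supply the converse direction of Theorem~\ref{hermisom} --- that $\crt(C_1,C_1^\bot,\ldots,C_{2^{k-1}},C_{2^{k-1}}^\bot)$, with conjugate CRT slots paired, is indeed Hermitian self-dual --- a step the paper leaves implicit, and your conjugation-involution argument for it is sound.
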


\begin{proof}
Apply Theorem~\ref{hermisom} and Theorem~\ref{fsd}.
\end{proof}

\section{Singleton-type bounds}

The {\it rank} of a code $C$ is defined as the minimum number of generators of $C,$ and
the {\it free rank} of $C$ is defined as the maximum of the ranks of the free $B_k$-submodule of $C.$

The Singleton bound states that a code $C$ of length $n$ over an
alphabet $A$ satisfies $d_H(C)\leq n-\log_{|A|}(|C|)+1,$ where $d_H(C)$ denotes the Hamming distance of a code $C.$
A code attaining this bound is called \emph{MDS code.}
Meanwhile, it is show in \cite{shiro} that a code $C$ of length $n$ over
a principal ideal ring satisfies $d_H(C)\leq n-r+1,$
where $r$ is the rank of $C.$ A code attaining this bound is called \emph{MDR code.}
Notice that, if $C$ is an MDR and free code, then $C$ is an MDS code,
since the rank and the free rank of $C$ coincide.

Let $C$ be a code over $B_k$ with $C=\crt(C_1,\ldots,C_{2^k}),$ where $C_i$ is a code over $\mathbb{F}_{p^r}.$
As proved in \cite{dougherty2}, we have that

\begin{equation}
|C|=\prod_{i=1}^{2^k}|C_i|,
\end{equation}

\begin{equation}
\rk(C)=\max\{\rk(C_i):~1\leq i\leq 2^k\},
\label{rank}
\end{equation}

\begin{equation}
d_H(\crt(C_1,\ldots,C_{2^k}))=\min\{d_H(C_i):~1\leq i\leq 2^k\},
\label{mindis}
\end{equation}
and $C$ is a free code if and only if each $C_i$ is a free code of the same rank.
Moreover, using the above facts, as stated in \cite[Theorem 6.3]{dougherty2}
we have that if $C_i$ is an MDR code for each $i,$ then $C=\crt(C_1,\ldots,C_{2^k})$ is also an MDR code,
and if $C_i$ is an MDS code of the same rank for each $i,$ then $C=\crt(C_1,\ldots,C_{2^k})$ is also an MDS code.

The following theorem gives a construction for MDS Euclidean self-dual codes over $B_k.$

\begin{theorem}
If $C_1,\dots,C_{2^k}$ are MDS Euclidean self-dual codes over $\mathbb{F}_{p^r},$
with the same rank, then $C=\crt(C_1,\ldots,C_{2^k})$ is an MDS Euclidean self-dual code over $B_k.$
\end{theorem}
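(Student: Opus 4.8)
The plan is to decompose the claim into its two constituent properties—Euclidean self-duality and the MDS property—and verify each using results already assembled in the excerpt, then recombine.

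First I would establish self-duality. Since each $C_i$ is an Euclidean self-dual code over $\mathbb{F}_{p^r}$ and $C=\crt(C_1,\ldots,C_{2^k})$ by hypothesis, Theorem~\ref{chareuc} applies directly in its ``if'' direction: a CRT product whose components are all Euclidean self-dual over $\mathbb{F}_{p^r}$ is itself Euclidean self-dual over $B_k.$ This yields $C=C^\bot$ with no further work.

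Next I would establish that $C$ is MDS. Here I would invoke the facts imported from \cite{dougherty2} and recalled just above the statement, namely $|C|=\prod_{i=1}^{2^k}|C_i|,$ $\rk(C)=\max_i\rk(C_i),$ and $d_H(C)=\min_i d_H(C_i),$ together with the conclusion of \cite[Theorem 6.3]{dougherty2}: if each $C_i$ is an MDS code of the same rank, then $C=\crt(C_1,\ldots,C_{2^k})$ is an MDS code. The hypothesis supplies exactly this, so the MDS conclusion for $C$ is immediate. If one prefers an explicit check rather than a citation, note that a self-dual MDS code over $\mathbb{F}_{p^r}$ of length $n$ has dimension $n/2$ and minimum distance $n/2+1;$ then $|C|=(p^r)^{2^{k-1}n},$ so $\log_{|B_k|}(|C|)=n/2$ since $|B_k|=(p^r)^{2^k},$ and $d_H(C)=\min_i d_H(C_i)=n/2+1$ attains the Singleton bound $n-\log_{|B_k|}(|C|)+1=n/2+1.$

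Finally I would combine the two halves: $C$ is simultaneously Euclidean self-dual and MDS, which is the assertion. I do not expect a genuine obstacle, since both halves are corollaries of results already in hand; the only point deserving care is the role of the ``same rank'' hypothesis. It is precisely this condition that makes the CRT product free, via the freeness criterion quoted from \cite{dougherty2} ($C$ is free iff each $C_i$ is free of the same rank), so that the notions MDR and MDS coincide for $C$ and the components' full MDS property transfers rather than merely the weaker MDR property. Observing that a common length $n$ among self-dual components already forces the common dimension $n/2$—so that ``same rank'' is in fact automatic here—is a harmless consistency check rather than a difficulty.
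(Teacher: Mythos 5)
Your proposal is correct and follows essentially the same route as the paper, whose entire proof is the two-citation instruction ``Apply Theorem~\ref{chareuc} and \cite[Theorem 6.3]{dougherty2}'': self-duality from the characterization theorem, the MDS property from the cited CRT result. Your explicit numerical check and the observation that self-duality of equal-length components already forces the common rank $n/2$ are correct additions, but they elaborate rather than alter the paper's argument.
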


\begin{proof}
Apply Theorem~\ref{chareuc} and \cite[Theorem 6.3]{dougherty2}.
\end{proof}

Meanwhile, the following theorem gives us a way to construct MDR codes over $B_k.$

\begin{theorem}
Let $C=\crt(C_1,\ldots,C_{2^k})$ with $C_j$ is an MDS code over $\FF_{p^r},$ for some $j.$ If $\rk(C_i)\leq \rk(C_j)$
for all $i$ and $d_H(C_i)\geq d_H(C_j)$ for all $i,$ then $C$ is an MDR code over $B_k.$
\end{theorem}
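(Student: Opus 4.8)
The plan is to verify directly the Singleton-type equality $d_H(C) = n - \rk(C) + 1$ characterizing MDR codes over a principal ideal ring, by transporting the MDS property of the single component $C_j$ to the whole code $C$ through the CRT-decomposition formulas for rank and minimum distance recalled above. The strategy is entirely bookkeeping: isolate $\rk(C)$ and $d_H(C)$ via those formulas, and then observe that the distinguished index $j$ controls both.

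First I would compute the rank. By the rank formula \eqref{rank}, $\rk(C) = \max\{\rk(C_i) : 1 \leq i \leq 2^k\}$. The hypothesis $\rk(C_i) \leq \rk(C_j)$ for all $i$ forces the maximum to be attained at $i = j$, giving $\rk(C) = \rk(C_j)$. Next I would compute the minimum distance. By the distance formula \eqref{mindis}, $d_H(C) = \min\{d_H(C_i) : 1 \leq i \leq 2^k\}$, and the hypothesis $d_H(C_i) \geq d_H(C_j)$ for all $i$ forces the minimum to be attained at the same index $j$, giving $d_H(C) = d_H(C_j)$. Finally I would invoke that $C_j$ is MDS over the field $\FF_{p^r}$: since rank equals dimension for codes over a field, $C_j$ attains the Singleton bound $d_H(C_j) = n - \rk(C_j) + 1$. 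Chaining the three equalities yields $d_H(C) = d_H(C_j) = n - \rk(C_j) + 1 = n - \rk(C) + 1$, which is exactly the MDR bound, so $C$ is an MDR code.

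The only point that carries any content—and hence the step to watch—is that the two hypotheses are arranged so that one and the same index $j$ simultaneously realizes the maximum in the rank formula and the minimum in the distance formula. It is this coincidence that lets the Singleton equality of the single MDS component $C_j$ propagate to $C$; were the extremal rank and extremal distance realized at different components, the argument would break. Beyond checking this alignment, I do not anticipate any genuine obstacle, as the proof is a direct combination of the CRT formulas with the definition of MDS over $\FF_{p^r}$.
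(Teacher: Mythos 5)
Your proof is correct and follows exactly the paper's intended argument: the paper's own proof is the one-line instruction ``simply use Equations~(\ref{mindis}) and (\ref{rank}),'' and your write-up is precisely that argument carried out in detail, with the correct observation that the hypotheses force both the maximum rank and the minimum distance to be realized at the same index $j$, so the MDS equality $d_H(C_j)=n-\rk(C_j)+1$ transfers to $C$.
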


\begin{proof}
Simply use the Equations~(\ref{mindis}) and (\ref{rank}).
\end{proof}

If we consider the Lee weight with respect to a basis of $\mathbb{F}_{p^r}$ in the previous section,
then we have the following Singleton-type bound for codes over $\mathbb{F}_{p^r}.$

\begin{lem}
If $C$ is a linear code of length $n$ over $\mathbb{F}_{p^r}$ and the minimum Lee weight of $C$ is $d_L(C),$ then
\[
\displaystyle{\left\lfloor\frac{d_L(C)-1}{r(p-1)}\right\rfloor\leq n-\log_{p^{r}}|C|}.
\]
\label{singtr}
\end{lem}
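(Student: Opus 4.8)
The plan is to reduce this Lee-weight Singleton bound to the ordinary Hamming-weight Singleton bound recalled at the start of this section, using the elementary comparison between the two weights. First I would analyze the weight function itself: for $x=\alpha_0b_0+\cdots+\alpha_{r-1}b_{r-1}\in\mathbb{F}_{p^r}$ the coordinates $\alpha_i$ lie in $\{0,1,\dots,p-1\}$, so $\wtg(x)=\sum_{i=0}^{r-1}\alpha_i$ is an integer with $\wtg(x)=0$ exactly when $x=0$, and $1\le\wtg(x)\le r(p-1)$ whenever $x\neq 0$ (the upper bound being attained when all $\alpha_i=p-1$). Summing coordinatewise over $\mathbf{x}=(x_1,\dots,x_n)$ then gives $\wtgr(\mathbf{x})\le r(p-1)\,w_H(\mathbf{x})$, where $w_H(\mathbf{x})$ denotes the Hamming weight, i.e. the number of nonzero coordinates of $\mathbf{x}$.

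Next I would evaluate this comparison at a well-chosen codeword. Pick a nonzero $\mathbf{c}_0\in C$ with $w_H(\mathbf{c}_0)=d_H(C)$; since $d_L(C)$ is a minimum over all nonzero codewords, the previous inequality yields $d_L(C)\le\wtgr(\mathbf{c}_0)\le r(p-1)\,d_H(C)$. Dividing and subtracting, $\frac{d_L(C)-1}{r(p-1)}\le d_H(C)-\frac{1}{r(p-1)}$. Because $d_H(C)$ is a positive integer and $0<\frac{1}{r(p-1)}\le 1$, taking floors on both sides gives $\left\lfloor\frac{d_L(C)-1}{r(p-1)}\right\rfloor\le d_H(C)-1$. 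Finally I would apply the classical Singleton bound with alphabet $A=\mathbb{F}_{p^r}$, namely $d_H(C)\le n-\log_{p^r}|C|+1$, which rearranges to $d_H(C)-1\le n-\log_{p^r}|C|$; chaining the two displayed inequalities then yields the asserted bound.

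I expect none of these steps to be a serious obstacle; the only points requiring a little care are the sharp bound $\wtg(x)\le r(p-1)$ for nonzero $x$ (so that the factor $r(p-1)$ is exactly the right normalization) and the monotonicity argument justifying the passage to the floor. It is worth noting that the choice of $\mathbf{c}_0$ as a Hamming-weight minimizer rather than a Lee-weight minimizer is deliberate: we only need the one-sided estimate $d_L(C)\le\wtgr(\mathbf{c}_0)$, and plugging in the Hamming-minimizer is precisely what couples $d_L(C)$ to $d_H(C)$. As a sanity check, when $p=2$ and $r=1$ the factor $r(p-1)=1$ and $\wtgr$ coincides with $w_H$, so the bound degenerates exactly to the classical Singleton bound, consistent with the general statement.
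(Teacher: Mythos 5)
Your proof is correct, but it takes a genuinely different route from the paper's. The paper disposes of this lemma with a one-line citation: it invokes Shiromoto's Singleton bound for codes over finite rings with respect to a general weight function (\cite[Theorem 1]{shiro}), observing only that for the weight $\wtg$ on $\mathbb{F}_{p^r}$ the maximum symbol weight (the quantity $a_i$ in Shiromoto's notation) is $r(p-1)$; Shiromoto's theorem, whose proof rests on module-theoretic exact-sequence arguments, then yields the inequality directly, and the same machinery reappears later in the paper (Proposition~\ref{exact} and Theorem~\ref{singalg} are built on \cite[Lemma 1]{shiro}). You instead give a self-contained elementary reduction to the classical Hamming-metric Singleton bound: from $1\le\wtg(x)\le r(p-1)$ for $x\neq 0$ you get $\wtgr(\mathbf{x})\le r(p-1)\,w_H(\mathbf{x})$, evaluation at a Hamming-weight minimizer gives $d_L(C)\le r(p-1)\,d_H(C)$, and your floor manipulation is sound (a real number strictly below the integer $d_H(C)$ has floor at most $d_H(C)-1$), after which $d_H(C)-1\le n-\log_{p^r}|C|$ finishes the argument. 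Your approach buys transparency and independence from external machinery, and it makes visible why $r(p-1)$ is exactly the right normalization; the paper's citation buys brevity and generality, since Shiromoto's theorem covers arbitrary weight functions over finite rings rather than just this weight over a field, which is precisely the setting the paper needs again for the rank version of the bound in Theorem~\ref{singalg}.
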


\begin{proof}
Recall that $|\mathbb{F}_{p^r}|=p^{r}$ and the maximum value of $a_i$ in \cite{shiro} is $r(p-1).$
Then, using \cite[Theorem 1]{shiro} we have the desired inequality.
\end{proof}

Then, by using the above Lemma, we have the following Singleton-type bound for codes over $B_k.$

\begin{theorem}
If $C$ is a linear code of length $n$ over $B_k$ and the minimum weight of $C$ is $d_L(C),$ then
\[
\displaystyle{\left\lfloor\frac{d_L(C)-1}{r(p-1)}\right\rfloor\leq 2^kn-\log_{p^{r}}|C|}.
\]
\label{singtr2}
\end{theorem}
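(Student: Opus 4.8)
The plan is to transport the statement to the finite field $\mathbb{F}_{p^r}$ through the Gray map $\overline{\varphi}$ and then invoke Lemma~\ref{singtr}. First I would observe that $\overline{\varphi}:B_k^n\to\mathbb{F}_{p^r}^{n2^k}$ is $\mathbb{F}_{p^r}$-linear: from the definition of $\varphi$, each coordinate of $\varphi(a)$ is an $\mathbb{F}_{p^r}$-linear combination of the coefficients $\alpha_S$ of $a=\sum_{S\in 2^\Omega}\alpha_S v_S$, so $\varphi$ is $\mathbb{F}_{p^r}$-linear and hence so is its coordinatewise extension $\overline{\varphi}$. Since $C$ is a $B_k$-submodule and $\mathbb{F}_{p^r}\subseteq B_k$, the code $C$ is in particular an $\mathbb{F}_{p^r}$-subspace, so its image $\overline{\varphi}(C)$ is an $\mathbb{F}_{p^r}$-subspace of $\mathbb{F}_{p^r}^{n2^k}$, that is, a linear code of length $n2^k$ over $\mathbb{F}_{p^r}$ to which Lemma~\ref{singtr} applies.

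Next I would record the two numerical facts that make the transfer exact. Because $\overline{\varphi}$ is a bijection, $|\overline{\varphi}(C)|=|C|$. Because the weight of a codeword $\mathbf{c}=(a_1,\dots,a_n)\in B_k^n$ is given by $\wtgr(\mathbf{c})=\sum_{i=1}^n\wtgr(\varphi(a_i))=\wtgr(\overline{\varphi}(\mathbf{c}))$, the Gray map preserves the weight of each individual codeword. Combined with bijectivity and the fact that $\overline{\varphi}(\mathbf{0})=\mathbf{0}$, this gives a weight-preserving bijection between the nonzero codewords of $C$ and those of $\overline{\varphi}(C)$, so the minimum Lee weight of $\overline{\varphi}(C)$ equals $d_L(C)$.

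Finally I would apply Lemma~\ref{singtr} to the linear code $\overline{\varphi}(C)$ over $\mathbb{F}_{p^r}$, whose length is $n2^k$, to obtain
\[
\left\lfloor\frac{d_L(C)-1}{r(p-1)}\right\rfloor \leq n2^k-\log_{p^{r}}\bigl|\overline{\varphi}(C)\bigr| = 2^kn-\log_{p^{r}}|C|,
\]
which is precisely the claimed bound. I do not anticipate a genuine obstacle: the argument is essentially a transfer of the field-level Singleton-type bound across a weight-preserving linear bijection. The only points deserving a little care are the $\mathbb{F}_{p^r}$-linearity of $\overline{\varphi}$ (so that the image is a legitimate linear code and Lemma~\ref{singtr} is applicable) and the bookkeeping of the length $n2^k$ together with the identity $\log_{p^{r}}|\overline{\varphi}(C)|=\log_{p^{r}}|C|$; both are immediate from the definition of $\varphi$ and the weight it induces on $B_k$.
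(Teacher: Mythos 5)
Your proposal is correct and takes essentially the same approach as the paper: the paper's own (one-line) proof is precisely to apply Lemma~\ref{singtr} to the Gray image of $C$, which is a code of length $2^kn$ over $\mathbb{F}_{p^r}$. You simply make explicit the supporting details the paper leaves implicit, namely the $\mathbb{F}_{p^r}$-linearity of $\overline{\varphi}$ (so the image is a linear code), the equality $|\overline{\varphi}(C)|=|C|$, and the fact that the Lee weight on $B_k$ is defined exactly so that $\overline{\varphi}$ preserves it.
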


\begin{proof}
This result follows from Lemma~\ref{singtr} and the fact that the Gray image of
a code of length $n$ over $B_k$ is a code of length $2^kn$ over $\mathbb{F}_{p^r.}$
\end{proof}

A code attaining the bound in Theorem~\ref{singtr2} is called \emph{Maximum Lee Distance Separable (MLDS)} code.
For example, let $p=2, r=2,$ and $k=1.$ Take $C=\langle(1,1,\dots,1)\rangle,$ then $|C|=4^{4n}$ and $d_L(C)=n.$
As we can see, $C$ is an MLDS code over $B_1=\mathbb{F}_4+v\mathbb{F}_4,$ where $v^2=v.$

Now, we establish an algebraic version of a Singleton bound.
Let $\rk(C)$ be the rank of $C$ and $\frk(C)$ be the free rank of $C.$ We have the following lemma.

\begin{lem}
If $C$ is a code of length $n$ over $B_k$ then
\[
\rk(C)+\frk(C)=n.
\]
\label{rfrank}
\end{lem}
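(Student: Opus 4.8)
The plan is to pass immediately to the Chinese Remainder Theorem decomposition. By Theorem~\ref{crt} and the map $\Theta$ introduced before Theorem~\ref{chareuc}, write $C=\crt(C_1,\ldots,C_{2^k})$ with each $C_i$ a linear code over $\FF_{p^r}$, and recall from~(\ref{rank}) that $\rk(C)=\max\{\dim_{\FF_{p^r}}C_i:1\le i\le 2^k\}$. The whole identity will then follow once $\frk(C)$ is expressed through the same list of component dimensions and the two formulas are combined.

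First I would compute the free rank directly from the decomposition. Let $e_1,\ldots,e_{2^k}$ be the orthogonal idempotents attached to the maximal ideals $\mathcal{I}_1,\ldots,\mathcal{I}_{2^k}$ of Proposition~\ref{formmaxideal}, so that $\Theta$ identifies $B_k^n$ with the product of the spaces $\FF_{p^r}^n$. A free submodule $D\cong B_k^m$ of $C$ splits as $D=\crt(D_1,\ldots,D_{2^k})$ with $\dim_{\FF_{p^r}}D_i=m$ for every $i$; conversely, choosing $m$-dimensional subspaces $D_i\subseteq C_i$ and assembling a generating family whose $i$-th CRT coordinate runs over a basis of $D_i$ yields a free submodule of rank $m$, the point being that an element nonzero in every component has trivial annihilator and hence generates a free rank-one module. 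Thus a free submodule of rank $m$ embeds in $C$ exactly when $m\le\dim_{\FF_{p^r}}C_i$ for all $i$, so that
\[
\frk(C)=\min\{\dim_{\FF_{p^r}}C_i:1\le i\le 2^k\}.
\]

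It remains to combine $\rk(C)=\max_i\dim_{\FF_{p^r}}C_i$ and $\frk(C)=\min_i\dim_{\FF_{p^r}}C_i$ into the asserted equality $\rk(C)+\frk(C)=n$. I would attempt this by matching the free and non-free parts of $C$ against the $n$ coordinates through a generator matrix in standard form (as in the Example of Section~3): the unit-leading rows carry the free part and the rows scaled by the non-trivial idempotents carry the torsion part, and the pivot blocks are then to be counted column by column against the component dimensions, using the complementary relation $\dim_{\FF_{p^r}}C_i+\dim_{\FF_{p^r}}C_i^{\perp}=n$ that holds in each coordinate field. I expect this last reconciliation to be the main obstacle: the Example shows that a minimal generating set need not sit in standard form, so the coordinate count produced by the (possibly non-minimal) standard form must be aligned carefully with the CRT invariants $\max_i\dim C_i$ and $\min_i\dim C_i$ before the bookkeeping can be made to close to $n$.
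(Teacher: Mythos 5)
Your two CRT computations are sound and are exactly the right tools: every submodule of $B_k^n$ splits under $\Theta$ as the CRT of its components (write $C=\sum_i e_iC$ for the orthogonal idempotents $e_i$), so $\rk(C)=\max_i\dim_{\FF_{p^r}}C_i$ as in Equation~(\ref{rank}), and your embedding argument correctly gives $\frk(C)=\min_i\dim_{\FF_{p^r}}C_i$. The genuine gap is the third step, and it is not a bookkeeping difficulty that more care with standard-form matrices would resolve: the identity you are trying to reach is false as printed, and your own two formulas already exhibit counterexamples. For $C=B_k^n$ they give $\rk(C)+\frk(C)=2n$; for $C=\{0\}$ they give $0$; more generally any free code of rank $m$ gives $2m$, which equals $n$ only when $m=n/2$. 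So no alignment of pivot blocks against the $n$ coordinates can close the count, and the obstacle you flag in your final paragraph is real and insurmountable rather than a matter of careful reconciliation.

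The statement is a typo for $\rk(C)+\frk(C^\perp)=n$: that is the form actually invoked in the proof of Theorem~\ref{singalg} (where $\frk(C^\perp)\geq|M|$ is converted into $|M|\leq n-\rk(C)$ via Lemma~\ref{rfrank}), and it is the content of Lemma 7.4 of Cengellenmis--Dertli--Dougherty, to which the paper's one-line proof defers. With the corrected statement, your computations finish the proof immediately and more cleanly than the route you sketch: the Euclidean inner product decomposes componentwise under $\Theta$, so $C^\perp=\crt(C_1^\perp,\ldots,C_{2^k}^\perp)$, and hence
\[
\frk(C^\perp)=\min_i\dim_{\FF_{p^r}}C_i^\perp
=\min_i\bigl(n-\dim_{\FF_{p^r}}C_i\bigr)
=n-\max_i\dim_{\FF_{p^r}}C_i
=n-\rk(C),
\]
using precisely the field duality $\dim_{\FF_{p^r}}C_i+\dim_{\FF_{p^r}}C_i^\perp=n$ you already invoke. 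This also sidesteps the generator matrix in standard form entirely --- wisely, since the paper's own Example shows that a minimal generating set over $B_k$ need not admit one.
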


\begin{proof}
Similar to the proof of \cite[Lemma 7.4]{dougherty-ceng}.
\end{proof}

For any $\mathbf{x}=(x_1,\dots,x_n)\in B_k^n,$ let $\su(\mathbf{x})=\{i:~x_i\not=0\}.$
Also, let $D$ be a $B_k$-submodul of $B_k^n$ and $M\subseteq N:=\{1,2,\dots,n\}.$ Define
\[D(M)=\{\mathbf{x}\in D:~\su(\mathbf{x})\subseteq M\},\]
\[D^*=\ho_{B_k}(D,B_k).\]

We can see that $D(M)=D\cap B_k^n(M)$ is a $B_k$-submodul of $B_k^n$ and $|B_k^n(M)|=(p^r)^{2^k|M|}.$
Also, there exists an isomorphism
\[D^*\cong D.\]

Moreover, there exists a $B_k$-homomorphism as follows,
\[\begin{array}{llll}
g : & B_k^n & \rightarrow & D^*\\
 & y & \mapsto & (\hat{y} : x\mapsto [x,y]).
\end{array}\]
The map $g$ is surjective. Then, we have the following proposition.

\begin{prop}
If $C$ is a code of length $n$ over $B_k$ and $M\subseteq N,$ then there exists an exact sequence of $B_k$-modules :
\[
0\longrightarrow C^\bot(M)\stackrel{\inc}{\longrightarrow}
B_k^n(M)\stackrel{g}{\longrightarrow}C^*\stackrel{\res}{\longrightarrow}C(N-M)^*\longrightarrow 0,
\]
where $\inc$ and $\res$ are the inclusion map and the restriction map, respectively.
\label{exact}
\end{prop}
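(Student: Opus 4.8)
The plan is to check exactness separately at each of the four nonzero terms, namely at $C^\bot(M)$, at $B_k^n(M)$, at $C^*$, and at $C(N-M)^*$. Throughout I will write $g\colon B_k^n\to C^*$ for the full surjective pairing map introduced just before the statement, so that the map labelled $g$ in the sequence is its restriction $g|_{B_k^n(M)}$; I will also use repeatedly the Euclidean pairing $[\mathbf x,\mathbf y]=\sum_i x_iy_i$ and the cardinality identity $|C|\,|C^\bot|=|B_k|^n$ furnished by the Frobenius property of $B_k$.

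Three of the four exactness conditions are short. Exactness at $C^\bot(M)$ holds because $\inc$ is literally the inclusion of the submodule $C^\bot(M)=C^\bot\cap B_k^n(M)$ into $B_k^n(M)$, hence injective. For exactness at $B_k^n(M)$ I observe that, for $\mathbf y\in B_k^n(M)$, one has $g(\mathbf y)=0$ in $C^*$ exactly when $[\mathbf x,\mathbf y]=0$ for all $\mathbf x\in C$, i.e.\ when $\mathbf y\in C^\bot$; intersecting with $B_k^n(M)$ gives $\ker\!\bigl(g|_{B_k^n(M)}\bigr)=C^\bot\cap B_k^n(M)=C^\bot(M)=\operatorname{im}(\inc)$. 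For surjectivity of $\res$ I apply the surjectivity of the pairing map (the assertion preceding the proposition, taken for the submodule $C(N-M)$ regarded inside $B_k^{N-M}$): the resulting surjection $B_k^n(N-M)\to C(N-M)^*$, $\mathbf y\mapsto(\mathbf x\mapsto[\mathbf x,\mathbf y])$, is exactly the composite $\res\circ\bigl(g|_{B_k^n(N-M)}\bigr)$, so $\res$ must be surjective.

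The substantive point is exactness at $C^*$, i.e.\ $\operatorname{im}\!\bigl(g|_{B_k^n(M)}\bigr)=\ker(\res)$. One inclusion is a support computation: if $\mathbf y$ is supported on $M$ and $\mathbf x\in C(N-M)$ is supported on $N-M$, then $[\mathbf x,\mathbf y]=0$, so $\res\bigl(g(\mathbf y)\bigr)=0$ and $\operatorname{im}\!\bigl(g|_{B_k^n(M)}\bigr)\seq\ker(\res)$. For the reverse inclusion I would take $\phi\in\ker(\res)$, use surjectivity of the full map $g$ to write $\phi=g(\mathbf y)$, and split $\mathbf y=\mathbf y_M+\mathbf y_{N-M}$ by support on $M$ and on $N-M$. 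Since $g(\mathbf y_M)\in\ker(\res)$ already, the hypothesis $\phi\in\ker(\res)$ forces $g(\mathbf y_{N-M})\in\ker(\res)$, which unwinds to $[\mathbf x,\mathbf y_{N-M}]=0$ for all $\mathbf x\in C(N-M)$, that is $\mathbf y_{N-M}\in\bigl(C(N-M)\bigr)^\bot$. If I can show that this orthogonal equals $C^\bot+B_k^n(M)$, then I may write $\mathbf y_{N-M}=\mathbf c+\mathbf z$ with $\mathbf c\in C^\bot$ and $\mathbf z\in B_k^n(M)$; since $g(\mathbf c)=0$, this gives $\phi=g(\mathbf y_M+\mathbf z)$ with $\mathbf y_M+\mathbf z\in B_k^n(M)$, proving $\phi\in\operatorname{im}\!\bigl(g|_{B_k^n(M)}\bigr)$.

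The heart of the argument, and the step I expect to be the main obstacle, is therefore the orthogonality identity
\[
C^\bot+B_k^n(M)=\bigl(C(N-M)\bigr)^\bot .
\]
I plan to establish it by taking orthogonals. Using the formal rule $(A+B)^\bot=A^\bot\cap B^\bot$, the evident equality $\bigl(B_k^n(M)\bigr)^\bot=B_k^n(N-M)$, and the involution $(A^\bot)^\bot=A$, I compute
\[
\bigl(C^\bot+B_k^n(M)\bigr)^\bot=(C^\bot)^\bot\cap B_k^n(N-M)=C\cap B_k^n(N-M)=C(N-M),
\]
and then take orthogonals once more. The involution $(A^\bot)^\bot=A$ is precisely where the Frobenius hypothesis is used: applying $|A|\,|A^\bot|=|B_k|^n$ to $A$ and to $A^\bot$ yields $|(A^\bot)^\bot|=|A|$, and since the inclusion $A\seq(A^\bot)^\bot$ is automatic, equality of orders forces $A=(A^\bot)^\bot$. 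Substituting this identity back closes the reverse inclusion and finishes the verification of exactness at all four terms.
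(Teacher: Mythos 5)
Your proof is correct, but it takes a genuinely different route from the paper: the paper's entire proof is a one-line citation, observing that the sequence is a special case of Lemma 1 of \cite{shiro} (Shiromoto's exact-sequence lemma for codes over finite quasi-Frobenius rings), whereas you verify exactness directly. Your treatment of the two end terms and of $B_k^n(M)$ is routine and correct, and you rightly isolate the real content in the exactness at $C^*$, which you reduce to the annihilator identity $C^\bot+B_k^n(M)=\bigl(C(N-M)\bigr)^\bot$. Your derivation of that identity is sound: the double-annihilator property $(A^\bot)^\bot=A$ does follow from the cardinality identity $|A|\,|A^\bot|=|B_k|^n$ together with the automatic inclusion $A\seq(A^\bot)^\bot$; note only that you apply this identity to auxiliary submodules such as $C^\bot+B_k^n(M)$ and $B_k^n(M)$, which is legitimate because every $B_k$-submodule of $B_k^n$ is itself a linear code, so the Frobenius identity the paper states for codes applies verbatim. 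You also lean twice, correctly, on the surjectivity of the pairing map $g$ onto $D^*$ for a submodule $D$ --- once for $D=C$ and once for $D=C(N-M)$ viewed inside $B_k^{N-M}$ --- a fact the paper asserts without proof just before the proposition and which encodes the self-injectivity of $B_k$. In effect you have reproved the cited lemma in this special case: what your approach buys is a self-contained argument that makes visible exactly which properties of $B_k$ are used (the Frobenius cardinality identity and the surjectivity of $g$, i.e.\ self-injectivity); what the paper's citation buys is brevity, together with the observation that the statement has nothing to do with the particular ring $B_k$ beyond its being a finite Frobenius ring.
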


\begin{proof}
This is a special case of \cite[Lemma 1]{shiro}.
\end{proof}

Since the maximum weight of elements in $\mathbb{F}_{p^r}$ is $r(p-1)$ and every element of $B_k$
is a pre-image of $2^k$ elements of $\mathbb{F}_{p^r},$ we have that for any $\mathbf{x}\in B_k^n,$

\begin{equation}
\wtgr(\mathbf{x})\leq r(p-1)2^k|\su(\mathbf{x})|.
\label{leebound}
\end{equation}

Now, we have the following algebraic version of a Singleton-type bound.

\begin{theorem}
If $C$ is a linear code of length $n$ over $B_k$ with minimum Lee weight $d_L(C),$ then
\begin{equation}
\left\lfloor\frac{d_L(C)-1}{r(p-1)2^k}\right\rfloor\leq n-\rk(C).
\end{equation}
\label{singalg}
\end{theorem}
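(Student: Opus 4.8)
The plan is to set $\delta:=\left\lfloor\frac{d_L(C)-1}{r(p-1)2^k}\right\rfloor$ and prove the equivalent inequality $\rk(C)\le n-\delta$. I would obtain this by exhibiting $C$ (up to the isomorphism $C^*\cong C$) as a homomorphic image of a free $B_k$-module of rank $n-\delta$, using the exact sequence of Proposition~\ref{exact} for a well-chosen coordinate set $M$. The whole argument hinges on first controlling how small the support of a nonzero codeword can be.

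First I would pin down the support bound. For any nonzero $\mathbf{x}\in C$, minimality of $d_L(C)$ together with inequality~(\ref{leebound}) gives $d_L(C)\le\wtgr(\mathbf{x})\le r(p-1)2^k|\su(\mathbf{x})|$, whence $|\su(\mathbf{x})|\ge\frac{d_L(C)}{r(p-1)2^k}>\frac{d_L(C)-1}{r(p-1)2^k}\ge\delta$. Since $|\su(\mathbf{x})|$ and $\delta$ are integers, the strict inequality forces $|\su(\mathbf{x})|\ge\delta+1$ for every nonzero $\mathbf{x}\in C$. In other words, no nonzero codeword can be supported inside a set of size at most $\delta$.

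Next I would choose any $M\subseteq N$ with $|M|=n-\delta$, so that $|N-M|=\delta$; this is possible because (\ref{leebound}) applied with $|\su(\mathbf{x})|=n$ gives $d_L(C)\le r(p-1)2^k n$, hence $\delta\le n$. By the support bound, $C(N-M)=\{\mathbf{x}\in C:\su(\mathbf{x})\subseteq N-M\}=0$, and therefore its dual module $C(N-M)^*$ also vanishes. Feeding this $M$ into the exact sequence of Proposition~\ref{exact}, the rightmost term disappears and the sequence collapses to
\[
0\longrightarrow C^\bot(M)\stackrel{\inc}{\longrightarrow}B_k^n(M)\stackrel{g}{\longrightarrow}C^*\longrightarrow 0,
\]
so that $g$ is surjective. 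Since $B_k^n(M)\cong B_k^{\,|M|}$ is a free $B_k$-module of rank $|M|=n-\delta$, its homomorphic image $C^*$ is generated by $n-\delta$ elements, giving $\rk(C^*)\le n-\delta$. Invoking the isomorphism $C^*\cong C$, which preserves the minimal number of generators, yields $\rk(C)=\rk(C^*)\le n-\delta$, i.e. $\delta\le n-\rk(C)$; note that by Lemma~\ref{rfrank} this can equivalently be read as $\delta\le\frk(C)$.

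The main obstacle is not in any single long computation but in assembling the three ingredients correctly: the delicate point is the floor manipulation in the support bound, where one must keep the strict inequality $\delta<|\su(\mathbf{x})|$ in order to guarantee that a coordinate set of size exactly $\delta$ misses every nonzero codeword's support, and the observation that vanishing of $C(N-M)^*$ turns $g$ into a surjection onto $C^*$ rather than onto $C$ directly, so that the identification $\rk(C^*)=\rk(C)$ through $C^*\cong C$ is essential to close the argument.
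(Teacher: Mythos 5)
Your proof is correct, but it threads through Proposition~\ref{exact} along a genuinely different path than the paper does. The paper replaces $C$ by $C^\bot$ in the exact sequence, takes $|M|=\delta$, and uses inequality~(\ref{leebound}) to kill the \emph{first} term $C(M)$; it then applies the duality functor $\ho_{B_k}(-,B_k)$ to the resulting three-term sequence, invokes projectivity of $B_k^n(M)$ to split the dualized short exact sequence, concludes $\frk(C^\bot)\geq |M|=\delta$, and finishes with the rank/free-rank relation of Lemma~\ref{rfrank}. You instead keep $C$ itself, make the complementary choice $|M|=n-\delta$ so that $|N-M|=\delta$, and use the support bound ($|\su(\mathbf{x})|\geq\delta+1$ for every nonzero codeword) to kill the \emph{last} term $C(N-M)^*$; exactness at $C^*$ then makes $g:B_k^n(M)\rightarrow C^*$ surjective, so $C^*\cong C$ is generated by $|M|=n-\delta$ elements and $\rk(C)\leq n-\delta$ follows at once. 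Your route is more economical: it needs only the trivial implication $C(N-M)=0\Rightarrow C(N-M)^*=0$, the fact that a homomorphic image of a free module of rank $m$ needs at most $m$ generators, and the isomorphism $D^*\cong D$ already recorded in the paper, whereas the paper's route additionally requires exactness of the duality functor (i.e., the quasi-Frobenius property of $B_k$), a splitting argument, and Lemma~\ref{rfrank}. What the paper's longer route buys is the intermediate statement $\frk(C^\bot)\geq\delta$, explicit information about the dual code that is of independent interest; your argument bounds $\rk(C)$ directly but yields no such byproduct. Both proofs share the same floor manipulation and the implicit assumption $C\neq\{\mathbf{0}\}$, which is needed so that $d_L(C)$ is defined and $\delta\leq n$.
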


\begin{proof}
By replacing $C$ in the exact sequence in Lemma~\ref{exact} by $C^\bot,$ we have the following exact sequence,
\begin{equation}
0\longrightarrow C(M)\stackrel{\inc}{\longrightarrow}B_k^n(M)\stackrel{g}
{\longrightarrow}(C^\bot)^*\stackrel{\res}{\longrightarrow}C^\bot(N-M)^*\longrightarrow 0.
\label{exact2}
\end{equation}
Take $M\subseteq N$ such that $|M|=\left\lfloor\frac{d_L(C)-1}{r(p-1)2^k}\right\rfloor.$
Now, by inequality~(\ref{leebound}), for any $\mathbf{x}\in C(M)$ we have that
\[
\wtgr(\mathbf{x})\leq r(p-1)2^k|M|=d_L(C)-1.
\]
The previous equation gives $C(M)^*=0,$ and by isomorphism $C(M)\cong C(M)^*,$ we have $C(M)=\mathbf{0}.$
Now, apply the duality functor $^*=\ho_{B_k}(-,B_k)$ to the exact sequence (\ref{exact2}).
Using the fact that $C(M)^*=0$ and $B_k^n(M)\cong B_k^n(M)^*,$ we have the following short exact sequence,
\[
0\longrightarrow C^\bot(N-M)\longrightarrow C^\bot\longrightarrow B_k^n(M)\longrightarrow 0.
\]
As we know, $B_k^n(M)\cong B_k^{|M|},$ so $B_k^n(M)$ is a projective module.
Therefore, the above short exact sequence is split, which gives
\[
C^\bot\cong C^\bot(N-M)\oplus B_k^n(M).
\]
The previous isomorphism gives
\[
\frk(C^\bot)\geq \frk(B_k^n(M))=|M|=\left\lfloor\frac{d_L(C)-1}{r(p-1)2^k}\right\rfloor.
\]
Then, by Lemma~\ref{rfrank}, we have the desired inequality.
\end{proof}

A code attaining the bound in Theorem~\ref{singalg} is called \emph{Maximum Lee Distance with respect to Rank (MLDR)} code.

\begin{prop}
If $C$ is a free MLDR code with $\left\lfloor\frac{d_L(C)-1}{r(p-1)2^k}\right\rfloor=\frac{d_L(C)-1}{r(p-1)2^k},$
then $C$ is an MLDS code.
\end{prop}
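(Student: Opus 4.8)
The plan is to reduce both the MLDR and the MLDS conditions to a single arithmetic identity in $\rk(C)$ and then compare them directly, using the free hypothesis only to pin down $\log_{p^r}|C|$ and the divisibility hypothesis only to remove the floor from the MLDR equality.

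First I would record what freeness buys us. Writing $C=\crt(C_1,\ldots,C_{2^k})$ with each $C_i$ a linear code over $\FF_{p^r}$, the criterion that $C$ is free if and only if each $C_i$ is free of the same rank tells us that all the $C_i$ share one common dimension $m$. Hence
\[
\log_{p^r}|C|=\log_{p^r}\prod_{i=1}^{2^k}|C_i|=2^k m,
\]
while by (\ref{rank}) we have $\rk(C)=\max_i\dim C_i=m$; combining the two gives, for a free code,
\[
\log_{p^r}|C|=2^k\,\rk(C).
\]

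Next I would use the divisibility hypothesis to strip the floor from the MLDR equality. Since $C$ is MLDR it attains the bound of Theorem~\ref{singalg}, and the standing assumption $\bigl\lfloor\frac{d_L(C)-1}{r(p-1)2^k}\bigr\rfloor=\frac{d_L(C)-1}{r(p-1)2^k}$ lets us rewrite that equality without the floor, namely
\[
\frac{d_L(C)-1}{r(p-1)2^k}=n-\rk(C),
\qquad\text{equivalently}\qquad
d_L(C)-1=r(p-1)2^k\bigl(n-\rk(C)\bigr).
\]
Finally I would substitute this into the MLDS bound of Theorem~\ref{singtr2}. Dividing the displayed identity by $r(p-1)$ yields
\[
\frac{d_L(C)-1}{r(p-1)}=2^k\bigl(n-\rk(C)\bigr),
\]
which is already an integer, so its floor equals itself; on the other hand the first step gives $2^k n-\log_{p^r}|C|=2^k n-2^k\rk(C)=2^k\bigl(n-\rk(C)\bigr)$. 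The two sides coincide, so $C$ meets the bound of Theorem~\ref{singtr2} and is therefore MLDS.

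There is no genuine analytic obstacle here; the content is bookkeeping, and I expect the only subtle point to be making explicit where each hypothesis is used. The free hypothesis is needed precisely for the identity $\log_{p^r}|C|=2^k\rk(C)$: in general one only has $\log_{p^r}|C|=\sum_i\dim C_i$ and $\rk(C)=\max_i\dim C_i$, which need not be proportional, so the MLDS right-hand side would not collapse to $2^k(n-\rk(C))$. The divisibility hypothesis plays the complementary role of guaranteeing that $d_L(C)-1$ is an honest multiple of $r(p-1)2^k$, which is exactly what forces the MLDS left-hand side to be an integer equal to that right-hand side; note that Lemma~\ref{rfrank} is not needed for this argument.
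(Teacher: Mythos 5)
Your proof is correct and follows essentially the same route as the paper: use the divisibility hypothesis to turn the MLDR equality into $d_L(C)-1=r(p-1)2^k\bigl(n-\rk(C)\bigr)$, use freeness to obtain $\log_{p^r}|C|=2^k\rk(C)$, and substitute into the bound of Theorem~\ref{singtr2}. The only cosmetic difference is that you derive $\log_{p^r}|C|=2^k\rk(C)$ via the CRT decomposition and the common dimension of the components $C_i$, whereas the paper gets it in one step from $|C|=|B_k|^{\rk(C)}=(p^r)^{2^k\rk(C)}$.
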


\begin{proof}
By the assumption, we have that
\[d_L(C)=r(p-1)2^kn-r(p-1)2^k\rk(C)+1.\]
If $C$ is a free code, then $|C|=|B_k|^{\rk(C)}=(p^r)^{2^k\rk(C)}.$
Therefore, we have $\log_{p^r}|C|=2^k\rk(C),$ which makes $C$ satisfies the bound in Theorem~\ref{singtr2}.
\end{proof}

%

\section{Cyclic and Quasi-Cyclic Codes over $B_k$}

In this section we will characterize quasi-cyclic and cyclic codes over $B_k$
in terms of their images under the Gray map $\overline{\varphi}.$ We refer to \cite{irw16} to see how the Gray map works.

Let us recall first the definition of cyclic and quasi-cyclic codes.
A linear code $C \seq B_k^n$ is called \emph{cyclic [quasi-cyclic of index $l$]} if it satisfies the following
property (1)[(2)] below:

\begin{itemize}
\item[(1)] $c=(c_0,c_1,\ldots,c_{n-1})\in C$ $\Rightarrow$ $T(c)=(c_{n-1},c_{0},c_{1},\ldots,c_{n-2})\in C.$

\item[(2)] $c=(c_0,c_1,\ldots,c_{n-1})\in C$ $\Rightarrow$
 $T^l(c)=(c_{n-l\pmod n},c_{n-l+1\pmod n},$\\ $c_{n-l+2\pmod n},\ldots,c_{n-1-l\pmod n})\in C.$
\end{itemize}

The following theorem characterizes quasi-cyclic codes over $B_k.$

\begin{theorem}
A code $C$ with length $n$ is quasi-cyclic of index $l$ over $B_k$ if and only if
$C=\overline{\varphi}^{-1}(C_1,C_2,\dots,C_{2^k})$ and each code $C_i$ with length $n$ is quasi-cyclic of length $l$
over $\mathbb{F}_{p^r},$ for $1\leq i\leq 2^k.$
\label{charquasi}
\end{theorem}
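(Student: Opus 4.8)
The plan is to observe that the decomposition $C=\overline{\varphi}^{-1}(C_1,\dots,C_{2^k})$ is available for \emph{any} linear code and then to reduce the quasi-cyclic condition to a single commutation identity between the shift $T^l$ and the Gray map. Since $\varphi:B_k\to\FF_{p^r}^{2^k}$ is the Chinese Remainder isomorphism of Theorem~\ref{crt}, write $\varphi=(\varphi_1,\dots,\varphi_{2^k})$ for its scalar components and, for each $i$, let $\pi_i:B_k^n\to\FF_{p^r}^n$ be the induced projection sending $\mathbf{a}=(a_0,\dots,a_{n-1})$ to $(\varphi_i(a_0),\dots,\varphi_i(a_{n-1}))$. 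Because $C$ is a submodule and $B_k$ splits as a product of the fields $B_k/\mathcal{I}_i\cong\FF_{p^r}$, the idempotent decomposition gives $C=\bigoplus_i\pi_i(C)$; equivalently, setting $C_i=\pi_i(C)$ one has $\mathbf{a}\in C$ if and only if $\pi_i(\mathbf{a})\in C_i$ for every $i\in[1,2^k]$. This is precisely the meaning of $C=\overline{\varphi}^{-1}(C_1,\dots,C_{2^k})$ once the $n2^k$ field coordinates, which $\overline{\varphi}$ lists grouped by position, are regrouped by component index. Thus the substance of the theorem is the assertion that $C$ is closed under $T^l$ exactly when each $C_i$ is.

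The one fact carrying the argument is that every projection intertwines the two shifts:
\[
\pi_i\bigl(T^l(\mathbf{a})\bigr)=T^l\bigl(\pi_i(\mathbf{a})\bigr)\qquad\text{for all }\mathbf{a}\in B_k^n,\ 1\leq i\leq 2^k.
\]
This holds because $T^l$ only permutes the $n$ positions of $\mathbf{a}$ --- by property~(2) the entry landing in position $j$ is $a_{j-l\bmod n}$ --- while $\varphi_i$ reads off a fixed scalar internal to each position, so reading the $i$-th component and then cyclically shifting the resulting $n$-tuple produces the same vector as shifting first and reading afterwards. I expect the only delicate point of the whole proof to be this bookkeeping: $\overline{\varphi}$ orders the scalars by position whereas the codes $C_i$ collect them by component, and one must check the shift indices of property~(2) survive the regrouping; but since the two operations act on disjoint pieces of the data (positions versus components) the identity is forced.

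With the commutation in hand both implications are immediate. Assume first that $C$ is quasi-cyclic of index $l$, fix $i$, and take $c\in C_i$, say $c=\pi_i(\mathbf{a})$ with $\mathbf{a}\in C$; then $T^l(c)=\pi_i(T^l(\mathbf{a}))$ and $T^l(\mathbf{a})\in C$, so $T^l(c)\in\pi_i(C)=C_i$, proving each $C_i$ quasi-cyclic of index $l$ over $\FF_{p^r}$. Conversely, assume every $C_i$ is quasi-cyclic and take $\mathbf{a}\in C$; then $\pi_i(T^l(\mathbf{a}))=T^l(\pi_i(\mathbf{a}))\in C_i$ for each $i$, and the membership criterion $C=\overline{\varphi}^{-1}(C_1,\dots,C_{2^k})$ forces $T^l(\mathbf{a})\in C$, so $C$ is quasi-cyclic of index $l$. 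Finiteness of the codes lets one upgrade the inclusions $T^l(C)\seq C$ to equalities throughout, which completes the equivalence.
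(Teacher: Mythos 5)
Your proof is correct and takes essentially the same route as the paper: decompose $C$ through the Gray map into component codes $C_i$ over $\mathbb{F}_{p^r}$ and observe that the shift $T^l$, which permutes positions, commutes with the componentwise projections, which act inside each position --- exactly the fact the paper establishes by writing out $\overline{\varphi}(T^l(\mathbf{c}))$ explicitly in the coordinates $\alpha_S^{(i)}$. Your abstract packaging of this as an intertwining identity $\pi_i\circ T^l=T^l\circ\pi_i$, together with spelling out the converse (which the paper dismisses with ``simply reverse the previous process''), is just a cleaner presentation of the same argument.
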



\begin{proof}
($\Longrightarrow$)
For any code $C$ of length $n$ over $B_k,$ there exist linear codes $C_1,C_2,\ldots,C_{2^k}$ of length $n$ over $\mathbb{F}_{p^r}$
such that $C=\overline{\varphi}^{-1}(C_1,C_2,\ldots,C_{2^k}).$ Let $\mathbf{a}_i$ be any codeword in $C_i,$
where $1\leq i\leq 2^k.$ Let $\mathbf{c}=(c_0,c_1,\dots,c_{n-1})$ be a codeword in $C$
such that $\mathbf{c}=\overline{\varphi}^{-1}(\mathbf{a}_1,\mathbf{a}_2,\ldots,\mathbf{a}_{2^k}).$  We have that
\[
T^l(\mathbf{c})=\left( c_{n-l\;\text{mod}\;n},c_{n-l+1\;\text{mod}\;n},\dots,c_{n-l-1\;\text{mod}\;n} \right)
\]
is also in $C.$ Let $c_i=\sum_{S\subseteq 2^\Omega}\alpha_S^{(i)}v_S,$ for some $\alpha_S^{(i)}\in \mathbb{F}_{p^r}.$ Consider
\[
\begin{aligned}
\overline{\varphi}(\mathbf{c}) & =
\left(\sum_{S\subseteq S_1}\alpha_S^{(0)}v_S,\ldots,\sum_{S\subseteq S_{2^k}}\alpha_S^{(0)}v_S,
\sum_{S\subseteq S_1}\alpha_S^{(1)}v_S,\ldots,\sum_{S\subseteq S_{2^k}}\alpha_S^{(1)}v_S,\right.\\
& \left.\quad \ldots, \sum_{S\subseteq S_1}\alpha_S^{(n-1)}v_S,\ldots,\sum_{S\subseteq S_{2^k}}\alpha_S^{(n-1)}v_S\right).
\end{aligned}
\]
Notice that,
\[
\mathbf{a}_j=
\left(\sum_{S\subseteq S_j}\alpha_S^{(0)}v_S,\sum_{S\subseteq S_1}\alpha_S^{(1)}v_S,\ldots,
\sum_{S\subseteq S_j}\alpha_S^{(n-1)}v_S\right),
\]
for $1\leq j \leq 2^k.$ Now, consider
\[
\begin{aligned}
\overline{\varphi}(T^l(\mathbf{c})) & =
\left(\sum_{S\subseteq S_1}\alpha_S^{(n-l\;\text{mod}\;n)}v_S,\ldots,\sum_{S\subseteq S_{2^k}}\alpha_S^{(n-l\;\text{mod}\;n)}v_S,
\ldots\right.\\
& \left.\quad \ldots,\sum_{S\subseteq S_1}\alpha_S^{(n-l-1\;\text{mod}\;n)}v_S,\ldots,
\sum_{S\subseteq S_{2^k}}\alpha_S^{(n-l-1\;\text{mod}\;n)}v_S\right),
\end{aligned}
\]
which gives
\[
T^l(\mathbf{a}_j)=\left(\sum_{S\subseteq S_j}\alpha_S^{(n-l\;\text{mod}\;n)}v_S,
\sum_{S\subseteq S_1}\alpha_S^{(n-l+1\;\text{mod}\;n)}v_S,\dots,\sum_{S\subseteq S_j}\alpha_S^{(n-l-1\;\text{mod}\;n)}v_S\right)
\]
is in $C_j$ for all $1\leq j\leq 2^k.$ Therefore, $C_j$ is a quasi-cyclic code of index $l,$
for all $1\leq j\leq 2^k.$\\[0.25cm]($\Longleftarrow$) Simply reverse the previous process.
\end{proof}

Since cyclic codes are just the quasi-cyclic codes of index $l=1,$ we have the following consequence.

\begin{theorem}
A code $C$ is a cyclic code of length $n$ over $B_k$ if and only if $C=\overline{\varphi}^{-1}(C_1,C_2,\dots,C_{2^k})$
and $C_i$ is a cyclic code of length $n$ over $\mathbb{F}_{p^r},$ for all $1\leq i\leq 2^k.$
\label{charcyclic}
\end{theorem}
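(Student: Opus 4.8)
The plan is to obtain this statement as the immediate $l=1$ specialization of Theorem~\ref{charquasi}. The key observation is that the single cyclic shift $T$ appearing in property~(1) coincides with $T^l$ for $l=1$ in property~(2); hence a code $C\seq B_k^n$ is cyclic precisely when it is quasi-cyclic of index $1$. The same identification holds over the field: a linear code over $\mathbb{F}_{p^r}$ is cyclic exactly when it is quasi-cyclic of index $1$. Consequently the whole biconditional is nothing but the $l=1$ instance of Theorem~\ref{charquasi}, and no new computation with $\overline{\varphi}$ is needed.

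Concretely, I would run both implications through Theorem~\ref{charquasi}. For the forward direction, suppose $C$ is cyclic; then $C$ is quasi-cyclic of index $1$, so Theorem~\ref{charquasi} lets me write $C=\overline{\varphi}^{-1}(C_1,\ldots,C_{2^k})$ with each $C_i$ quasi-cyclic of index $1$ over $\mathbb{F}_{p^r}$, that is, cyclic. For the converse, if $C=\overline{\varphi}^{-1}(C_1,\ldots,C_{2^k})$ with each $C_i$ cyclic, then each $C_i$ is quasi-cyclic of index $1$, and the converse direction of Theorem~\ref{charquasi} yields that $C$ is quasi-cyclic of index $1$, hence cyclic. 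This closes both directions.

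Since Theorem~\ref{charquasi} already carries all of the substantive content, I anticipate no genuine obstacle here; the only point to check is the purely definitional bookkeeping ``cyclic $=$ quasi-cyclic of index $1$'' simultaneously over $B_k$ and over $\mathbb{F}_{p^r}$, which is transparent. Should one instead want a self-contained proof that bypasses Theorem~\ref{charquasi}, the crux would be to verify that $\overline{\varphi}$ intertwines the shift $T$ on $B_k^n$ with the coordinatewise shift on each $\mathbb{F}_{p^r}$-component $C_i$; this componentwise compatibility of $\overline{\varphi}$ with $T$ is exactly the one nontrivial step exploited in the proof of Theorem~\ref{charquasi}, and it is where the only real work would lie.
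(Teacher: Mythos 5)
Your proposal is correct and matches the paper's treatment exactly: the paper derives Theorem~\ref{charcyclic} as an immediate consequence of Theorem~\ref{charquasi}, noting only that under the stated definitions a cyclic code is precisely a quasi-cyclic code of index $l=1$ (both over $B_k$ and over $\mathbb{F}_{p^r}$). Your extra remark about where the real work would lie in a self-contained argument is accurate but unnecessary, since the paper likewise delegates all substance to the quasi-cyclic case.
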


In terms of polynomial generators, we have the following properties.

\begin{cor}
Let $C=\overline{\varphi}^{-1}(C_1,C_2,\ldots,C_{2^k})$ be a quasi-cyclic code over $B_k,$
where $C_1,C_2,\ldots,C_{2^k}$ are quasi-cyclic codes over $\mathbb{F}_{p^r}.$
If $C_i=\langle g_{1_i}(x),\ldots,g_{m_i}(x)\rangle,$ for all $i=1,\ldots,2^k,$ then
\[
\begin{aligned}
C & =\left \langle v_{S_1}g_{1_1}(x),\ldots,v_{S_{2^k}}g_{1_1}(x),\ldots,v_{S_1}g_{m_1}(x),\ldots,v_{S_{2^k}}g_{m_1}(x),\right.\\
  & \quad \left. \ldots, v_{S_1}g_{m_s}(x),\dots,v_{S_{2^k}}g_{m_s}(x) \right \rangle.
\end{aligned}
\]
\label{gen1}
\end{cor}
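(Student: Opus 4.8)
The plan is to exploit a fact already recorded in Section~3 but not yet used to full strength: the Gray map $\varphi$ is not merely a bijection but a \emph{ring} isomorphism, since $\varphi(\alpha\gamma)=\varphi(\alpha)\varphi(\gamma)$ coordinatewise, and it carries constants diagonally. Passing to polynomials and applying it coefficientwise therefore yields a ring isomorphism
\[
\widetilde{\varphi}:B_k[x]/(x^{n}-1)\;\longrightarrow\;\bigoplus_{i=1}^{2^k}\FF_{p^r}[x]/(x^{n}-1)
\]
(with the index-$l$ module structure in the quasi-cyclic case) under which, by Theorem~\ref{charquasi}, the code $C$ corresponds to the external direct sum $C_1\oplus\cdots\oplus C_{2^k}$ with $C_i=\langle g_{1_i}(x),\ldots,g_{m_i}(x)\rangle$. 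The entire statement then reduces to the purely component-wise task of writing generators for a product of ideals and pulling them back through $\widetilde{\varphi}$; the reverse implication of Theorem~\ref{charquasi} guarantees the pullback is again quasi-cyclic, so no separate stability check is needed.

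The key steps, in order, are these. First I would exhibit the primitive idempotents of $B_k$: by Proposition~\ref{formmaxideal} the $2^k$ maximal ideals are indexed by the choices $w_j\in\{v_j,1-v_j\}$, and the associated orthogonal idempotents are $\epsilon_{S_i}=\prod_{j\in S_i}v_j\prod_{j\notin S_i}(1-v_j)$, one for each $S_i\in 2^\Omega$. A direct computation from the defining formula for $\varphi$ gives $\varphi(\epsilon_{S_i})=e_i$, the $i$-th standard basis vector, so $\epsilon_{S_i}$ is exactly the CRT-lift isolating the $i$-th coordinate. Second, using $\sum_i\epsilon_{S_i}=1$ and orthogonality, I would verify both inclusions for the candidate
\[
C=\big\langle \epsilon_{S_i}\,g_{l_i}(x):1\le i\le 2^k,\;1\le l\le m_i\big\rangle.
\]
Any codeword splits as $\sum_i\epsilon_{S_i}c$, whose $i$-th Gray block lies in $C_i=\langle g_{l_i}\rangle$, giving $\subseteq$; conversely each $\epsilon_{S_i}g_{l_i}$ has $i$-th Gray component $g_{l_i}\in C_i$ and all other components zero, giving $\supseteq$. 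Third, I would translate these idempotent generators into the $v_S$-basis displayed in the statement through the inclusion--exclusion identity $\epsilon_{S_i}=\sum_{S_i\subseteq S\subseteq\Omega}(-1)^{|S\setminus S_i|}v_{S}$, the same telescoping used in the proof of Proposition~\ref{idealdual}.

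The main obstacle is precisely this last translation. The monomials $v_{S_i}$ are idempotent but, apart from $v_\Omega$, they are \emph{not} the primitive idempotents: the formula for $\varphi$ yields $\varphi(v_{S_i})=\sum_{i':\,S_i\subseteq S_{i'}}e_{i'}$, so multiplying a field generator $g_{l_i}$ by $v_{S_i}$ spreads it across every CRT-block indexed by a superset of $S_i$ rather than isolating the $i$-th one. The delicate point is therefore to show that the listed products carve out exactly $C_i$ in the $i$-th coordinate while contributing nothing spurious elsewhere; I expect this to succeed only after the products $v_{S_i}g_{l_i}$ are recombined, via the inclusion--exclusion relation above, into the honest idempotent generators $\epsilon_{S_i}g_{l_i}$, at which stage the component-wise argument of the second step closes the proof.
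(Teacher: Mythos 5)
Your idempotent route is sound exactly up to the point you yourself flag, and it is in fact cleaner than the paper's own argument. The identity $C=\langle \epsilon_{S_i}g_{l_i}(x):1\le i\le 2^k,\ 1\le l\le m_i\rangle$, with each generator of $C_i$ multiplied by its \emph{matched} primitive idempotent $\epsilon_{S_i}=\prod_{j\in S_i}v_j\prod_{j\notin S_i}(1-v_j)$, is correct, and your two-inclusion verification works because $\varphi(\epsilon_{S_i})=e_i$ genuinely isolates the $i$-th CRT block. The expansion $\epsilon_{S_i}=\sum_{S_i\subseteq S\subseteq\Omega}(-1)^{|S\setminus S_i|}v_S$ then yields the inclusion $C\subseteq\langle v_{S_a}g_{l_i}(x)\rangle$. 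This is, in substance, all that the paper's own proof establishes as well: it inverts the Gray map coordinate by coordinate (the same inclusion--exclusion, written there as the corrections $c_i(x)-\sum_{S_j\subseteq S_i}\cdots$), expresses every codeword as an $\FF_{p^r}[x]$-combination of the listed products, and stops.

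The step you call the ``main obstacle,'' however, is a genuine gap, and it cannot be closed, because the reverse inclusion is false. Recombination only goes one way: from the listed products you can assemble the elements $\epsilon_{S_i}g_{l_i}(x)$, which proves $\langle\epsilon_{S_i}g_{l_i}(x)\rangle\subseteq\langle v_{S_a}g_{l_i}(x)\rangle$, i.e.\ once more $C\subseteq\langle\cdots\rangle$; equality would require every listed generator $v_{S_a}g_{l_i}(x)$ to lie in $C$, and in general it does not. Indeed $v_{S_1}=v_{\emptyset}=1$, so the list contains $g_{1_1}(x)$ itself, whose Gray image equals $g_{1_1}(x)$ in \emph{every} coordinate; thus $g_{1_1}(x)\in C$ would force $g_{1_1}(x)\in C_j$ for all $j$. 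For a concrete counterexample take $k=1$, $C_1=\langle 1\rangle=\FF_{p^r}[x]/(x^n-1)$ and $C_2=\langle x-1\rangle$: the right-hand ideal contains $1$ and hence is all of $B_1[x]/(x^n-1)$, of cardinality $p^{2rn}$, while $|C|=|C_1|\,|C_2|=p^{r(2n-1)}$. So the printed equality holds only in degenerate situations (essentially when all the $C_i$ coincide); your matched-idempotent identity is the correct form of the statement, your observation that the monomials $v_S$ spread a generator across all CRT blocks indexed by supersets of $S$ is precisely the reason, and the same defect is present, unacknowledged, in the paper's own proof, which never checks that the listed generators belong to $C$.
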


\begin{proof}
For any $c(x)\in C,$ there exist $c_i(x)\in C_i,$ where $1\leq i\leq 2^k,$
such that $c(x)=\overline{\varphi}^{-1}(c_1(x),\dots,c_{2^k}(x)).$ Now, let
\[
c_j(x)=\sum_{k=1}^{m_j}\alpha_{kj}(x)g_{jk}
\]
for all $j=1,\dots,2^k,$ then we have
\[
\begin{aligned}
 c(x) & = v_{S_1}\left(\sum_{k=1}^{m_1}\alpha_{k1}(x)g_{1k}\right)\\
      &   \quad +\dots+v_{S_i}\left(\sum_{k=1}^{m_i}\alpha_{ki}(x)g_{ik} -\sum_{S_j\subseteq S_i}
      \left(\sum_{k=1}^{m_j}\alpha_{kj}(x)g_{jk}\right)\right) \\
      &  \quad +\dots+v_{S_{2^k}}\left(\sum_{k=1}^{m_{2^k}}
      \alpha_{k2^k}(x)g_{2^k k}-\sum_{j=1}^s\left(\sum_{k=1}^{m_j}\alpha_{kj}(x)g_{jk}
	  \right)\right)
\end{aligned}
\]
as we hope.
\end{proof}

\begin{cor}
Let $C=\overline{\varphi}^{-1}(C_1,\dots,C_{2^k})$ be a cyclic code over $B_k,$ where $C_1,\dots,C_{2^k}$ are
cyclic codes over $\mathbb{F}_{p^r}.$
If $C_i=\langle g_{i}(x)\rangle,$ for all $i=1,\dots,2^k,$ then
\[
C=\left \langle v_{S_1}g_{1}(x),\dots,v_{S_{2^k}}g_{1}(x),\dots,v_{S_1}g_{2^k}(x),\dots,v_{S_{2^k}}g_{2^k}(x) \right \rangle.
\]
\label{gen2}
\end{cor}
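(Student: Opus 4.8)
The plan is to obtain this corollary as an immediate specialization of Corollary~\ref{gen1}. The guiding observation, already invoked in passing from Theorem~\ref{charquasi} to Theorem~\ref{charcyclic}, is that a cyclic code is exactly a quasi-cyclic code of index $l=1$. Thus the hypothesis that $C=\overline{\varphi}^{-1}(C_1,\dots,C_{2^k})$ is cyclic places us squarely in the situation of Corollary~\ref{gen1}, with each constituent $C_i$ a cyclic (hence index-$1$ quasi-cyclic) code over $\mathbb{F}_{p^r}$.

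First I would recall that each $C_i$, being a cyclic code over $\mathbb{F}_{p^r}$, is an ideal of $\mathbb{F}_{p^r}[x]/\langle x^n-1\rangle$. Since this quotient is a principal ideal ring, each $C_i$ is generated by a single polynomial, namely its generator polynomial $g_i(x)$, so that $C_i=\langle g_i(x)\rangle$. In the notation of Corollary~\ref{gen1} this is precisely the case $m_i=1$ for every $i$, with $g_{1_i}(x)=g_i(x)$.

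Next I would apply Corollary~\ref{gen1} under these choices. Setting $m_i=1$ collapses each inner block $v_{S_1}g_{1_i}(x),\dots,v_{S_{2^k}}g_{1_i}(x)$ to the single block $v_{S_1}g_i(x),\dots,v_{S_{2^k}}g_i(x)$; concatenating these blocks over $i=1,\dots,2^k$ reproduces exactly the generating set
\[
\left\langle v_{S_1}g_1(x),\dots,v_{S_{2^k}}g_1(x),\dots,v_{S_1}g_{2^k}(x),\dots,v_{S_{2^k}}g_{2^k}(x)\right\rangle
\]
of the statement. The membership computation justifying that these elements generate $C$---writing an arbitrary codeword through the $v_{S_j}$-decomposition supplied by $\overline{\varphi}^{-1}$---is already carried out in full in the proof of Corollary~\ref{gen1}, and needs only to be read with $m_i=1$.

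The only delicate point, and the nearest thing to an obstacle, is the index bookkeeping when specializing the double-indexed generating set of Corollary~\ref{gen1}: one must confirm that the ordering of the products $v_{S_j}g_i(x)$ agrees with the list displayed above. This is a purely notational check and introduces no new idea.
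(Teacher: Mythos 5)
Your proposal is correct and matches the paper's own proof, which simply states that the corollary follows from Corollary~\ref{gen1}; you have just spelled out the specialization (each cyclic $C_i$ is principal, so $m_i=1$, and cyclic means quasi-cyclic of index $1$) that the paper leaves implicit. No gap here.
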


\begin{proof}
This follows from Corollary \ref{gen1}.
\end{proof}

\begin{ex}
Let $\mathbb{F}_4=\mathbb{F}_2[\alpha],$ where $\alpha^2=\alpha+1.$ Also let $B_1=\mathbb{F}_4+v\mathbb{F}_4,$ where $v^2=v,$ and
\[C=\langle (v+\alpha,0,v+\alpha,0)\rangle.\]
We can see that $C$ is a quasi-cyclic code of index $2$ over $B_1.$ Also, we can check that
\[\varphi((v+\alpha,0,v+\alpha,0))=\left(\begin{array}{cccc}
\alpha & 0 & \alpha & 0 \\
\alpha+1 & 0 & \alpha+1 & 0
\end{array}\right).\]
Therefore, if we take $C_1=\langle(\alpha,0,\alpha,0)\rangle$ and $C_2=\langle(\alpha+1,0,\alpha+1,0)\rangle,$
then $C=\varphi^{-1}(C_1,C_2).$ Moreover, $C_1$ and $C_2$ are quasi-cyclic codes of index $2$ over $\mathbb{F}_4.$
\end{ex}

\begin{ex}
Let $\mathbb{F}_4=\mathbb{F}_2[\alpha],$ where $\alpha^2=\alpha+1.$
Also let $B_1=\mathbb{F}_4+v\mathbb{F}_4,$ where $v^2=v,$ and
\[C'=\langle (\alpha v+1,v),(v,\alpha v+1)\rangle.\]
As we can see, $C'$ is a cyclic code over $B_1.$ Also, we have that
\[\varphi((\alpha v+1,v))=\left(\begin{array}{cc}
1 & 0 \\
\alpha+1 & 1
\end{array}\right)\]
and
\[\varphi((v,\alpha v+1))=\left(\begin{array}{cc}
0 & 1 \\
1 & \alpha+1
\end{array}\right).\]

Therefore, if $C_1'=\langle(1,0),(0,1)\rangle$ and
$C_2'=\langle(\alpha+1,1),(1,\alpha+1)\rangle,$ then $C'=\varphi^{-1}(C_1',C_2').$ We can see that,
$C_1'$ and $C_2'$ are cyclic codes over $\mathbb{F}_4.$
\end{ex}

\section{Conclusion}
We have studied linear codes over the ring $B_k$ defined by
$\FF_{p^r}[v_1,v_2,\ldots,v_k]/\langle v_i^2=v_i,~v_iv_j=v_jv_i \rangle_{i,j=1}^k,$
and obtained some results including MacWilliams relations, Singleton-type bounds, and also necessary and sufficient condition
of cyclic and quasi-cyclic codes over the ring.  The questions about concrete examples of extremal codes
(MDR codes, MDS codes, MLDR codes, MLDS codes), defined
to be linear codes attained Singleton-type bounds, are of some important from theoretical as well as practical viewpoints.
Moreover, recently we obtained several results regarding linear codes over the ring $\mathcal{R}_k$ defined
from $B_k$  by changing $\FF_{p^r}$ to arbitrary finite Frobenius ring $R.$
Further detail results which are now in preparation
will be published elsewhere in a separate paper \cite{irw18}.


\section*{Acknowledgement}
This research is supported in part by \emph{Riset ITB 2017.}  A part of this work was done while the second author
visited Research Center for Pure and Applied Mathematics (RCPAM), Graduate School of Information Sciences,
Tohoku University, Japan on July 2017 - August 2017
under the financial support from \emph{Penelitian Unggulan Perguruan Tinggi (PUPT) Kemenristekdikti 2017}.
The second author thanks Prof. Hajime Tanaka for kind hospitality.

\end{document}